    \newtheorem{proposition}{Proposition}
    \newtheorem{lemma}{Lemma}
    \newtheorem{theorem}{Theorem}
    \newtheorem{rem}{Theorem}
    \newtheorem{lem}{Lemma}
    \newtheorem{definition}{Definition}
    \newtheorem{claim}{Claim}
\begin{document}
    
    \title{Round-Preserving Asymptotic Compression of Prior-Free Interactive Protocols} 
    
    
    \author{Gurleen Padda}
    \author{Dave Touchette}
    \affil{Department of Computer Science \& Institut Quantique, Université de Sherbrooke, Sherbrooke, QC, Canada\\Email: \{gurleen.padda, dave.touchette\}@usherbrooke.ca}
    \date{}
    \maketitle

          There is a close relationship between the communication complexity and information complexity of communication problems, as demonstrated by results such as Shannon’s noiseless source coding theorem, and the Slepian–Wolf theorem. Here, we study this relationship in the prior-free and interactive setting, where we provide an alternate proof for the result of Braverman [SIAM Review, vol. 59, no. 4, 2017], that the amortized communication complexity of simulating a prior-free interactive communication protocol, is equal to its prior-free information cost. While this is a known result, our approach addresses the need for a more natural proof of it. We also improve on the result by achieving round preservation, and using a bounded quantity of shared randomness. We do this by showing that the communicating parties can produce a reliable estimate of the joint type, or empirical distribution, of their inputs. This estimate is then used in our protocol for the prior-free reverse Shannon theorem with side information at the receiver. These results are then generalized to the interactive setting to obtain our main result.
    
    \section{Introduction} \label{sec:introduction}
        \subsection{Motivation and Prior Work}
            \subsubsection{One-way communication}
                \paragraph{Source coding}
                    The relationship between communication and uncertainty was first formalized by Shannon's source coding theorem \cite{shannon}. It captures the notion that the more uncertain a receiver is about the message they could be receiving, the more communication the sender will need to use. The Shannon entropy $H$ quantifies this uncertainty: for a random variable $X\sim\mu_{X}$, and it's alphabet of possible outcomes $\mathcal{X}$, the Shannon entropy is defined as
                    \begin{equation}
                        H(X)_{\mu} = -\sum_{x \in \mathcal{X}} \mu_{X}(x)\log(\mu_{X}(x)).
                    \end{equation}
                    The lower the entropy $H(X)_{\mu}$, the further the distribution $\mu_{X}$ is from that of a uniform distribution, and vice-versa. Shannon's source coding theorem establishes that in order to communicate a sequence of $n$ symbols, denoted $x^{n}$, where each symbol of $x^{n}=x_{1}, x_{2},...,x_{n}$ is an identically and independently distributed (i.i.d.) instance of the random variable $X$, the optimal bound on the number of bits that a sender could expect to communicate to the receiver is $nH(X)_{\mu}$. In this case, the prior distribution $\mu_{X}$ is known to both the sender and the receiver. We refer to the number of bits that need to be communicated to complete a task as the task's communication complexity. It is often convenient to assume the existence of a referee who generates the sequence $x^{n}$ and gives it to the sender, who is then tasked with sending the sequence to the receiver.
                    
                \paragraph{Side information}
                    The relationship between communication and uncertainty carries over into a slightly modified setting, where we introduce another random variable $Y$. This, along with the random variable $X$, is described by a joint distribution $X, Y \sim \mu_{XY}$,  with its alphabet of possible outcomes $\mathcal{X}\times\mathcal{Y}$. This distribution is known to both the sender and the receiver. In addition to giving a sequence $x^{n}$ to the sender, the referee also provides the receiver with a sequence $y^{n}=y_{1}, y_{2},...,y_{n}$, where these sequences consist of $n$ i.i.d. instances of $X, Y$. The sequence $y^{n}$, often referred to as the receiver's side information, may therefore be correlated with the sequence $x^{n}$ that the sender would like to communicate to them. In this setting, the receiver's uncertainty about what an instance of $X$ could be is quantified by the conditional entropy $H(X|Y)_{\mu}$, which is defined as
                    \begin{equation}
                        H(X|Y)_{\mu} = -\sum_{(x, y) \in \mathcal{X} \times \mathcal{Y}} \mu_{XY}(x, y)\log(\mu_{X|Y}(x|y)).
                    \end{equation}
                    The Slepian-Wolf coding bound establishes that in order for the sender to communicate $x^{n}$ to a receiver that already holds $y^{n}$, an optimal bound on the communication complexity is $nH(X|Y)_{\mu}$ bits \cite{slepian}. If $X$ and $Y$ are correlated, then the conditional entropy $H(X|Y)_{\mu}$ is smaller than the marginal entropy $H(X)_{\mu}$. The correlation between $X$ and $Y$ can also be quantified through the mutual information 
                    \begin{equation}
                        I(X;Y)_{\mu} = H(X)_{\mu}-H(X|Y)_{\mu}.
                    \end{equation}
                    The stronger the correlation between $X$ and $Y$, the lower the conditional entropy $H(X|Y)_{\mu}$, and the higher the mutual information $I(X;Y)_{\mu}$. 
                    
                \paragraph{Message simulation}
                    Moving to a different setting, we now equip the sender with the ability to generate a new sequence $m^{n}=m_{1}, m_{2},...,m_{n}$ from their original sequence $x^{n}$ through $n$ i.i.d. uses of a noisy channel $N$. The output of this noisy channel is associated with a random variable $M$, an output distribution $p(m|x)$, and an output alphabet $\mathcal{M}$. This noisy channel can be applied independently on each of the $n$ symbols of $x^{n}$ to obtain $n$ new symbols that form the sequence $m^{n}$. If such a noisy communication channel existed between the sender and the receiver, in particular one that could return a copy of the channel output to the sender, the sender could input each symbol of $x^{n}$ into the channel one at a time, and the sender and receiver would both end up with a copy of the sequence $m^{n}$. We refer to a channel that returns a copy of the output to the sender as a feedback channel. We will use $N^{n}$ to denote the resulting product channel that generates $m^{n}$ from $x^{n}$ according to the output distribution $p^{n}(m^{n}|x^{n})=p(m_{1}|x_{1})\cdot p(m_{2}|x_{2})...p(m_{n}|x_{n})$. 
                    
                    In general, we will use $\pi$ to refer to a communication protocol, and $\pi^{n}$ to refer to an asymptotic communication protocol, where the players are interested in executing $n$ independent instances of the protocol $\pi$. In the setting of the reverse Shannon theorem with feedback \cite{bennett2002}, $\pi$ is the protocol where the sender inputs a symbol $x$ to the noisy channel $N$, and the sender and receiver both obtain the output $m$. $\pi^{n}$ is then the protocol where the noisy channel is applied identically and independently on $n$ i.i.d. input symbols $x^{n}$ to obtain $n$ output symbols $m^{n}$. In this setting, the sender and receiver wish to use a noiseless bit channel and some shared randomness to simulate the protocol $\pi^{n}$. The sender is given an input sequence $x^{n}$ that is an i.i.d. sequence distributed according to $\mu_{X}$, where $\mu_{X}$ is known to both the sender and the receiver. The sender and receiver wish to use the noiseless channel and shared randomness to both end up with a copy of the same sequence $\hat{m}^{n}$ that is correctly correlated with the sender's original input sequence $x^{n}$ according to the distribution of the noisy channel $p^{n}$ that is known to both the sender and the receiver. 
                    
                    The protocol that simulates $\pi^{n}$ will be denoted $\hat{\pi}^{n}$, and the communication complexity of the simulation $\hat{\pi}^{n}$, which is the number of noiseless bits communicated from sender to receiver, is denoted $CC(\hat{\pi}^{n})$. We will use $SR(\hat{\pi}^{n})$ to denote the number of shared random bits used. The goal of the simulation is to maintain a low communication complexity, or compress the original protocol, while ensuring that the output distribution of the simulated noisy channel is close in total variation distance (see Ref.\cite{pollard2002user} for a definition) to that of the original channel $N$. We give a formal definition for simulating interactive communication protocols in Definition~\ref{def:sim_int}. The simulation protocol $\hat{\pi}^{n}$ should achieve the same task as the protocol $\pi^{n}$, the difference between the two being the resources available to the players to complete the task. In the case where the receiver holds no side information, there exists a protocol $\hat{\pi}^{n}$ that simulates $\pi^{n}$, and achieves the optimal communication complexity of $nI(M;X)_{\mu, p}$ bits. The players also use $nH(M|X)_{\mu, p}$ shared random bits to achieve this complexity\cite{bennett2002, bennett_devetak_harrow_shor_winter_2014}. 
                    
                    The reverse Shannon theorem was also studied in the non-asymptotic, or one-shot setting, in Ref.\cite{hjmr}, where the sender and receiver wish to simulate the protocol $\pi$, which is one use of the noisy channel. In this case, the optimal bound on the communication complexity is shown to be $I(M;X)_{\mu, p}$ bits on average for variable length output messages.
                    
                \paragraph{Message simulation with side information}
                    The communication cost of the reverse Shannon theorem without side information captures the notion that the receiver's uncertainty about what $\hat{m}^{n}$ should be is related to their uncertainty about the information it contains about $x^{n}$. This is quantified by the mutual information $I(M;X)_{\mu, p}$. We now consider the case where the receiver is equipped with some side information $y^{n}$, where $x^{n}$ and $y^{n}$ consist of $n$ i.i.d. instances of $X, Y\sim \mu_{XY}$, and  $\mu_{XY}$ is known to both the sender and the receiver. They again wish to use a noiseless bit channel and some shared randomness to simulate the protocol $\pi^{n}$. The message $\hat{m}^{n}$ that is generated by the simulation $\hat{\pi}^{n}$ should be correctly correlated with the sender's input $x^{n}$ through the distribution $p^{n}$. In this case, any correlation between $M$ and $Y$ exists only through $X$, and $p^{n}(m^{n}|x^{n})=p^{n}(m^{n}|x^{n}, y^{n})$. The communication cost $CC(\hat{\pi}^{n})$ of simulating $\pi^{n}$ becomes $nI(M;X|Y)_{\mu, p}$ \cite{braverman_rao}, where $I(M;X|Y)_{\mu, p}$ is the conditional mutual information,
                    \begin{equation}
                        I(M;X|Y)_{\mu, p} = H(M|Y)_{\mu, p}-H(M|X,Y)_{\mu, p}.
                    \end{equation}
                    The simulation protocol of Ref.\cite{braverman_rao} is an extension of a protocol that works in the one-shot setting, which achieves the optimal bound on the communication complexity of $I(M;X|Y)_{\mu, p}$. A drawback of their result, however is that the simulation protocol requires that the sender and receiver exchange messages back and forth across the noiseless channel in order to simulate using the noisy channel in one direction from sender to receiver. They therefore have to use an interactive protocol, which we define in the next section, to simulate one-way communication, even when the prior distribution is known to the sender and receiver. Their protocol also requires an unbounded quantity of shared randomness.

            \subsubsection{Interactive communication}
            \label{subsubsec:int_comm}
                In each of the cases that we have touched on so far, there has been a distinct sender and receiver. The information moves in one direction from the sender to the receiver, and we therefore refer to this as one-way communication. We will now move to the interactive setting, where there are two players that will have to communicate and send information back and forth to each other.

                \paragraph{Interactive communication protocols}
                    Just as in the case of message simulation with side information, we assume that there is a referee who provides the communicating parties Alice and Bob each with an $n$ symbol i.i.d. sequence $x^{n}$ and $y^{n}$ respectively, which are sampled from a known distribution $\mu_{XY}$. They are also given the description of $j$ noisy feedback channels $N_{1},...,N_{j}$. An interactive communication protocol $\pi^{n}$ is iterated by rounds, and at each round of communication, one noisy channel is used. For round $i$, channel $N_{i}$ is used. A new round begins when the direction of communication changes, i.e. when Alice goes from being the sender to the receiver, and vice-versa for Bob. In this model, for channels with odd indices, Alice is the sender and Bob is the receiver, and vice versa for even numbered channels. At each round, the player who is acting as the sender will input their original sequence (either $x^{n}$ or $y^{n}$) and all the messages $m^{n}_{1},...,m^{n}_{i-1}$ that were generated in all previous rounds into the product channel $N^{n}_{i}$ to obtain a new message $m^{n}_{i}$. Channel $N_{i}$ has as an output alphabet $\mathcal{M}_{i}$ and either $\mathcal{X} \times \mathcal{M}_{1}\times ... \times \mathcal{M}_{i-1}$ or $\mathcal{Y} \times \mathcal{M}_{1}\times ... \times \mathcal{M}_{i-1}$ as an input alphabet, depending on the parity of the channel's index. Each channel is associated with an output distribution of the form $p_{i} = p_{M_{i}|XM_{1}...M_{i-1}}$ or $p_{i} = p_{M_{i}|YM_{1}...M_{i-1}}$ depending on the parity of $i$. Throughout the paper, when referring to tuples of the form $\mathcal{M}_{1} \times \mathcal{M}_{2} \times \dots \times \mathcal{M}_{j}$, we will use the shorthand $\mathcal{M}_{\leq j}$, or $\mathcal{M}_{< j}$ when $\mathcal{M}_{j}$ is excluded. Similarly, for expressions such as $(m_1, m_2, \dots, m_j)$ or $M_{1}M_{2} \dots M_{j}$, we will write $m_{\leq j}$ and $M_{\leq j}$ respectively, or $m_{< j}$ and $M_{< j}$ when $m_j$ and $M_{j}$ respectively are excluded. We may also refer to the distribution $p_{1}\cdot p_{2}\cdot...\cdot p_{j}$ as $p_{\leq j}$.
                    
                    The $n$ symbols of the outputted sequence $m^{n}_{i}$ will be denoted $m_{i1},m_{i2},...,m_{in}$, where the first index of each symbol corresponds to the round in which the sequence was generated, and the second to the symbol's position in the sequence. Assuming $i$ is odd, an output of the product channel $m_{i}^{n}$ is sampled from $M_{i}^{n}=N^{n}_{i}\left(x^{n},m^{n}_{<i}\right)$, which is distributed according to the product distribution $p_{i}^{n}\left(m^{n}_{i}|x^{n},m^{n}_{<i}\right)$, defined as
                    \begin{equation}
                        M_{i}^{n} \sim p_{i}\left(m_{i1}|x_{1},m_{<i1}\right)
                        ... p_{i}\left(m_{in}|x_{n},m_{<in}\right).
                    \end{equation}
                    After $i$ rounds, the players would have generated the following transcript of $i$ messages
                    \begin{align}
                        &\label{eqn:first_round} M_{1}^{n} \sim N_{1}^{n}(x^{n})
                        \\&M_{2}^{n} \sim N_{2}^{n}(y^{n}, m^{n}_{1}) 
                        \\&...
                        \\&\label{eqn:mult_rounds}M_{i}^{n} \sim N_{i}^{n}(x^{n}, m^{n}_{<i}).
                    \end{align}
                    $\Pi(X, Y)$ or just $\Pi$ denotes the random variable of the transcript when $x$ and $y$ are also random variables. 
            The probability associated with the transcript $\pi^{n}(x^{n}, y^{n})$ is given by
                \begin{equation}
                    p^{n}(m^{n}_{\leq j}|x^{n}, y^{n})= p^{n}_{j}(m^{n}_{j}|x^{n}, m^{n}_{<j})\cdot p^{n}_{j-1}(m^{n}_{j-1}|y^{n}, m^{n}_{<(j-1)})\cdot...\cdot p^{n}_{1}(m^{n}_{1}|x^{n}).
                \end{equation}
                To generate their final outputs from the protocol, Alice and Bob are each equipped with a function $f_{A}$ and $f_{B}$ respectively, that each takes the players input $x^{n}$ or $y^{n}$, the protocol transcript $\pi^{n}(x^{n}, y^{n})$, all of the local randomness $S_{A}$ and $S_{B}$ respectively, and the shared randomness $R$ used throughout the entire protocol. At the end of the protocol, the players each output
                \begin{align}
                    f_{A}\left(x^{n}, \pi(x^{n}, y^{n}), S_{A}, R\right)\\
                    f_{B}\left(y^{n}, \pi(x^{n}, y^{n}), S_{B}, R \right)
                \end{align}
                respectively.
                
                    A more detailed presentation of interactive communication protocols can be found in Section~\ref{sec:int_prelims}, where this protocol is modeled as a rooted tree of depth $j$, as per Ref.~\cite{Braverman2017}. A transcript of the protocol is obtained by traversing the tree along a path from the root to a leaf. A visual representation is also provided in Figure~\ref{fig:int_mod} in Section~\ref{sec:int_prelims}.
                        
                \paragraph{Simulating interactive communication}
                    In order to simulate a $j$ round interactive communication protocol $\pi^{n}$, we assume that Alice and Bob have access to two noiseless bit channels, one from Alice to Bob and the other from Bob to Alice, as well as some shared randomness. The communication complexity is the total number of noiseless bits exchanged across both channels over all rounds of communication. Their goal is to both end up with the same transcript of $j$ messages $\hat{m}_{1}^{n},...,\hat{m}_{j}^{n}$, such that the joint output distribution of the $j$ simulated channels is close in total variation distance to that of the original channels, while minimizing the communication complexity. This was studied in Ref.~\cite{braverman_rao}, where it is shown that the communication complexity of asymptotically simulating an interactive communication protocol for i.i.d. inputs $x^{n}$ and $y^{n}$ from a known prior distribution $\mu_{XY}$ is equivalent to $nIC_{\mu}(\pi)$, where $IC_{\mu}(\pi)$ is the information cost of the protocol, and that this is an optimal bound on the communication complexity. The information cost of an interactive protocol $\pi$ quantifies how much information is revealed to each player about the other's input through the execution of the protocol $\pi$ on inputs from a distribution $\mu_{XY}$. It is defined as 
                        \begin{equation}
                            IC_{\mu}(\pi) = I(\Pi;X|Y)_{\mu, p_{\leq j}}+I(\Pi;Y|X)_{\mu, p_{\leq j}},
                        \end{equation}
                    where $\Pi$ is a random variable associated with the transcript outputted by the protocol. In the model for interactive communication described above, $\Pi = M_{1},...,M_{j}$. The simulation protocol presented in Ref.~\cite{braverman_rao} may require multiple rounds of communication, where Alice and Bob exchange messages back and forth, in order to simulate one round of the original interactive protocol. We note that simulating an interactive communication protocol that has $1$ round of communication reduces to the reverse Shannon theorem with side information.

                \paragraph{Distributional versus prior-free communication} 
                    So far, all of the settings that we have considered assume the existence of a prior distribution $\mu$, from which the referee samples the inputs that they will give to each player. This is known as the distributional communication setting. In the prior-free setting, there is no distribution $\mu$ associated with the inputs. The simulation protocol has to work for the ``worst" possible input, but the channel is still applied identically and independently on each symbol of the inputted sequence. The players have no knowledge of how the inputs were generated. 
                    
                    The prior-free reverse Shannon theorem without side information was studied in \cite{bennett2002, bennett_devetak_harrow_shor_winter_2014}, where the optimal bound on the communication complexity is shown to be the maximum mutual information $nI(M;X)_{\mu, p}$, where the maximum is taken over all possible prior distributions $\mu_{X}$. This is summarized in the following theorem. 
                    \begin{rem}[Theorem 1 of \cite{bennett_devetak_harrow_shor_winter_2014}]
                    Given an alphabet $\mathcal{X}$, let $\pi^{n}$ be the protocol where Alice inputs $x^{n} \in \mathcal{X}^{n}$ into the noisy feedback channel $N^{n}$, that is distributed according to $p^{n}_{M|X}(m^{n}|x^{n})$, and her and Bob both obtain a copy of the output $m^{n}$. There exists a simulation protocol $\hat{\pi}^{n}$ such that
                        \begin{equation}
                            \lim_{n\rightarrow \infty}\frac{CC(\hat{\pi}^{n})}{n} \leq \max_{\mu_{X}}I(M;X)_{\mu_{X}, p}.
                        \end{equation}
                    \end{rem}
                    This is the optimal communication complexity to simulate one round of communication in the prior-free setting, where the sender does not make use of the receiver's side information. For interactive protocols, we define the prior-free information complexity below.
                \begin{definition}[Prior-free information complexity]
                    The prior-free information complexity $IC(\pi)$ is defined as 
                    \begin{equation}
                            IC(\pi) = \max_{\mu} IC_{\mu}(\pi).
                    \end{equation}
                    \end{definition}
                    Prior-free interactive communication was studied in Ref.~\cite{Braverman2017}, where it is shown that the prior-free information complexity of an interactive protocol $\pi$ is equivalent to the amortized prior-free communication complexity of the protocol $\hat{\pi}^{n}$. This is summarized in the following theorem, for which we provide an alternate proof for with round preservation, and a bound on the amount of shared randomness used in Theorem~\ref{thm:int_2}, the main result of our paper.
                    \begin{rem}[Theorem 6.7 of \cite{Braverman2017}]
                        For every interactive protocol $\pi^{n}$, there exists a simulation protocol $\hat{\pi}^{n}$, such that
                    \begin{equation}
                       \lim_{n\rightarrow \infty}\frac{CC(\hat{\pi}^{n})}{n} \leq IC(\pi).
                    \end{equation}
                    \end{rem}
                     This result is demonstrated in Ref.~\cite{Braverman2017} by extending a protocol that operates in the one-shot setting, that is based on iterated correlated sampling, to the asymptotic setting. This is then combined with a minimax theorem, to obtain a bound on the prior-free communication complexity from the distributional communication complexity for the worst case possible input distribution. The simulation protocol given in Ref.~\cite{Braverman2017} requires multiple rounds of communication to simulate a single round of the original protocol, and it therefore lacks round-preservation. Their protocol also requires an unbounded quantity of shared randomness.
            
        \subsection{Summary of Results}
            In this paper we improve on the results of Ref.~\cite{Braverman2017} with a simulation protocol that relies on the theory of types to obtain this result. The method of types often provides a more intuitive and straightforward approach to analyzing information theory problems, especially in the context of source coding and data compression, and with it we are able to obtain a more natural proof for the compression results of Ref.~\cite{Braverman2017}. It is well-suited for analyzing the asymptotic behaviour of random processes and sequences. This does limit the applicability of our findings to the asymptotic setting, unlike some of the results of Ref.~\cite{Braverman2017} which are effective in the one-shot setting as well. On the other hand, an important distinction between our results and those of Ref.~\cite{Braverman2017} is that our simulation protocol preserves the number of rounds of the original protocol. In addition, we are able to obtain an upper bound on the quantity of shared randomness needed while maintaining the desired communication cost.
            
            \paragraph{Round preservation}
                 Round complexity can be used as a complexity measure for certain distributed information-processing tasks. It captures the dependence that interaction has on making optimal use of communication resources. In Ref.~\cite{Braverman2017}, the number of rounds of communication needed to simulate a single round of the original interactive communication protocol is unbounded. This was improved in Ref.~\cite{braverman2013direct}, where they assess the trade-off between increasing the overall communication complexity, and decreasing the round complexity of the simulation protocol. In particular, they show that it is possible to simulate $j$ rounds of communication using a $7j$ round protocol. In Ref.\cite{jain_2012}, they study interactive communication in the bounded-round setting, where there is an upper bound on the number of rounds that the protocol they wish to simulate can have. They use the same compression technique of Ref.~\cite{braverman_rao} for one-way communication protocols and extend it to the interactive setting to obtain a direct product result.
                 
                 In our results, we develop a new compression technique for prior-free one-way communication (Theorems~\ref{thm:rst_1}-\ref{thm:rst_2}) which, when extended to the prior-free and interactive setting, shows that any prior-free $j\-$~round interactive protocol $\pi$ can be simulated by a $(j+1)\-$~round protocol $\hat{\pi}^{n}$, while still achieving the optimal communication complexity (Theorem~\ref{thm:int_2}). We also show that the simulation can be performed in $j$ rounds at the expense of increasing the communication cost of the first round of the simulation (Theorem~\ref{thm:int_3}). In the single round setting, for the special case where the noisy channel $N$ is the identity channel, we show that our protocol achieves the Slepian-Wolf coding bound, and uses no shared randomness (Theorems~\ref{thm:sw_1}-\ref{thm:sw_2}). We also study the case where the noisy channel $N$ is still the identity channel, but only Alice's input is prior-free, and Bob's input is generated from Alice's input through the use of a noisy channel (Theorem~\ref{thm:sw_3}).
                 
            \subsubsection{Techniques}
                \paragraph{Theory of types}
                     The approach of Ref.~\cite{bennett_devetak_harrow_shor_winter_2014} for the distributional reverse Shannon theorem in the setting without side information, is to use a random binning strategy to uniformly partition the set of potential messages into a predefined number of bins, such that the partitioning is known to the sender and the receiver. Since the prior distribution of the input is known, the receiver can determine the set of messages that are most likely to occur using typical sets. 
                    In the prior-free reverse Shannon theorem of Ref.~\cite{bennett_devetak_harrow_shor_winter_2014}, the theory of types is used in order to substitute prior distributions with empirical distributions. When the receiver has no side information, it is sufficient for the sender to determine the empirical distribution of their input locally, which they can then communicate to the receiver using a sub-linear (logarithmic) amount of communication, since the number of possible types for a sequence from a fixed alphabet is at most polynomial in the sequence length. 

                \paragraph{Joint empirical distribution estimation}
                    Since we are working in the setting where the receiver has side information, the results of \cite{bennett_devetak_harrow_shor_winter_2014} cannot be applied directly, as the marginal empirical distribution of the sender's input sequence does not make use of the receiver's side information. In order to make use of the side information, the players would need the joint empirical distribution of their sequences. This poses a problem, as neither the sender nor the receiver has any information about the other's sequence to be able to determine their joint empirical distribution. 
                    
                    Our approach is to allow Alice and Bob to learn just enough about the other player's input to be able to reliably estimate the joint empirical distribution of their sequences. We then show that this estimate is reliable enough to be used to compress simulated communication protocols with side information. We show that it is possible to do this while maintaining the optimal communication complexity of the simulation. 

            \subsubsection{Results}
                We give a protocol that allows Alice and Bob to obtain a reliable estimate of the joint empirical distribution $t_{x^{n}, y^{n}}$ of their inputs by using some shared randomness and a sublinear amount of communication to uniformly sample and exchange coordinates along their sequences. This is summarized in the following lemma.
                \begin{lem}[Informal summary of Lemma~\ref{lem:estimating}]
                    Let $n,m \in \mathbb{N}$, with $m < n$. There exists a 2-round protocol that uses $m \log\left(|\mathcal{X} | \cdot |\mathcal{Y}| \right)$ bits of noiseless communication and $m \log\left(n\right)$ bits of shared randomness, such that for any input $(x^{n}, y^{n}) \in \mathcal{X}^{n} \times \mathcal{Y}^{n}$, Alice and Bob both obtain an estimate $\tilde{t}$ that is close in total variation distance to the true empirical distribution $t_{x^{n}, y^{n}}$, with a probability of success that is exponentially close to $1$.
                \end{lem}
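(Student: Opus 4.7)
The plan is to have Alice and Bob use their shared randomness to agree on $m$ coordinates sampled uniformly at random from $\{1,\dots,n\}$, exchange the corresponding symbols of their input sequences, and set $\tilde t$ to be the joint type of the exchanged pairs. Concretely, the parties interpret the $m \log n$ bits of shared randomness as independent uniform indices $I_1,\dots,I_m \in \{1,\dots,n\}$ (sampled with replacement). In round one, Alice sends $(x_{I_1},\dots,x_{I_m})$ to Bob using $m\log|\mathcal{X}|$ bits; in round two, Bob sends $(y_{I_1},\dots,y_{I_m})$ to Alice using $m\log|\mathcal{Y}|$ bits. The total communication is $m\log(|\mathcal{X}|\cdot|\mathcal{Y}|)$, and after the exchange both players share the same list of $m$ pairs, from which they compute the common estimate $\tilde t$.

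For the correctness analysis, the key observation is that, conditionally on the worst-case inputs $(x^n,y^n)$, each sampled pair $(x_{I_k},y_{I_k})$ is distributed exactly according to the true joint type $t_{x^n,y^n}$, and the $m$ pairs are mutually independent. Hence for every $(a,b)\in\mathcal{X}\times\mathcal{Y}$, the quantity
\[
\tilde t(a,b) \;=\; \frac{1}{m}\sum_{k=1}^{m}\mathbf{1}\{x_{I_k}=a,\;y_{I_k}=b\}
\]
is an average of i.i.d.\ $\{0,1\}$-valued random variables with mean $t_{x^n,y^n}(a,b)$. Hoeffding's inequality yields
\[
\Pr\bigl[\,|\tilde t(a,b)-t_{x^n,y^n}(a,b)|>\delta\,\bigr]\;\leq\;2\,e^{-2m\delta^2}.
\]
Taking a union bound over the $|\mathcal{X}|\cdot|\mathcal{Y}|$ cells and combining it with the crude bound $\|\tilde t - t_{x^n,y^n}\|_{\mathrm{TV}} \leq \tfrac{1}{2}|\mathcal{X}||\mathcal{Y}|\max_{a,b}|\tilde t(a,b)-t_{x^n,y^n}(a,b)|$ gives TV-closeness to within any prescribed $\varepsilon$, with failure probability at most $2|\mathcal{X}||\mathcal{Y}|\,e^{-2m\delta^2}$ for $\delta = \Theta(\varepsilon/(|\mathcal{X}||\mathcal{Y}|))$. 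This is exponentially small in $m$, as required.

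The argument is short, and the main obstacle is the accounting rather than any deep technical step. To use this lemma inside the compression results of the main theorem, $m$ must be chosen large enough that the per-coordinate deviation $\delta$ needed to obtain the target TV error $\varepsilon$ still yields an exponentially small failure probability, while simultaneously being sublinear in $n$ so that the communication cost $m\log(|\mathcal{X}|\cdot|\mathcal{Y}|)$ and the shared randomness budget $m\log n$ are both $o(n)$; a choice such as $m=\Theta(n^{\alpha})$ for some $\alpha\in(0,1)$, with $\varepsilon$ shrinking slowly with $n$, satisfies all three constraints simultaneously. A secondary subtlety is whether sampling is performed with or without replacement: with replacement gives exact i.i.d.\ samples and matches the randomness budget of $m\log n$ bits cleanly, while without replacement would require either a hypergeometric concentration bound or a simple coupling showing that it is no less concentrated than the with-replacement case; this affects only lower-order terms and can be deferred.
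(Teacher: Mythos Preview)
Your proposal is correct and matches the paper's proof essentially step for step: sample $m$ coordinates with replacement using the shared randomness, exchange the corresponding symbols in two rounds, set $\tilde t$ to the empirical distribution of the $m$ sampled pairs, and bound each cell's deviation by Hoeffding followed by a union bound over $|\mathcal{X}|\cdot|\mathcal{Y}|$. The paper states the conclusion as a uniform (per-cell) bound rather than passing to total variation, but that is an immediate consequence, and your additional remarks on the choice of $m$ and on with- versus without-replacement sampling are sound side comments that go slightly beyond what the paper records.
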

                We then use this estimate to simulate communication protocols in the prior-free setting. We show that given such an estimate, for any noisy channel there exists a protocol that simulates $n$ channel uses from Alice to Bob, such that the output distribution of the simulated channel is close in total variation to that of the true channel. This is summarized in the following theorem.
                \begin{rem}[Informal summary of Theorem~\ref{thm:rst_1}]
                    There exists a $1$-round protocol $\hat{\pi}^{n}$ taking $(x^{n}, y^{n}) \in \mathcal{X}^{n}\times \mathcal{Y}^{n}$ as inputs, a description of the noisy channel $N$ associated with the output distribution $p_{M|X}$, and a joint type $\widetilde{t}$ on the alphabet $\mathcal{X}\times \mathcal{Y}$, with a guarantee that $\widetilde{t}$ is close in total variation distance to the true joint type $t_{x^{n}, y^{n}}$ of Alice and Bob's input sequences, such that
                    \begin{align}
                        \lim_{n\rightarrow \infty} \frac{CC(\hat{\pi}^{n})}{n} &\leq I\left(M;X|Y\right)_{t\cdot p}\\
                        \lim_{n\rightarrow \infty} \frac{SR(\hat{\pi}^{n})}{n} &\leq H\left(M|X, Y\right)_{t\cdot p}
                    \end{align}
                    where we've used the shorthand $t\cdot p = t_{x^{n}, y^{n}}\cdot p_{M|X}$. It holds with a probability that is exponentially close to $1$ that Alice and Bob each produce a copy of the same output $\hat{m}^{n}$, such that the likelihood of obtaining this output from the simulation protocol $\hat{\pi}^{n}$ given $x^{n}$ is close in total variation distance to $p_{M|X}$.
                \end{rem}
                As mentioned earlier, in the special case that the channel $N$ is the identity channel, our results comply with the Slepian-Wolf coding bound (Theorems~\ref{thm:sw_1}-\ref{thm:sw_3}). The protocol that achieves these results has the players arrange the possible channel outputs in uniformly random order in a list. They then sample some shared randomness to choose a subset of messages in the list, to obtain a reduced list, within which the sender chooses a message that is correctly correlated with their input according to the output distribution of the channel. The sender then communicates the marginal type of the chosen message using $O(\log n)$ bits of communication, and then enough bits of the binary expansion of the message's position in the reduced list so that the receiver can identify the message that the sender has chosen.

                This protocol requires the use of shared randomness in two different instances, once to arrange the channel outputs in uniformly random order, and a second time to choose a subset of messages in the list. We follow the proof of Newman's theorem \cite{newman_1991} given in Ref.~\cite{Rao_Yehudayoff_2020}, to reduce the amount of shared randomness used in the the first instance from an exponential quantity in $n$ to a linear quantity, and show that guarantees similar to that of the original protocol can still be obtained. We then have one player locally sample and communicate this randomness to the other player using $O(\log n)$ bits of communication, rather than sampling from the shared randomness. The second instance in which shared randomness is used remains untouched, as this is what allows the players to offset the communication cost of simulating the channel.
                
                The results of Theorem~\ref{thm:rst_1} are then combined with Lemma~\ref{lem:estimating} in Theorem~\ref{thm:rst_2}, and subsequently extended to the interactive setting in Theorem~\ref{thm:int_2}. Applying the de-randomization technique described above \cite{newman_1991, Rao_Yehudayoff_2020}, we obtain the following novel result for interactive protocols.
                
                \begin{rem}[Informal summary of Theorem~\ref{thm:int_2}]
                    There exists a $j+1$-round protocol $\hat{\pi}^{n}$ taking $\left(x^{n}, y^{n}\right)$ as inputs,  as well as a description of $j$ noisy channels $N_{1}, ..., N_{j}$, each associated with an output distribution $p_{M_{i}|X, M_{< i}}$ or $p_{M_{i}|Y, M_{< i}}$ depending on the parity of $i$, such that Alice and Bob can, with a probability that is exponentially close to $1$, reliably simulate $n$ executions of a $j$ round interactive protocol $\pi$, and share a copy of the transcript of the simulated protocol, such that 
                    \begin{align}
                        \lim_{n\rightarrow \infty} \frac{CC(\hat{\pi}^{n})}{n} &\leq I\left(M_{\leq j};X|Y\right)_{t\cdot p}+I\left(M_{\leq j};Y|X\right)_{t\cdot p}\\
                        \lim_{n\rightarrow \infty} \frac{SR(\hat{\pi}^{n})}{n} &\leq H\left(M_{\leq j}|X, Y\right)_{t\cdot p}
                    \end{align}
                    where we've used the shorthand $t\cdot p = t_{x^{n}, y^{n}}\cdot p_{\leq j}.$
                \end{rem}
                
                We then show in Theorem \ref{thm:int_3} that the simulation protocol can be performed in $j$ rounds, at the expense of increasing the communication cost of the first round of the simulation.
                
                The paper is divided as follows. Section \ref{sec:estimating} contains the empirical distribution estimation protocol and a brief discussion of the typical sets of the estimated distribution. Section \ref{sec:single} contains protocols for the prior-free compression of one way communication protocols, with prior-free Slepian-Wolf coding in Section \ref{sec:sw} and the prior-free reverse Shannon theorem in Section \ref{sec:rst}. We then move to the compression of prior-free interactive protocols in Section \ref{sec:interactive}.


    \section{Notation and Preliminaries}                \label{sec:int_prelims}
        \subsection{Information theory}

            \begin{definition}[Shannon entropies]
                Let $\mu_{X}$ be a distribution on the alphabet $\mathcal{X}$. The entropy of the random variable $X \sim \mu_{X}\left(x\right)$ is defined as 
            \begin{equation}
                H\left(X\right)_{\mu} = -\sum_{x \in \mathcal{X}}\mu_{X}\left(x\right)\log_{2}\left(\mu_{X}\left(x\right)\right).
            \end{equation}
            Let $\mu_{X, Y}$ be a joint distribution on the alphabet $\mathcal{X} \times \mathcal{Y}$. The entropy of the random variables ${X, Y \sim \mu_{X, Y}\left(x, y\right)}$ is defined as 
            \begin{equation}
                H\left(X, Y\right)_{\mu} = -\sum_{x, y \in \mathcal{X} \times \mathcal{Y}}\mu_{X, Y}\left(x, y\right)\log_{2}\left(\mu_{X, Y}\left(x, y\right)\right).
            \end{equation}
            The conditional entropy is defined as 
            \begin{equation}
                H\left(X|Y\right)_{\mu} = -\sum_{x, y \in \mathcal{X} \times \mathcal{Y}}\mu_{X, Y}\left(x, y\right)\log_{2}\left(\mu_{X|Y}\left(x|y\right)\right).
            \end{equation}
            The joint entropy $H\left(X, Y\right)_{\mu}$ and conditional entropy $H\left(X|Y\right)_{\mu}$ are related by the marginal entropy $H\left(Y\right)_{\mu}$ in the following way
            \begin{equation}
                H\left(X|Y\right)_{\mu} = H\left(X, Y\right)_{\mu} - H\left(Y\right)_{\mu}.
            \end{equation}
            \end{definition}
            \begin{definition}[Mutual information]
            The mutual information between two random variables ${X, Y \sim \mu_{X, Y}\left(x, y\right)}$ is defined as
            \begin{equation}
                I\left(X ; Y\right) = H\left(X\right)_{\mu} - H\left(X|Y\right)_{\mu}.
            \end{equation}
            The mutual information is symmetric in it's inputs,
            \begin{equation}
                I(X;Y)=I(Y;X),
            \end{equation}
                which implies that 
            \begin{equation}
                I(X;Y)= H\left(Y\right)_{\mu} - H\left(Y|X\right)_{\mu}.
            \end{equation}
            Let ${M, X, Y \sim \mu_{M, X, Y}}$ be a set of jointly distributed random variables. The conditional mutual information is defined as
            \begin{equation}
                I\left(M;X|Y\right)_{\mu} = H\left(M|Y\right)_{\mu}-H\left(M|X, Y\right)_{\mu}.
            \end{equation}
            \end{definition}
            \begin{definition}[Relative entropy]
                Let $\mu_{X}$ and $\overline{\mu}_{X}$ be probability distributions on the alphabet $\mathcal{X}$. The relative entropy or Kullback-Leibler divergence of the distributions is defined as
                \begin{equation}
                    D(\mu||\overline{\mu}) = \sum_{x \in \mathcal{X}}\mu_{X}(x)\log_{2}\left(\frac{\mu_{X}(x)}{\overline{\mu}_{X}(x)}\right).
                \end{equation}
            \end{definition}
            \begin{lemma}[Chain rules]
                Let $M_{1}, M_{2}, X, Y \sim p_{M_{1}, M_{2}, X, Y} $ be random variables. The following chain rule holds for their mutual information
                \begin{equation}
                    I(M_{1}M_{2};X|Y)_{p} = I(M_{1};X|Y)_{p}+I(M_{2};X|YM_{1})_{p}.
                \end{equation}
            \end{lemma}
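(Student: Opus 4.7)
The plan is to reduce the chain rule for conditional mutual information to the more elementary chain rule for conditional (Shannon) entropy, which in turn follows from the product rule for conditional probabilities $p(m_1, m_2 \mid x, y) = p(m_1 \mid x, y)\, p(m_2 \mid m_1, x, y)$ and the linearity of expectation applied to the logarithm.

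Concretely, I would first expand each conditional mutual information appearing in the statement using the identity $I(A;B\mid C)_p = H(A\mid C)_p - H(A\mid B,C)_p$ taken from the preceding definitions. This turns the left-hand side into $H(M_1 M_2 \mid Y)_p - H(M_1 M_2 \mid X, Y)_p$, and turns the right-hand side into
\begin{equation}
\bigl[H(M_1 \mid Y)_p + H(M_2 \mid M_1, Y)_p\bigr] - \bigl[H(M_1 \mid X, Y)_p + H(M_2 \mid M_1, X, Y)_p\bigr].
\end{equation}

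Next, I would establish the chain rule for conditional entropy, namely
\begin{equation}
H(M_1 M_2 \mid Z)_p = H(M_1 \mid Z)_p + H(M_2 \mid M_1, Z)_p,
\end{equation}
for an arbitrary conditioning random variable $Z$. This is a direct computation: write $H(M_1 M_2 \mid Z)_p = -\mathbb{E}_p[\log p(M_1, M_2 \mid Z)]$, use the factorization $p(m_1, m_2 \mid z) = p(m_1 \mid z)\, p(m_2 \mid m_1, z)$, and split the logarithm into a sum, recognizing each resulting term as the corresponding conditional entropy. Applying this identity once with $Z = Y$ and once with $Z = (X, Y)$ collapses the bracketed sums on the right-hand side into $H(M_1 M_2 \mid Y)_p$ and $H(M_1 M_2 \mid X, Y)_p$ respectively, matching the left-hand side.

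There is no real obstacle here; the result is a routine manipulation, and the only point that requires a small amount of care is keeping track of which variables sit in the conditioning of each entropy term so that the two applications of the entropy chain rule line up cleanly. Everything else is bookkeeping.
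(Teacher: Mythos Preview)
Your proposal is correct and is exactly the standard derivation of the chain rule for conditional mutual information via the chain rule for conditional entropy. The paper itself states this lemma as a preliminary fact without supplying any proof, so there is nothing further to compare against; your argument is complete and entirely appropriate here.
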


            \begin{definition}[Total variation distance]
                Let $\mu_{X}$ and $\overline{\mu}_{X}$ be probability distributions on the alphabet $\mathcal{X}$. The total variation distance of the distributions, defined in terms of the $\ell^{1}$ norm is
                \begin{align}
                    \Delta(\mu, \overline{\mu}) &= \frac{1}{2}\big\|\mu_{X}-\overline{\mu}_{X}\big\|_{1}\\
                    &=\frac{1}{2}\sum_{x \in \mathcal{X}}\Big|\mu_{X}(x)-\overline{\mu}_{X}(x)\Big|.
                \end{align}
            \end{definition}
                \begin{lemma}[Continuity bounds \cite{Winter_2016}]\label{lem:cont_bounds}
                    For distributions $\mu_{X, Y}$ and $\overline{\mu}_{X, Y}$ on the alphabet $\mathcal{X}\times \mathcal{Y}$, Pinsker's inequality relates the relative entropy and the $\ell^{1}$ norm as follows
                    \begin{equation}
                        D(\mu||\overline{\mu})\geq \frac{1}{2\ln 2}\big\|\mu-\overline{\mu}\big\|_{1}^{2}.
                    \end{equation}
                    If $\Delta(\mu, \overline{\mu}) \leq \delta$, then by the Fannes inequality
                    \begin{equation}
                        \left|H\left(X, Y\right)_{\mu} - H\left(X, Y\right)_{\overline{\mu}}\right| \leq \gamma\left( \left|\mathcal{X}\right|\cdot \left| \mathcal{Y}\right|-1, \delta \right),
                    \end{equation}
                    and by the Alicki-Fannes inequality
                    \begin{equation} \label{eqn:af}
                        \left|H\left(X|Y\right)_{\mu} - H\left(X|Y\right)_{\overline{\mu}}\right| \leq \gamma\left( \left|\mathcal{X}\right|, \delta \right),
                    \end{equation}
                    where $\gamma\left(d, \delta \right) = \delta\log_{2}\left(d\right)+h_{2}\left(\delta\right)$, and $h_{2}$ denotes the binary entropy function.
                \end{lemma}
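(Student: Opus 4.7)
The plan is to treat the three inequalities in order of increasing delicacy, reducing each to a short analytic or combinatorial estimate. Pinsker goes by data-processing reduction to the binary case, Fannes by a convex decomposition of $\mu - \overline{\mu}$, and Alicki--Fannes by a coupling that lets the chain rule absorb the conditioning variable. I expect the last step to be the main obstacle, because conditional entropy is not jointly concave in the joint distribution, so the mixture argument used for Fannes does not carry over.

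For Pinsker, I would let $A = \{(x,y) : \mu(x,y) \geq \overline{\mu}(x,y)\}$ and set $p = \mu(A)$, $q = \overline{\mu}(A)$. Data processing for KL divergence under the indicator $\mathbf{1}_{A}$ gives $D(\mu \| \overline{\mu}) \geq D_{2}(p\|q)$, where $D_{2}$ is the Bernoulli relative entropy, while $\|\mu - \overline{\mu}\|_{1} = 2(p-q)$ by construction of $A$. Reducing to the scalar bound $D_{2}(p\|q) \geq \tfrac{2}{\ln 2}(p-q)^{2}$, this follows by integrating the uniform estimate $\partial_{q}^{2} D_{2}(p\|q) = \tfrac{1}{q(1-q)\ln 2} \geq \tfrac{4}{\ln 2}$ twice in $q$ starting from the minimum at $q = p$.

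For Fannes, I would decompose the signed measure $\mu - \overline{\mu}$ into its positive and negative parts, each of total mass $\delta$ and supported on disjoint subsets of $\mathcal{X}\times \mathcal{Y}$, and express $\mu$ and $\overline{\mu}$ as convex combinations of a common reference distribution with two weight-$\delta$ perturbations. Careful use of the identity $H(\alpha p + (1-\alpha)q) = \alpha H(p) + (1-\alpha)H(q) + h_{2}(\alpha)$ valid for distributions on disjoint supports, together with the observation that the positive and negative supports of $\mu - \overline{\mu}$ together omit at least one symbol of $\mathcal{X}\times \mathcal{Y}$ (otherwise the two would coincide there), yields the bound $\gamma(|\mathcal{X}||\mathcal{Y}|-1, \delta)$, where the $-1$ is exactly this combinatorial refinement.

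The Alicki--Fannes inequality is the technical heart of the lemma. I would follow Winter's coupling argument: introduce an auxiliary binary label $Z$ and construct an extended distribution $\omega_{XYZ}$ whose two $Z$-conditional slices, combined with weights $(1-\delta, \delta)$, reproduce $\mu$ and $\overline{\mu}$ respectively (using the normalized positive and negative parts of $\mu - \overline{\mu}$ in the $\delta$-weighted component), and whose $Z$-marginal has entropy at most $h_{2}(\delta)$. Applying the chain rule $H(X|Y) = H(X|YZ) + I(X;Z|Y)$ at $\omega$, bounding $I(X;Z|Y) \leq H(Z) \leq h_{2}(\delta)$ and $H(X|YZ{=}1) \leq \log |\mathcal{X}|$ pointwise, and then relating $H(X|Y)_{\omega}$ to $H(X|Y)_{\mu}$ and $H(X|Y)_{\overline{\mu}}$ via concavity of conditional entropy under the $Z$-mixture, yields the sharp constant $\delta \log |\mathcal{X}| + h_{2}(\delta)$. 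The subtle point, and the main obstacle, is engineering the coupling so that the $Y$-alphabet does not enter the penalty at all; this is exactly Winter's refinement over a naive reduction to the joint Fannes bound, which would carry an extra factor of $|\mathcal{Y}|$ and be much too weak to apply to the conditional mutual informations that drive the main compression results of the paper.
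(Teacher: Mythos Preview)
The paper does not prove this lemma; it is stated in the preliminaries with a citation to Winter~\cite{Winter_2016} and used as a black box throughout. So there is no ``paper's own proof'' to compare against---you have supplied an argument where the authors simply quote the literature.

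Your sketches for Pinsker and for the Alicki--Fannes bound are the standard ones and are fine in outline; the coupling construction you describe for the conditional bound is exactly Winter's, and your identification of why the $|\mathcal{Y}|$ factor disappears is the right conceptual point.

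The Fannes sketch has a gap. The claim that the positive and negative supports of $\mu-\overline{\mu}$ together omit at least one symbol is false in general: the two parts are disjoint from each other, but their union can cover all of $\mathcal{X}\times\mathcal{Y}$. The sharp constant $\log(|\mathcal{X}||\mathcal{Y}|-1)$ (Audenaert's refinement) does not fall out of the mixture identity you quote; it comes from a direct optimization of the entropy difference under a total-variation constraint, whose extremizer is a point mass versus a distribution supported on the remaining $d-1$ symbols. Your decomposition would recover the weaker $\log(|\mathcal{X}||\mathcal{Y}|)$ form, which is already enough for every application in the paper, but not the stated bound with the $-1$.
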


       \subsection{Types and empirical distributions}
            \label{sec:typ_sets}

                \begin{definition}[Types and empirical distributions\cite{cover_thomas_2006}]
                Let $x^{n} \in \mathcal{X}^{n}$ and $y^{n}\in \mathcal{Y}^{n}$ be sequences consisting of $n$ symbols from the alphabets $\mathcal{X}$ and $\mathcal{Y}$ respectively.
                    Let $f\left(x|x^{n}\right)$ denote the number of occurrences of the symbol $x \in \mathcal{X}$ in the sequence $x^{n} \in \mathcal{X}^{n}$. The type of the sequence $x^{n}$ is the empirical distribution $t_{x^{n}}$, defined as
                    \begin{equation}
                        t_{x^{n}}\left(x\right) = \frac{f\left(x|x^{n}\right)}{n}.
                    \end{equation}
                    We can also define the joint empirical distribution of the sequences, where $f\left(x, y|x^{n}, y^{n}\right)$ denotes the number of occurrences of the pair $(x, y)\in \mathcal{X} \times \mathcal{Y}$ in the pair of sequences $(x^{n}, y^{n}) \in \mathcal{X}^{n} \times \mathcal{Y}^{n}$. The joint type of $x^{n}, y^{n}$ is the empirical distribution $t_{x^{n}, y^{n}}$, defined as
                    \begin{equation}
                        t_{x^{n}, y^{n}}\left(x, y\right) = \frac{f\left(x, y|x^{n}, y^{n} \right)}{n}.
                    \end{equation}
                    Note that
                    \begin{equation}
                        t_{x^{n}}\left(x\right) = \sum_{y \in \mathcal{Y}}t_{x^{n}, y^{n}}\left(x, y\right).
                    \end{equation}
                    For $t_{y^{n}}(y) > 0$, we will also define the conditional type $t_{x^{n}|y^{n}}$ as
                    \begin{equation}
                        t_{x^{n}|y^{n}}(x|y)=\frac{t_{x^{n}, y^{n}}\left(x, y\right)}{ t_{y^{n}}\left(y\right)}.
                    \end{equation}
                \end{definition}
                \begin{definition}[Type classes \cite{cover_thomas_2006}]
                Let $\mathcal{T}^{\mathcal{X}^{n}}$ be the set of all possible types for sequences in $\mathcal{X}^{n}$. For $\widetilde{t} \in \mathcal{T}^{\mathcal{X}^{n}}$, the type class $T^{X^{n}}_{\widetilde{t}}$ is defined as the set of sequences in $X^{n}$ that have the type $\widetilde{t}$
                    \begin{equation}
                    T^{X^{n}}_{\widetilde{t}} = \left\{x^{n} \in \mathcal{X}^{n}:\forall x  \in \mathcal{X}, t_{x^{n}}\left(x\right)=\widetilde{t}\left(x\right)\right\}.
                \end{equation}
                Let $\mathcal{T}^{\mathcal{X}^{n}, \mathcal{Y}^{n}}$ be the set of all possible types for sequences in $\mathcal{X}^{n}\times \mathcal{Y}^{n}$. We can define the joint type class $T^{X^{n}, Y^{n}}_{\widetilde{t}}$ for $\widetilde{t} \in \mathcal{T}^{\mathcal{X}^{n}, \mathcal{Y}^{n}}$ as
                \begin{align}
                    T^{X^{n},Y^{n}}_{\widetilde{t}}
                    = \big\{(x^{n}, y^{n}) \in \mathcal{X}^{n}\times\mathcal{Y}^{n} :\forall (x, y)  \in \mathcal{X} \times \mathcal{Y},  t_{x^{n}, y^{n}}\left(x, y\right)=\widetilde{t}\left(x, y\right)\big\}.
                \end{align}
                \end{definition}
                \begin{proposition}
                    The number of possible types for sequences in $\mathcal{X}^{n}$ is bound by
                    \begin{equation}
                        \left|\mathcal{T}^{\mathcal{X}^{n}}\right| = \binom{n+|\mathcal{X}|-1}{|\mathcal{X}|-1}\leq (n+1)^{|\mathcal{X}|}.
                    \end{equation}
                    Similarly for sequences in $\mathcal{X}^{n} \times \mathcal{Y}^{n} $
                      \begin{equation}
                        \left|\mathcal{T}^{\mathcal{X}^{n}, \mathcal{Y}^{n}}\right| = \binom{n+|\mathcal{X}|\cdot|\mathcal{Y}|-1}{|\mathcal{X}|\cdot|\mathcal{Y}|-1}\leq (n+1)^{|\mathcal{X}|\cdot|\mathcal{Y}|}.
                    \end{equation}
                \end{proposition}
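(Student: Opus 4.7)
The plan is to prove both bounds by a standard stars-and-bars counting argument, noting that a type on $\mathcal{X}^n$ is in bijective correspondence with the tuple of symbol counts $\bigl(f(x|x^n)\bigr)_{x\in\mathcal{X}}$, so characterizing types reduces to counting certain integer tuples.

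First, I would observe that specifying a type $t_{x^n}\in \mathcal{T}^{\mathcal{X}^n}$ is equivalent to specifying the counts $n_x := f(x|x^n) = n\,t_{x^n}(x)$ for each $x\in\mathcal{X}$, since the type is determined by and determines these counts. These counts are non-negative integers satisfying the linear constraint $\sum_{x\in\mathcal{X}} n_x = n$. Hence
\begin{equation}
\bigl|\mathcal{T}^{\mathcal{X}^n}\bigr| = \bigl|\{(n_x)_{x\in\mathcal{X}} \in \mathbb{Z}_{\geq 0}^{|\mathcal{X}|} : \textstyle\sum_{x} n_x = n\}\bigr|.
\end{equation}
The right-hand side is the classical stars-and-bars count $\binom{n+|\mathcal{X}|-1}{|\mathcal{X}|-1}$, obtained by placing $|\mathcal{X}|-1$ dividers among $n$ stars.

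Next, for the loose upper bound, I would forget the constraint $\sum_x n_x = n$ and overcount: each $n_x$ lies in $\{0,1,\dots,n\}$, giving at most $n+1$ choices per coordinate and hence at most $(n+1)^{|\mathcal{X}|}$ tuples in total. Since the set of valid tuples is a subset of this product, we get
\begin{equation}
\binom{n+|\mathcal{X}|-1}{|\mathcal{X}|-1} \leq (n+1)^{|\mathcal{X}|},
\end{equation}
which proves the first part.

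Finally, for the joint type case, the identical argument applies verbatim after replacing the alphabet $\mathcal{X}$ with the product alphabet $\mathcal{X}\times\mathcal{Y}$ of size $|\mathcal{X}|\cdot|\mathcal{Y}|$: a joint type $t_{x^n,y^n}$ is determined by the non-negative integer counts $f(x,y|x^n,y^n)$ summing to $n$, giving $\binom{n+|\mathcal{X}||\mathcal{Y}|-1}{|\mathcal{X}||\mathcal{Y}|-1}$ possibilities and the upper bound $(n+1)^{|\mathcal{X}|\cdot|\mathcal{Y}|}$ by the same overcounting. There is no genuine obstacle here; the only mild care is in correctly identifying a type with its count vector (so that no two distinct types are counted as one and no single type as two), which is immediate from the definition of $t_{x^n}$ via $f(x|x^n)$.
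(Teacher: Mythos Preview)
Your proof is correct and is the standard stars-and-bars argument for counting types; the paper does not supply its own proof of this proposition, treating it as a standard preliminary (cf.\ the reference to \cite{cover_thomas_2006}), so your argument is exactly what is expected here.
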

                \begin{proposition}
                \label{prop:card_typ_class}
                The cardinality of $T^{X^{n}}_{\widetilde{t}}$ is bound as follows
                    \begin{equation}
                        (n+1)^{-\left|\mathcal{X}\right|}2^{nH(X)_{\widetilde{t}}} \leq \left|T^{X^{n}}_{\widetilde{t}}\right| \leq 2^{nH(X)_{\widetilde{t}}}.
                    \end{equation}
                    Similarly for $T^{X^{n},Y^{n}}_{\widetilde{t}}$
                    \begin{equation}
                        (n+1)^{-\left|\mathcal{X}\right|\cdot\left|\mathcal{Y}\right|}2^{nH(X, Y)_{\widetilde{t}}} \leq \left|T^{X^{n},Y^{n}}_{\widetilde{t}}\right| \leq 2^{nH(X, Y)_{\widetilde{t}}}.
                    \end{equation}
                \end{proposition}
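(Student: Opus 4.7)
The plan is to prove both bounds by the standard method-of-types probability argument (cf. Cover--Thomas), which exploits the fact that under the i.i.d.\ distribution $\widetilde{t}^{n}$, every sequence in the type class $T^{X^{n}}_{\widetilde{t}}$ has exactly the same probability. The joint-type bound then follows by the very same argument applied to the product alphabet $\mathcal{X}\times\mathcal{Y}$, so I would prove the single-variable statement carefully and then indicate that the joint-variable statement is obtained verbatim by substituting $\mathcal{X}\times\mathcal{Y}$ for $\mathcal{X}$.

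First, for any fixed $x^{n}\in T^{X^{n}}_{\widetilde{t}}$, I would compute
\begin{equation}
\widetilde{t}^{n}(x^{n})=\prod_{x\in\mathcal{X}}\widetilde{t}(x)^{\,f(x|x^{n})}=\prod_{x\in\mathcal{X}}\widetilde{t}(x)^{\,n\widetilde{t}(x)}=2^{-nH(X)_{\widetilde{t}}},
\end{equation}
using the definition of the type and of Shannon entropy. Since the probabilities of the sequences in $T^{X^{n}}_{\widetilde{t}}$ are all equal and sum to at most $1$, the upper bound $\bigl|T^{X^{n}}_{\widetilde{t}}\bigr|\leq 2^{nH(X)_{\widetilde{t}}}$ is immediate.

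For the lower bound I would establish the standard ``type $\widetilde{t}$ is the most likely type under $\widetilde{t}^{n}$'' lemma: for every $t'\in\mathcal{T}^{\mathcal{X}^{n}}$,
\begin{equation}
\widetilde{t}^{n}\!\bigl(T^{X^{n}}_{t'}\bigr)\leq \widetilde{t}^{n}\!\bigl(T^{X^{n}}_{\widetilde{t}}\bigr).
\end{equation}
This is proved by comparing the ratio $\bigl|T^{X^{n}}_{\widetilde{t}}\bigr|/\bigl|T^{X^{n}}_{t'}\bigr|$ using the multinomial formula $|T^{X^{n}}_{t}|=n!/\prod_{x}(nt(x))!$ and applying the elementary bound $m!/k!\geq k^{\,m-k}$ coordinatewise; after cancellation the ratio of probabilities is at least $1$. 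Summing over all types and using $\bigl|\mathcal{T}^{\mathcal{X}^{n}}\bigr|\leq(n+1)^{|\mathcal{X}|}$ from the previous proposition yields
\begin{equation}
1=\sum_{t'}\widetilde{t}^{n}\!\bigl(T^{X^{n}}_{t'}\bigr)\leq (n+1)^{|\mathcal{X}|}\,\widetilde{t}^{n}\!\bigl(T^{X^{n}}_{\widetilde{t}}\bigr)=(n+1)^{|\mathcal{X}|}\,\bigl|T^{X^{n}}_{\widetilde{t}}\bigr|\cdot 2^{-nH(X)_{\widetilde{t}}},
\end{equation}
which rearranges to the claimed lower bound.

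The only mildly delicate step is the factorial-ratio manipulation showing that $\widetilde{t}$ maximizes $\widetilde{t}^{n}(T^{X^{n}}_{t'})$; I would handle it by writing the ratio as $\prod_{x}\bigl((nt'(x))!/(n\widetilde{t}(x))!\bigr)\widetilde{t}(x)^{n(\widetilde{t}(x)-t'(x))}$ and bounding each factor with $m!/k!\geq k^{\,m-k}$. Once this is in hand, the joint statement is obtained with no additional work: treat $(x,y)$ as a single symbol from the alphabet $\mathcal{X}\times\mathcal{Y}$, at which point the argument above gives $(n+1)^{-|\mathcal{X}||\mathcal{Y}|}2^{nH(X,Y)_{\widetilde{t}}}\leq\bigl|T^{X^{n},Y^{n}}_{\widetilde{t}}\bigr|\leq 2^{nH(X,Y)_{\widetilde{t}}}$, completing the proposition.
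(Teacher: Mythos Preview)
Your proof is correct and is precisely the standard method-of-types argument from Cover--Thomas; the paper does not supply its own proof of this proposition but simply cites it as a known result from that textbook, so your argument is exactly the one the reference points to.
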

                \begin{definition}[Typical sets \cite{cover_thomas_2006, bennett_devetak_harrow_shor_winter_2014}]
                For $\delta \geq 0$ and $\widetilde{t}$ a distribution on $\mathcal{X}^{n}, \mathcal{Y}^{n}$, we define the typical set of sequences $T^{X^{n}, Y^{n}}_{\widetilde{t}, \delta}$ as
                    \begin{equation}
                        T^{X^{n}, Y^{n}}_{\widetilde{t}, \delta} = \bigcup_{\substack{t \in \mathcal{T}^{\mathcal{X}^{n}, \mathcal{Y}^{n}}\\ \|t-\widetilde{t}\|_{1} \leq \delta}}T^{X^{n},Y^{n}}_{t}.
                    \end{equation}
                \end{definition}
                \begin{proposition}
                    For $\delta \geq 0$ and $\widetilde{t} \in \mathcal{T}^{\mathcal{X}^{n}, \mathcal{Y}^{n}}$,
                \begin{equation}
                    \widetilde{t}^{n}\left(T^{X^{n}, Y^{n}}_{\widetilde{t}, \delta}\right) \geq 1-(n+1)^{|\mathcal{X}|\cdot|\mathcal{Y}|}2^{-\frac{n\delta^{2}}{2\ln 2}}.
                \end{equation}
                \end{proposition}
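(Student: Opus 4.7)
The plan is to bound the probability of the complement of the typical set and then subtract from one. By definition, $\left(T^{X^{n}, Y^{n}}_{\widetilde{t}, \delta}\right)^{c}$ is the union of all joint type classes $T^{X^{n}, Y^{n}}_{t}$ whose type $t \in \mathcal{T}^{\mathcal{X}^{n}, \mathcal{Y}^{n}}$ satisfies $\|t - \widetilde{t}\|_{1} > \delta$. So I would estimate $\widetilde{t}^{n}\left(T^{X^{n}, Y^{n}}_{t}\right)$ for a single such type class, then sum over the (polynomially many) bad types.

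The key intermediate step is the standard identity that any sequence $(x^{n}, y^{n}) \in T^{X^{n}, Y^{n}}_{t}$ has probability exactly $2^{-n(H(X,Y)_{t} + D(t \| \widetilde{t}))}$ under $\widetilde{t}^{n}$, since the i.i.d. product collapses to $\prod_{(x,y)} \widetilde{t}(x,y)^{n t(x,y)}$. Combining this with the upper bound $|T^{X^{n}, Y^{n}}_{t}| \leq 2^{n H(X,Y)_{t}}$ from Proposition \ref{prop:card_typ_class}, I obtain
\begin{equation}
\widetilde{t}^{n}\left(T^{X^{n}, Y^{n}}_{t}\right) \leq 2^{-n D(t \| \widetilde{t})}.
\end{equation}
Then I invoke Pinsker's inequality from Lemma \ref{lem:cont_bounds}, which gives $D(t \| \widetilde{t}) \geq \frac{1}{2 \ln 2} \|t - \widetilde{t}\|_{1}^{2} > \frac{\delta^{2}}{2 \ln 2}$ for every bad type.

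To finish, I apply the union bound over the set of bad types, using the polynomial bound $|\mathcal{T}^{\mathcal{X}^{n}, \mathcal{Y}^{n}}| \leq (n+1)^{|\mathcal{X}| \cdot |\mathcal{Y}|}$ from the preceding proposition. This yields
\begin{equation}
\widetilde{t}^{n}\left(\left(T^{X^{n}, Y^{n}}_{\widetilde{t}, \delta}\right)^{c}\right) \leq \sum_{\substack{t : \|t-\widetilde{t}\|_{1} > \delta}} 2^{-n D(t \| \widetilde{t})} \leq (n+1)^{|\mathcal{X}| \cdot |\mathcal{Y}|} \cdot 2^{-\frac{n \delta^{2}}{2 \ln 2}},
\end{equation}
and passing to the complementary event gives the stated lower bound.

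The only step that is not purely mechanical is the probability-per-type-class identity, which requires recognizing that the i.i.d. measure $\widetilde{t}^{n}$ is constant on each type class and rewriting $\sum_{(x,y)} n t(x,y) \log \widetilde{t}(x,y)$ as $-n(H(X,Y)_{t} + D(t \| \widetilde{t}))$; this is a standard calculation in the method of types and should not pose a real obstacle. Everything else is a direct application of Pinsker's inequality and the polynomial type-counting bound already established earlier in the section.
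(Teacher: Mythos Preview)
Your proposal is correct and follows the standard method-of-types argument. The paper does not give an explicit proof of this particular proposition (it is stated as a preliminary fact), but the identical technique---computing the probability of a single type class as $2^{-n(H+D)}$, cancelling the entropy against the type-class cardinality bound, applying Pinsker's inequality, and then union-bounding over the polynomially many types---is exactly what the paper carries out in the Appendix for the conditional analogue, Proposition~\ref{prop:unit_prob}.
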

                        \begin{proposition}
                        \label{prop:card_joint_typ_set}
                        For $\delta \geq 0$ and $\widetilde{t} \in \mathcal{T}^{\mathcal{X}^{n}, \mathcal{Y}^{n}}$,
                            \begin{align}
                                2^{n\left(H(X, Y)_{\widetilde{t}}-\gamma\left( \left|\mathcal{X}\right|\cdot \left| \mathcal{Y}\right|-1, \delta \right)\right)-\left|\mathcal{X}\right|\cdot \left|\mathcal{Y}\right|\log_{2}\left(n+1\right)} \leq \left|T^{X^{n}, Y^{n}}_{\widetilde{t}, \delta} \right|
                                \\\leq  2^{n\left(H(X, Y)_{\widetilde{t}}+\gamma\left( \left|\mathcal{X}\right|\cdot \left| \mathcal{Y}\right|-1, \delta \right)\right)+\left|\mathcal{X}\right|\cdot \left|\mathcal{Y}\right|\log_{2}\left(n+1\right)}.
                            \end{align}
                        \end{proposition}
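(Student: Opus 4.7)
The plan is to prove the two bounds separately, both by reducing to the single-type-class bounds from Proposition~\ref{prop:card_typ_class} combined with the Fannes continuity bound from Lemma~\ref{lem:cont_bounds}.

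For the upper bound, I would decompose $T^{X^{n}, Y^{n}}_{\widetilde{t}, \delta}$ according to its definition as the union of type classes $T^{X^{n}, Y^{n}}_{t}$ over types $t \in \mathcal{T}^{\mathcal{X}^{n}, \mathcal{Y}^{n}}$ with $\|t - \widetilde{t}\|_{1} \leq \delta$, and bound its cardinality by the number of participating types times the maximum size of a single type class in the union. The earlier proposition gives at most $(n+1)^{|\mathcal{X}|\cdot|\mathcal{Y}|}$ types in total, which is exactly the source of the additive $|\mathcal{X}|\cdot|\mathcal{Y}|\log_{2}(n+1)$ term in the exponent. For each participating $t$, Proposition~\ref{prop:card_typ_class} gives $|T^{X^{n}, Y^{n}}_{t}| \leq 2^{nH(X,Y)_{t}}$, and the Fannes inequality from Lemma~\ref{lem:cont_bounds} (applied with $\Delta(t, \widetilde{t}) \leq \delta/2 \leq \delta$) yields $H(X,Y)_{t} \leq H(X,Y)_{\widetilde{t}} + \gamma(|\mathcal{X}|\cdot|\mathcal{Y}|-1, \delta)$. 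Combining these ingredients gives the stated upper bound.

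For the lower bound, since $\widetilde{t}$ is itself in $\mathcal{T}^{\mathcal{X}^{n}, \mathcal{Y}^{n}}$, it is a valid type for sequences of length $n$, so its type class $T^{X^{n}, Y^{n}}_{\widetilde{t}}$ is trivially contained in $T^{X^{n}, Y^{n}}_{\widetilde{t}, \delta}$ for any $\delta \geq 0$. Applying the lower bound half of Proposition~\ref{prop:card_typ_class} directly yields $|T^{X^{n}, Y^{n}}_{\widetilde{t}, \delta}| \geq (n+1)^{-|\mathcal{X}|\cdot|\mathcal{Y}|} 2^{nH(X,Y)_{\widetilde{t}}} = 2^{nH(X,Y)_{\widetilde{t}} - |\mathcal{X}|\cdot|\mathcal{Y}|\log_{2}(n+1)}$, which is at least the claimed lower bound because $\gamma(|\mathcal{X}|\cdot|\mathcal{Y}|-1, \delta) \geq 0$. (One could instead mirror the upper-bound argument by choosing any single type $t^{*}$ with $\|t^{*} - \widetilde{t}\|_{1} \leq \delta$ and applying Fannes in the other direction, but the trivial inclusion is tighter.)

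There is no real obstacle here; the content of the proposition is essentially just the combination of three facts already available: the per-type-class sandwich bound, the polynomial bound on the number of types, and the continuity of joint entropy in total variation distance. The one minor point requiring care is that the typical set is defined via the $\ell^{1}$ norm while Lemma~\ref{lem:cont_bounds} is stated in terms of the total variation distance; this is absorbed into using the looser $\gamma(\cdot, \delta)$ in the statement rather than the sharper $\gamma(\cdot, \delta/2)$ one would obtain from a direct application of Fannes.
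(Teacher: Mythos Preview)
Your proposal is correct and matches the paper's proof: both decompose the typical set as a disjoint union of type classes, apply Proposition~\ref{prop:card_typ_class} termwise, use the Fannes bound from Lemma~\ref{lem:cont_bounds} to replace $H(X,Y)_t$ by $H(X,Y)_{\widetilde{t}} \pm \gamma$, and handle the type count via $(n+1)^{|\mathcal{X}|\cdot|\mathcal{Y}|}$. The only minor difference is that for the lower bound the paper applies Fannes symmetrically before dropping to a single term, whereas you take the cleaner route of directly including the type class of $\widetilde{t}$ and noting $\gamma \geq 0$; both are valid and yield the stated bound.
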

                        A complete proof of Proposition~\ref{prop:card_joint_typ_set} is given in the \hyperref[proof:card_joint_typ_set]{Appendix}.

       \subsection{Interactive protocols and information cost}
            \begin{figure*}
            \center
            \includegraphics[width=\textwidth]{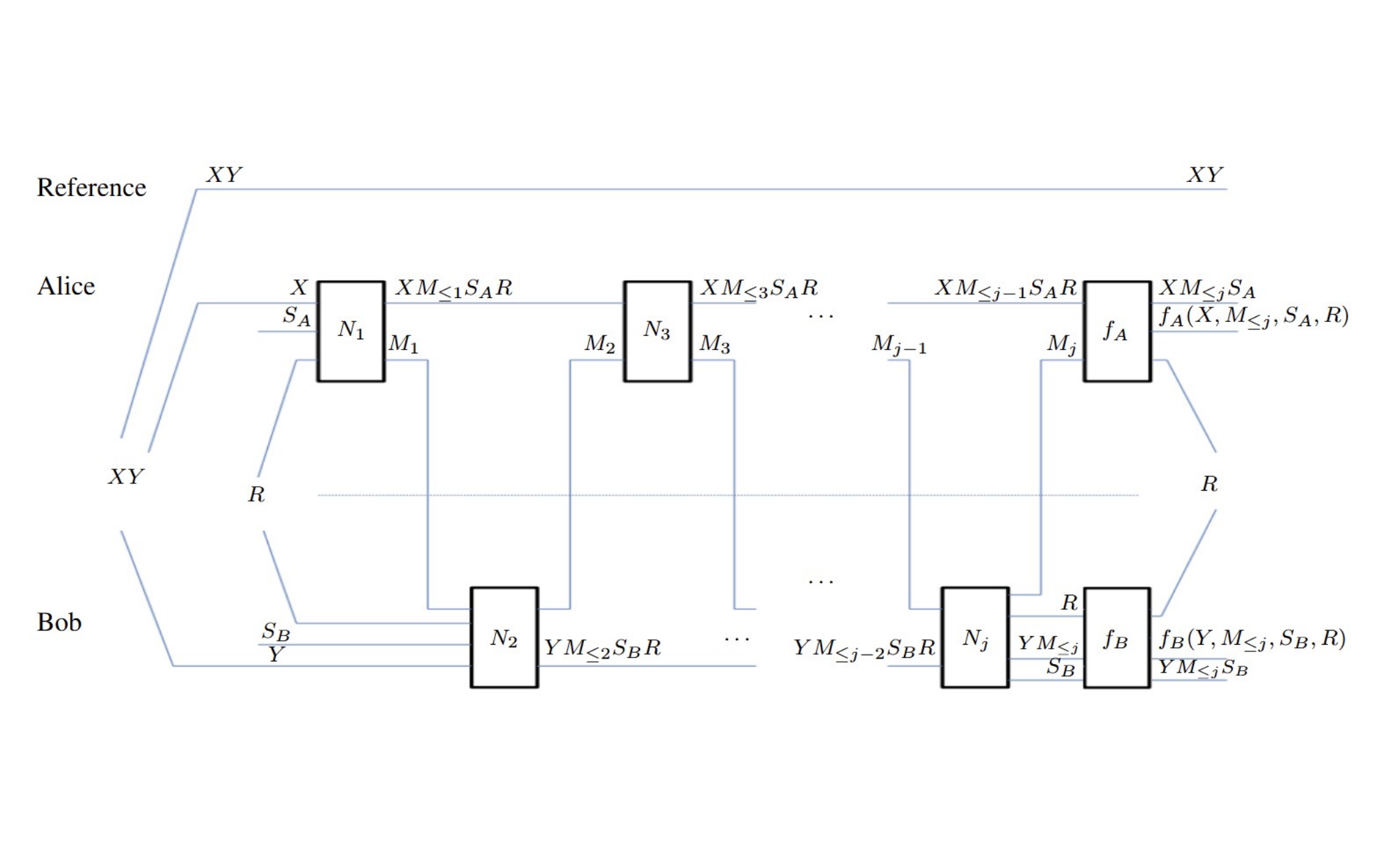}
            \vspace{-10pt}
              \caption{Depiction of a protocol in the interactive model, adapted from the long version of~\cite[Figure 1]{Tou14a} to the particular case of interactive classical communication. Alice's input register is $X$, while Bob's input register is $Y$. They have access to pre-shared randomness $R$, as well as private randomness $S_A$ and $S_B$, respectively. Alice's output registers are her input $X$ together with the message transcript $M_{\leq j} = M_1 M_2 \cdots M_j$ along with the shared ($R$) and private ($S_A$) randomness, while Bob's output registers are his input $Y$ together with the message transcript $M_{\leq j} = M_1 M_2 \cdots M_j$ along with the shared ($R$) and private ($S_B$) randomness. A copy of the joint input $XY$ of Alice and Bob is maintained in  an inaccessible reference register. The $M_i$ registers are communicated between Alice and Bob in each round, who maintain local copies upon preparations and receptions of the messages in each round. Messages $M_i$ are arbitrary apart from the fact that they are locally prepared on Alice's side for odd $i$ and on Bob's side for even $i$. Depicted here for an even number of rounds $j$, hence with Bob sending the final message; a final operation,  on Alice and Bob's side ($f_{A}$ and $f_{B}$ respectively) generates their final outputs to the protocol.}
              \label{fig:int_mod}
            \end{figure*}

                \begin{definition}[Interactive communication protocols \cite{braverman_rao}]
                    Let $\mathcal{X} \times \mathcal{Y}$ denote the set of possible pairs of inputs given to two players Alice and Bob. An interactive communication protocol $\pi$ can be viewed as a rooted tree with the following structure:
                    \begin{itemize}
                        \item Each non-leaf node of the tree is owned by either Alice or Bob.
                        \item All of the children of each non-leaf node, that is owned by a given player, are owned by the other player.
                        \item A prefix free labeling is used to label each child node in such a way that no child has a label that is a prefix of another child.
                        \item Each node is associated with a function mapping $\mathcal{X}$ to distributions on children of the node and a function mapping $\mathcal{Y}$ to distributions on the children of the node.
                        \item The leaves of the tree are labeled by output values.
                        \item On input $(x, y) \in \mathcal{X} \times \mathcal{Y}$:
                        \begin{enumerate}
                            \item Let $v$ be the root of the protocol tree.
                            \item If $v$ is a leaf node, the protocol ends and the value of $v$ is outputted. Otherwise the player who owns $v$ samples a child of $v$ according to the distribution associated with their input for $v$ and sends the label of the node to the other player to indicate which child was sampled.
                            \item Set $v$ to be the newly sampled node and return to the previous step.
                        \end{enumerate}
                    \end{itemize}
                \end{definition}

            Let $\pi(x, y)$ denote all of the messages exchanged as well as the shared randomness that was sampled during the execution of the protocol $\pi$ on inputs $(x, y)$. This is the transcript of the protocol. $\Pi(X, Y)$ or just $\Pi$ denotes the random variable of the transcript when $x$ and $y$ are also random variables. Here we will interpret the model given by Ref.\cite{braverman_rao} for interactive communication as communication across noisy channels with feedback, as depicted in Figure~\ref{fig:int_mod}.
            \begin{definition}[Protocol transcript and outputs]
                Let $\pi^{n}$ denote the $j$ round interactive protocol resulting from executing $n$ identical and independent instances of the protocol $\pi$ on inputs $(x^{n}, y^{n}) \in \mathcal{X}^{n} \times \mathcal{Y}^{n}$, where
                \begin{align}
                    x^{n} &= x_{1},x_{2},...,x_{n},\\
                    y^{n} &= y_{1},y_{2},...,y_{n}.
                \end{align}
                The transcript $\pi^{n}(x^{n}, y^{n})$ consists of the following messages
                \begin{align}
                    m^{n}_{1} &= m_{11}, m_{12},...,m_{1n},\\
                    m^{n}_{2} &= m_{21}, m_{22},...,m_{2n},\\
                    ...\\
                    m^{n}_{j} &= m_{j1}, m_{j2},...,m_{jn}.
                \end{align}
            Let $p$ denote the probability of obtaining the transcript $\Pi$ from the protocol $\pi$, the probability associated with the transcript of $\pi^{n}$ is defined as
                \begin{align}
                    p^{n}(m^{n}_{1}, &m^{n}_{2}, ...,m^{n}_{j}|x^{n}, y^{n})\\
                    &=p(m_{11}, m_{21}, ...,m_{j1}|x_{1}, y_{1})
                    \cdot p(m_{12}, m_{22}, ...,m_{j2}|x_{2}, y_{2})
                    \cdot...\cdot p(m_{1n}, m_{2n}, ...,m_{jn}|x_{n}, y_{n}).
                \end{align}
                Alice and Bob are each equipped with a function $f_{A}$ and $f_{B}$ respectively, that each takes the players input $x^{n}$ or $y^{n}$, the protocol transcript $\pi^{n}(x^{n}, y^{n})$, and all of the local randomness $S_{A}$ and $S_{B}$ respectively and shared randomness $R$ used throughout the entire protocol, to generate the players final outputs. At the end of the protocol, the players each output
                \begin{align}
                    f_{A}\left(x^{n}, \pi(x^{n}, y^{n}), S_{A}, R\right)\\
                    f_{B}\left(y^{n}, \pi(x^{n}, y^{n}), S_{B}, R \right)
                \end{align}
                respectively.
            \end{definition}
            \begin{definition}[Simulating interactive protocols]
                    \label{def:sim_int}
                        Let $\pi^{n}$ be an interactive communication protocol that takes inputs from $\mathcal{X}^{n} \times \mathcal{Y}^{n}$, and fix a parameter $\epsilon>0$. A protocol $\hat{\pi}^{n}$ is said to simulate the protocol $\pi^{n}$ with error $\epsilon$ if for every pair of inputs $(x^{n}, y^{n}) \in \mathcal{X}^{n}\times\mathcal{Y}^{n}$, the output of the protocol $\hat{\pi}^{n}$ is $\epsilon$-close in total variation distance to the transcript of the protocol $\pi^{n}$. That is, if $p^{n}$ denotes the probability of obtaining transcript $m^{n}_{\leq j}$ from the protocol $\pi^{n}$ and $\widetilde{p^{n}}$ denotes the probability of obtaining $m^{n}_{\leq j}$ as an output from $\hat{\pi}^{n}$, then for every $(x^{n}, y^{n}) \in \mathcal{X}^{n}\times\mathcal{Y}^{n}$,
                \begin{equation}
                    \left\|p^{n}(m^{n}_{\leq j}|x^{n}, y^{n})-\widetilde{p^{n}}(m^{n}_{\leq j}|x^{n}, y^{n})\right\|_{1} \leq \epsilon.
                \end{equation}
            \end{definition}
                \begin{definition}[Information complexity]
                    Given alphabets $\mathcal{X}$ and $\mathcal{Y}$, a distribution $\mu_{X, Y}$ on $\mathcal{X} \times \mathcal{Y}$, and a communication protocol $\pi$ that takes inputs from $\mathcal{X} \times \mathcal{Y}$ sampled according to $\mu_{X, Y}$, the information complexity of $\pi$ is defined as 
                    \begin{equation}
                        IC_{\mu}(\pi) = I(\Pi;X|Y)+I(\Pi;Y|X).
                    \end{equation}
                    The prior-free information complexity of $\pi$ is defined as
                    \begin{equation}
                        IC(\pi) = \max_{\mu} IC_{\mu}(\pi).
                    \end{equation}
                \end{definition}
                \begin{definition}[Amortized communication complexity]
                \label{def:acc}
                    Let $CC_{\mu}(\pi)$ denote the maximum number of bits that can be exchanged during the execution of the protocol $\pi$ on inputs sampled from $\mu$, and let
                    \begin{equation}
                        CC(\pi) = \max_{\mu}CC_{\mu}(\pi).
                    \end{equation}

                    Given alphabets $\mathcal{X}$ and $\mathcal{Y}$, let $\pi^{n}$ be an interactive communication protocol that takes inputs from $\mathcal{X}^{n} \times \mathcal{Y}^{n}$. Let $\hat{\pi}^{n}$ denote a protocol that simulates the protocol $\pi^{n}$ with an asymptotically vanishing probability of error. The prior-free, amortized communication complexity of simulating $\pi^{n}$ is
                    \begin{equation}
                        \lim_{n\rightarrow \infty}\min_{\hat{\pi}^{n}}\frac{CC(\hat{\pi}^{n})}{n}.
                    \end{equation}
                \end{definition}

       \subsection{Concentration inequalities}

                \begin{lemma}[Hoeffding's inequality \cite{hoeffding_1963}] \label{lem:set_inequality}
                    Let $U$ and $V$ be random subsets of a set $Q$ of size $q$ where $|U| = u$, and $|V| = v$ and $0 < u \leq v < n$. Then $\mu = \mathbb{E}\left [ \left|U \cap V \right| \right] = \frac{uv}{q}$, and $\Pr\left[\left| U \cap V \right|\leq (1-\epsilon) \mu \right] \leq e^{-\frac{\mu \epsilon^{2}}{2}}.$
                \end{lemma}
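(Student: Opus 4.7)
The plan is to reduce the statement to the standard multiplicative Chernoff bound for a binomial random variable, via Hoeffding's classical observation about sampling without replacement. First, I would condition on $V$: since the distribution of $|U \cap V|$ depends on $V$ only through its size, I may fix any particular subset $V_{0} \subseteq Q$ with $|V_{0}| = v$. Conditioned on $V = V_{0}$, the random variable $|U \cap V_{0}|$ counts the number of ``marked'' elements (those in $V_{0}$) when drawing $u$ elements without replacement from a $q$-element ground set, and therefore follows a hypergeometric distribution with parameters $(q, v, u)$ and mean $uv/q = \mu$, establishing the expectation part of the claim.

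For the tail bound, I would invoke Hoeffding's reduction (Theorem~4 of his 1963 paper), which asserts that for any convex function $f : \mathbb{R} \to \mathbb{R}$, $\mathbb{E}[f(S_{\mathrm{hyp}})] \leq \mathbb{E}[f(S_{\mathrm{bin}})]$, where $S_{\mathrm{hyp}}$ is the hypergeometric variable above and $S_{\mathrm{bin}} \sim \mathrm{Binomial}(u, v/q)$ is its ``with-replacement'' analogue with the same mean $\mu$. Applying this with $f(x) = e^{-tx}$ for $t > 0$ shows that the lower-tail moment generating function of $|U \cap V_{0}|$ is dominated by that of $S_{\mathrm{bin}}$.

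Then I would finish by applying the standard multiplicative Chernoff lower-tail bound to $S_{\mathrm{bin}}$: for $\epsilon \in (0,1)$, $\Pr[S_{\mathrm{bin}} \leq (1-\epsilon)\mu] \leq e^{-\mu\epsilon^{2}/2}$. The MGF domination transfers this bound to $|U \cap V_{0}|$, and since the resulting conditional estimate does not depend on the particular choice of $V_{0}$, taking the outer expectation over $V$ yields the unconditional bound stated in the lemma. The only non-routine ingredient is Hoeffding's reduction to the with-replacement case; one could alternatively argue via the negative association of the indicators $\mathbf{1}[q \in U]$ (which lets the standard Chernoff proof go through directly), but the hypergeometric-to-binomial comparison is the most classical path and is what justifies attaching Hoeffding's name to the statement.
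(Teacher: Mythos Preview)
Your proof sketch is correct and follows the standard route to this classical result: condition on $V$ to reduce $|U\cap V|$ to a hypergeometric random variable, invoke Hoeffding's convexity reduction (or negative association) to pass to a binomial with the same mean, and apply the multiplicative Chernoff lower-tail bound. There is nothing to compare, however, because the paper does not prove this lemma at all; it is listed in the preliminaries as a cited concentration inequality from \cite{hoeffding_1963} and used as a black box in the analysis of the simulation protocols. So your write-up supplies strictly more than the paper does here.
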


                \begin{lemma}[Chernoff bound \cite{hoeffding_1963}]
                    Let $X_{1},...X_{n}$ be independent Poisson trials such that ${0\leq X_{i} \leq 1}$ for $i=1, ..., n$, and let $X = \frac{1}{n}\sum_{i=1}^{n}X_{i}$ and $\mu = \mathbb{E}[X]$. Then for $0<\delta< 1-\mu$, it holds that $\Pr\left[\left|X-\mu\right| \geq \delta\right] \leq e^{-2n\delta^{2}}$.
                \end{lemma}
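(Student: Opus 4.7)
The plan is to prove this via the standard Cram\'er--Chernoff exponential moment method, closely following Hoeffding's original argument. By symmetry it suffices to bound one tail, say $\Pr[X - \mu \geq \delta]$, and then recover the two-sided statement (noting that the cleanest form carries an absorbed factor of $2$; here I focus on the single-tail exponent $e^{-2n\delta^{2}}$, since the two-sided version is obtained by applying the same argument to $-X_i$ and union bounding). Writing $S_n = \sum_{i=1}^{n}(X_i - \mu_i)$ with $\mu_i = \mathbb{E}[X_i]$, for any parameter $s > 0$ Markov's inequality applied to $e^{s S_n}$ gives
\begin{equation}
\Pr[X - \mu \geq \delta] \;=\; \Pr\!\left[e^{s S_n} \geq e^{s n \delta}\right] \;\leq\; e^{-s n \delta}\, \mathbb{E}\!\left[e^{s S_n}\right].
\end{equation}

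Next I would use the independence of the $X_i$ to factorize the moment generating function as
\begin{equation}
\mathbb{E}\!\left[e^{s S_n}\right] \;=\; \prod_{i=1}^{n} \mathbb{E}\!\left[e^{s(X_i - \mu_i)}\right],
\end{equation}
and then bound each factor using Hoeffding's lemma: for a random variable $Z$ supported on $[a,b]$ with $\mathbb{E}[Z] = 0$, one has $\mathbb{E}[e^{s Z}] \leq e^{s^{2}(b-a)^{2}/8}$. Since each centred variable $X_i - \mu_i$ lies in an interval of length $1$ (because $X_i \in [0,1]$), this yields $\mathbb{E}[e^{s(X_i - \mu_i)}] \leq e^{s^{2}/8}$, and hence
\begin{equation}
\Pr[X - \mu \geq \delta] \;\leq\; e^{-s n \delta + n s^{2}/8}.
\end{equation}
Optimizing the right-hand side in $s$ (the minimum is attained at $s = 4\delta$) gives the exponent $-2 n \delta^{2}$, as desired. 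Repeating with the sequence $(-X_i)$ in place of $(X_i)$ handles the lower tail identically, and a union bound (or the single-tail interpretation of the statement) completes the argument.

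The main obstacle is Hoeffding's lemma itself, which is the one genuinely non-trivial ingredient. I would prove it by the standard convexity trick: write each $Z \in [a,b]$ as a convex combination $Z = \tfrac{b - Z}{b-a}\,a + \tfrac{Z - a}{b-a}\,b$, apply $e^{s\cdot}$ using convexity, take expectations (using $\mathbb{E}[Z]=0$), and reduce to bounding $\ln\!\bigl(\tfrac{b}{b-a}e^{sa} - \tfrac{a}{b-a}e^{sb}\bigr)$ by a second-order Taylor expansion at $s=0$, where the second derivative of this log-MGF is the variance of a $[a,b]$-supported distribution and hence at most $(b-a)^{2}/4$. All remaining steps (Markov, factorization, optimization in $s$) are routine given this lemma.
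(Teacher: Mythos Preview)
Your proof is the standard and correct Cram\'er--Chernoff argument via Hoeffding's lemma. Note, however, that the paper does not actually supply a proof of this lemma: it is stated as a cited preliminary result from \cite{hoeffding_1963} and used as a black box throughout (e.g.\ in the derandomization claims and in the analysis of Lemma~\ref{lem:estimating}). So there is nothing in the paper to compare your argument against; you have simply filled in the textbook derivation that the authors chose to omit.

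One minor point you already flagged is worth making explicit: the two-sided bound as literally stated, $\Pr[|X-\mu|\geq\delta]\leq e^{-2n\delta^2}$, is missing the customary factor of $2$ from the union bound over the two tails. Your one-tail argument gives $\Pr[X-\mu\geq\delta]\leq e^{-2n\delta^2}$ exactly, and the same for the lower tail, so the honest two-sided conclusion is $2e^{-2n\delta^2}$. This is a cosmetic slip in the paper's statement rather than a flaw in your reasoning, and the extra constant is immaterial for every application in the paper.
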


    \section{Joint Type Estimation} \label{sec:estimating}
    \subsection{Estimating the Joint Empirical Distribution}

            In this section we will show that Alice and Bob can obtain a reliable estimate of the joint empirical distribution of their sequences $x^{n}$ and $y^{n}$. We show that there is a high probability that the estimate will be close in total variation distance to the true empirical distribution of the sequences.
            \begin{lemma} \label{lem:estimating}
                Let $\mathcal{X}$ and $\mathcal{Y}$ be finite alphabets, and let $n,m \in \mathbb{N}$, with $m < n$. For each $\delta > 0$,  there exists a 2-round protocol that uses $m \log\left(|\mathcal{X} | \cdot |\mathcal{Y}| \right)$ bits of noiseless communication and $m \log\left(n\right)$ bits of shared randomness, such that for any input $(x^{n}, y^{n}) \in \mathcal{X}^{n} \times \mathcal{Y}^{n}$, Alice and Bob both obtain an estimate $\tilde{t}$ of the true empirical distribution $t_{x^{n}, y^{n}}$, and
                \begin{equation}\label{eqn:t_bar_prob}
                    \Pr\left[\forall (x, y) \in \mathcal{X} \times \mathcal{Y}:\left| \widetilde{t}\left(x, y\right)-  t_{x^{n}, y^{n}}\left(x, y\right) \right| \leq \delta\right]
                    \geq 1- 2^{-2m\delta^{2}\log e+\log|\mathcal{X}|\cdot|\mathcal{Y}|+1}.
                \end{equation}
            \end{lemma}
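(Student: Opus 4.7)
The plan is to use the shared randomness to jointly sample the same $m$ indices on both sides, have each player send the symbols sitting at those indices, and take the empirical distribution of the exchanged samples as the estimate $\widetilde{t}$.

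\textbf{Protocol.} First, interpret the $m\log n$ bits of shared randomness as $m$ i.i.d.\ uniform indices $I_1,\ldots,I_m \in \{1,\ldots,n\}$, accessible to both Alice and Bob. In round 1, Alice sends $(x_{I_1},\ldots,x_{I_m})$, costing $m\log|\mathcal{X}|$ bits. In round 2, Bob sends $(y_{I_1},\ldots,y_{I_m})$, costing $m\log|\mathcal{Y}|$ bits. The total communication is $m\log(|\mathcal{X}|\cdot|\mathcal{Y}|)$, as required. After the exchange, both players hold the sampled pairs $\{(x_{I_k},y_{I_k})\}_{k=1}^m$ and can locally compute the empirical distribution
\begin{equation}
\widetilde{t}(x,y) \;=\; \frac{1}{m}\sum_{k=1}^{m}\mathbb{1}\!\left[(x_{I_k},y_{I_k})=(x,y)\right].
\end{equation}

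\textbf{Analysis.} Fix $(x,y)\in\mathcal{X}\times\mathcal{Y}$ and set $Z_k=\mathbb{1}[(x_{I_k},y_{I_k})=(x,y)]$. Because $I_k$ is uniform on $\{1,\ldots,n\}$ and the $I_k$ are independent, the $Z_k$ are i.i.d.\ Bernoulli random variables with mean exactly $t_{x^n,y^n}(x,y)$, by the definition of the joint type. So $\widetilde{t}(x,y)=\frac{1}{m}\sum_{k=1}^m Z_k$ is an unbiased empirical average of independent $[0,1]$-valued random variables. Applying the Chernoff/Hoeffding bound stated earlier in the paper, for every $\delta>0$,
\begin{equation}
\Pr\!\left[\,\big|\widetilde{t}(x,y)-t_{x^n,y^n}(x,y)\big|\geq\delta\,\right] \;\leq\; 2\,e^{-2m\delta^{2}} \;=\; 2^{-2m\delta^{2}\log e + 1}.
\end{equation}

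\textbf{Union bound.} A union bound over the $|\mathcal{X}|\cdot|\mathcal{Y}|$ pairs $(x,y)$ yields
\begin{equation}
\Pr\!\left[\exists(x,y):\big|\widetilde{t}(x,y)-t_{x^n,y^n}(x,y)\big|> \delta\right] \;\leq\; 2^{-2m\delta^{2}\log e + \log(|\mathcal{X}|\cdot|\mathcal{Y}|)+1},
\end{equation}
which is precisely the complement of \eqref{eqn:t_bar_prob}. The protocol uses exactly $m\log(|\mathcal{X}|\cdot|\mathcal{Y}|)$ bits of communication (split over two rounds) and $m\log n$ bits of shared randomness, completing the proof.

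There is no real obstacle here: the only design choice is using shared randomness to synchronize the sampled indices so that each pair $(x_{I_k},y_{I_k})$ is drawn from the true joint type rather than from the product of marginals; without it, Alice and Bob would each sample independently and could only estimate marginal types. Once the sampling is synchronized, the estimator is unbiased and the concentration follows immediately from the Chernoff inequality combined with a trivial union bound.
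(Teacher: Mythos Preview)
Your proof is correct and follows essentially the same approach as the paper: shared randomness is used to sample $m$ i.i.d.\ uniform coordinates (with replacement), the players exchange the symbols at those coordinates, take the empirical distribution of the $m$ sampled pairs as $\widetilde{t}$, and the bound follows from Hoeffding's inequality applied to each pair $(x,y)$ together with a union bound over $\mathcal{X}\times\mathcal{Y}$. The only cosmetic difference is that the paper phrases the estimator via a count $c(x,y)$ rather than a sum of indicators, which is the same thing.
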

            \begin{proof}
                The following protocol will produce an estimate $\tilde{t}$ of $t_{x^{n}, y^{n}}$:
                \begin{enumerate}
                    \item Alice and Bob use the shared randomness to sample $m$ coordinates with replacement along their sequences, and they each determine the string formed by the symbols found at these coordinates.
                    \item Alice and Bob exchange these strings using $m \log\left(|\mathcal{X} | \cdot |\mathcal{Y}| \right)$ bits of noiseless communication.
                    \item Let $c\left(x, y\right)$ denote the number of times the pair $\left(x, y\right)$ occurs in the exchanged sequence symbols. The estimated empirical distribution is then 
                    \begin{equation}\label{eqn:t_bar_def}
                        \begin{array}{lr}
                            \tilde{t}\left(x, y\right) =\frac{c\left(x, y\right)}{m}.
                        \end{array}
                    \end{equation}
                \end{enumerate}  
                
                The analysis for the claimed bounds follows through a standard Chernoff bound argument, details of the proof can be found in the \hyperref[proof:estimating]{Appendix}.
             \end{proof}

    \subsection{Typical Sets of the Estimated Empirical Distribution}

In this section, we will develop the machinery of the method of types which will allow us to work effectively with an estimated type $\widetilde{t} \in \mathcal{T}^{\mathcal{X}^{n}, \mathcal{Y}^{n}}$, as opposed to a known type, for sequences in $\mathcal{X}^{n} \times \mathcal{Y}^{n}$.

\subsubsection{Conditional type classes}
             \begin{definition}\label{def:cond_typ_class}
                     The type class $T^{X^{n}\backslash Y^{n}}_{\widetilde{t}}$ is defined as
                    \begin{equation}
                        T^{X^{n}\backslash Y^{n}}_{\widetilde{t}} = \Big\{x^{n}\in \mathcal{X}^{n}: \exists y^{n} \in \mathcal{Y}^{n} \mbox{ s.t. } (x^{n}, y^{n}) \in T^{X^{n}, Y^{n}}_{\widetilde{t}} \Big\}.
                    \end{equation}
                    \end{definition}
                    \begin{claim} 
                    \label{clm:marg_cond_typ_class}
                    The set $T^{X^{n}\backslash Y^{n}}_{\widetilde{t}}$ is equivalent to the marginal type class $T^{X^{n}}_{\widetilde{t}}$
                    \begin{equation}
                        T^{X^{n}\backslash Y^{n}}_{\widetilde{t}} = T^{X^{n}}_{\widetilde{t}}.
                    \end{equation}
                    \end{claim}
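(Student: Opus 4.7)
The claim asserts the equality of two subsets of $\mathcal{X}^{n}$, so the plan is to prove it by a standard double-inclusion argument. The forward inclusion $T^{X^{n}\backslash Y^{n}}_{\widetilde{t}} \subseteq T^{X^{n}}_{\widetilde{t}}$ reduces to observing that the marginal of a joint type is a marginal type; the reverse inclusion $T^{X^{n}}_{\widetilde{t}} \subseteq T^{X^{n}\backslash Y^{n}}_{\widetilde{t}}$ requires an explicit combinatorial construction of a compatible $y^{n}$. Throughout, I assume (as the notation $\widetilde{t}\in\mathcal{T}^{\mathcal{X}^{n},\mathcal{Y}^{n}}$ implicitly demands) that every $n\widetilde{t}(x,y)$ is a non-negative integer, so that the joint type class $T^{X^{n},Y^{n}}_{\widetilde{t}}$ is non-empty.

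For the forward direction, fix $x^{n} \in T^{X^{n}\backslash Y^{n}}_{\widetilde{t}}$. By definition there exists some $y^{n}\in\mathcal{Y}^{n}$ with $(x^{n},y^{n})\in T^{X^{n},Y^{n}}_{\widetilde{t}}$, which gives $t_{x^{n},y^{n}}(x,y)=\widetilde{t}(x,y)$ for every $(x,y)$. Summing over $y\in\mathcal{Y}$ and using the identity $t_{x^{n}}(x)=\sum_{y}t_{x^{n},y^{n}}(x,y)$ from the definition of types, one concludes $t_{x^{n}}(x)=\widetilde{t}(x)$ for every $x\in\mathcal{X}$, i.e., $x^{n}\in T^{X^{n}}_{\widetilde{t}}$.

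For the reverse direction, fix $x^{n}\in T^{X^{n}}_{\widetilde{t}}$, so $t_{x^{n}}(x)=\widetilde{t}(x)=\sum_{y}\widetilde{t}(x,y)$ for every $x$. The goal is to construct a companion sequence $y^{n}\in\mathcal{Y}^{n}$ so that the resulting pair has joint type exactly $\widetilde{t}$. For each symbol $x\in\mathcal{X}$, let $P_{x}\subseteq\{1,\dots,n\}$ be the set of coordinates where $x_{i}=x$; by hypothesis $|P_{x}|=n\widetilde{t}(x)$. Partition $P_{x}$ arbitrarily into disjoint blocks $\{B_{x,y}\}_{y\in\mathcal{Y}}$ with $|B_{x,y}|=n\widetilde{t}(x,y)$; this is possible because the sizes are non-negative integers that sum to $n\widetilde{t}(x)=|P_{x}|$. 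Define $y^{n}$ by setting $y_{i}=y$ whenever $i\in B_{x,y}$. Then for every pair $(x,y)$ the number of coordinates with $(x_{i},y_{i})=(x,y)$ is exactly $|B_{x,y}|=n\widetilde{t}(x,y)$, so $t_{x^{n},y^{n}}=\widetilde{t}$ and hence $x^{n}\in T^{X^{n}\backslash Y^{n}}_{\widetilde{t}}$.

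The only step that requires a moment of care is ensuring the construction of $y^{n}$ is actually feasible, and this is guaranteed precisely by the assumption that $\widetilde{t}$ is a bona fide type for sequences of length $n$; without the integrality of the $n\widetilde{t}(x,y)$, the partition need not exist and the reverse inclusion would fail. Beyond that, everything is straightforward bookkeeping with types, so no genuine obstacle is expected.
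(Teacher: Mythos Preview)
Your proof is correct and follows essentially the same double-inclusion approach as the paper's own proof. In fact, your argument for the reverse inclusion is more complete: the paper merely asserts that ``since the type of $x^{n}$ is exactly the $\mathcal{X}$ marginal of the type $\widetilde{t}$, there must exist a sequence $y^{n}\in\mathcal{Y}^{n}$ such that $(x^{n},y^{n})\in T^{X^{n},Y^{n}}_{\widetilde{t}}$,'' whereas you supply the explicit construction (partitioning the position sets $P_{x}$ into blocks of size $n\widetilde{t}(x,y)$) that actually justifies this existence claim.
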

                    A complete proof of Claim~\ref{clm:marg_cond_typ_class} is given in the \hyperref[proof:marg_cond_typ_class]{Appendix}.
                    \begin{definition}
                     For $y^{n} \in T^{Y^{n}}_{\widetilde{t}}$, the conditional type class $ T^{X^{n}| y^{n}}_{\widetilde{t}}$ is defined as
                    \begin{equation}
                        T^{X^{n}| y^{n}}_{\widetilde{t}} = \Big\{x^{n}\in \mathcal{X}^{n} :  (x^{n}, y^{n}) \in T^{X^{n},Y^{n}}_{\widetilde{t}} \Big\}.
                    \end{equation}
                \end{definition}
                \begin{claim} The joint type class $T_{\widetilde{t}} ^{X^{n}, Y^{n}}$ can be partitioned into disjoint subsets that each form a conditional type class
                \label{clm:cond_typ_class}
                    \begin{equation}
                        T_{\widetilde{t}} ^{X^{n}, Y^{n}} = \bigcup_{y^{n} \in T^{Y^{n}}_{\widetilde{t}}}T^{X^{n}| y^{n}}_{\widetilde{t}} \times \{y^{n}\}.
                    \end{equation}
                    This is a disjoint union, and therefore for $y^{n} \in T^{Y^{n}}_{\widetilde{t}}$
                    \begin{equation}
                    \left|T_{\widetilde{t}}^{X^{n},Y^{n}}\right| = \left|T^{Y^{n}}_{\widetilde{t}}\right| \cdot \left|T^{X^{n}| y^{n}}_{\widetilde{t}}\right|.
                    \end{equation}
                \end{claim}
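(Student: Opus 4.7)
The plan is to establish the set equality and the cardinality formula separately. For the set equality, I would argue by double inclusion directly from the definitions. For the forward inclusion, let $(x^n, y^n) \in T^{X^n, Y^n}_{\widetilde{t}}$. Applying the obvious symmetric analog of Claim~\ref{clm:marg_cond_typ_class} (with the roles of $X$ and $Y$ swapped), we get $y^n \in T^{Y^n}_{\widetilde{t}}$, and then by the definition of the conditional type class $x^n \in T^{X^n \mid y^n}_{\widetilde{t}}$, so $(x^n, y^n) \in T^{X^n \mid y^n}_{\widetilde{t}} \times \{y^n\}$. The reverse inclusion is immediate from the definition of $T^{X^n \mid y^n}_{\widetilde{t}}$.

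For the disjointness of the union, I would note that if $y^n \neq y'^n$ are two distinct sequences in $T^{Y^n}_{\widetilde{t}}$, then the Cartesian products $T^{X^n \mid y^n}_{\widetilde{t}} \times \{y^n\}$ and $T^{X^n \mid y'^n}_{\widetilde{t}} \times \{y'^n\}$ are trivially disjoint because their second coordinates differ. Combining this with the set equality above immediately yields
\begin{equation}
\left|T_{\widetilde{t}}^{X^n, Y^n}\right| = \sum_{y^n \in T^{Y^n}_{\widetilde{t}}} \left|T^{X^n \mid y^n}_{\widetilde{t}}\right|.
\end{equation}

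The main obstacle, and the only nonroutine step, is showing that the summand $\left|T^{X^n \mid y^n}_{\widetilde{t}}\right|$ does not actually depend on the particular choice of $y^n \in T^{Y^n}_{\widetilde{t}}$, so that the sum collapses to the advertised product. I would handle this by a permutation argument: given any two $y^n, y'^n \in T^{Y^n}_{\widetilde{t}}$, both have the same marginal type $\widetilde{t}_{Y}$, so there exists a permutation $\sigma \in S_n$ with $y'^n_i = y^n_{\sigma(i)}$ for all $i$. The coordinate-relabeling map $\Phi_\sigma : x^n \mapsto (x^n_{\sigma(1)}, \dots, x^n_{\sigma(n)})$ is a bijection on $\mathcal{X}^n$, and since the joint symbol count $f(x, y \mid x^n, y^n)$ is invariant under simultaneously permuting the coordinates of both sequences by $\sigma$, we have $(x^n, y^n) \in T^{X^n, Y^n}_{\widetilde{t}}$ if and only if $(\Phi_\sigma(x^n), y'^n) \in T^{X^n, Y^n}_{\widetilde{t}}$. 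Hence $\Phi_\sigma$ restricts to a bijection between $T^{X^n \mid y^n}_{\widetilde{t}}$ and $T^{X^n \mid y'^n}_{\widetilde{t}}$, proving the two sets have the same cardinality. Substituting this common value into the previous displayed equation gives $\left|T_{\widetilde{t}}^{X^n, Y^n}\right| = \left|T^{Y^n}_{\widetilde{t}}\right| \cdot \left|T^{X^n \mid y^n}_{\widetilde{t}}\right|$, as claimed.
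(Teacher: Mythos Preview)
Your proof is correct and follows essentially the same approach as the paper's: both establish the partition from the definitions, observe disjointness via the distinct second coordinates, and use the key permutation argument (any two sequences in $T^{Y^n}_{\widetilde{t}}$ differ by a coordinate permutation, which induces a bijection between the corresponding conditional type classes) to show that $\left|T^{X^n\mid y^n}_{\widetilde{t}}\right|$ is independent of $y^n$. Your version is slightly more explicit in spelling out the double inclusion and the bijection $\Phi_\sigma$, but the underlying argument is identical.
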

                A complete proof of Claim~\ref{clm:cond_typ_class} is given in the \hyperref[proof:cond_typ_class]{Appendix}.
                \begin{lemma} \label{prop:cond_typ_class_card}
                    The cardinality of $T_{\widetilde{t}}^{X^{n}|y^{n}}$ is bound by 
                    \begin{equation}
                        2^{nH\left(X| Y\right)_{\widetilde{t}}-|\mathcal{X}|\cdot |\mathcal{Y}|\log\left(n+1\right)} \leq \left|T_{\widetilde{t}}^{X^{n}|y^{n}}\right| \leq 2^{nH\left(X| Y\right)_{\widetilde{t}}+|\mathcal{X}|\cdot |\mathcal{Y}|\log\left(n+1\right)}.
                    \end{equation}
                \end{lemma}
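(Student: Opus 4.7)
The plan is to obtain the desired bounds on $|T_{\widetilde{t}}^{X^{n}|y^{n}}|$ by combining the counting identity from Claim~\ref{clm:cond_typ_class} with the two-sided bounds on marginal and joint type class sizes furnished by Proposition~\ref{prop:card_typ_class}. Since the partition identity states that for any $y^{n}\in T^{Y^{n}}_{\widetilde{t}}$ one has $|T_{\widetilde{t}}^{X^{n},Y^{n}}| = |T^{Y^{n}}_{\widetilde{t}}| \cdot |T_{\widetilde{t}}^{X^{n}|y^{n}}|$, we can simply solve for $|T_{\widetilde{t}}^{X^{n}|y^{n}}|$ as the ratio and then apply the sandwich bounds from Proposition~\ref{prop:card_typ_class} to numerator and denominator.

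Concretely, the first step is to write
\begin{equation}
    |T_{\widetilde{t}}^{X^{n}|y^{n}}| = \frac{|T_{\widetilde{t}}^{X^{n},Y^{n}}|}{|T^{Y^{n}}_{\widetilde{t}}|}.
\end{equation}
For the upper bound, I plug in the upper bound $|T_{\widetilde{t}}^{X^{n},Y^{n}}| \leq 2^{nH(X,Y)_{\widetilde{t}}}$ in the numerator and the lower bound $|T^{Y^{n}}_{\widetilde{t}}| \geq (n+1)^{-|\mathcal{Y}|} 2^{nH(Y)_{\widetilde{t}}}$ in the denominator. Using the chain rule $H(X|Y)_{\widetilde{t}} = H(X,Y)_{\widetilde{t}} - H(Y)_{\widetilde{t}}$, the ratio collapses to $(n+1)^{|\mathcal{Y}|} \, 2^{nH(X|Y)_{\widetilde{t}}}$, which is bounded above by $(n+1)^{|\mathcal{X}|\cdot|\mathcal{Y}|}\, 2^{nH(X|Y)_{\widetilde{t}}}$ since $|\mathcal{X}|\geq 1$. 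For the lower bound, I symmetrically use the lower bound $|T_{\widetilde{t}}^{X^{n},Y^{n}}| \geq (n+1)^{-|\mathcal{X}|\cdot|\mathcal{Y}|} 2^{nH(X,Y)_{\widetilde{t}}}$ in the numerator and the upper bound $|T^{Y^{n}}_{\widetilde{t}}| \leq 2^{nH(Y)_{\widetilde{t}}}$ in the denominator, again invoking the entropy chain rule to get the target $(n+1)^{-|\mathcal{X}|\cdot|\mathcal{Y}|} 2^{nH(X|Y)_{\widetilde{t}}}$.

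There is no real obstacle here: the result is a routine consequence of the partition identity together with the standard type-counting proposition, so the only thing to be careful about is matching the polynomial prefactors on the two sides and confirming that the bound stated in the lemma is the (slightly lossy) symmetric form $(n+1)^{\pm|\mathcal{X}|\cdot|\mathcal{Y}|}$ rather than the tighter $(n+1)^{\pm|\mathcal{Y}|}$ that the direct computation actually yields. After noting this slack I would present the argument as a short two-line chain of inequalities for each direction and conclude the proof.
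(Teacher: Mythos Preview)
Your proposal is correct and follows essentially the same approach as the paper's proof: the paper also invokes Proposition~\ref{prop:card_typ_class} to bound $|T_{\widetilde{t}}^{X^{n},Y^{n}}|$ and $|T^{Y^{n}}_{\widetilde{t}}|$, then applies the partition identity of Claim~\ref{clm:cond_typ_class} to obtain the bounds on $|T_{\widetilde{t}}^{X^{n}|y^{n}}|$. Your observation that the upper bound actually yields the tighter prefactor $(n+1)^{|\mathcal{Y}|}$ is correct and simply reflects slack in the stated lemma.
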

             \begin{proof}[Proof of Lemma~\ref{prop:cond_typ_class_card}] \label{proof:cond_typ_class_card}
                The cardinality of each type class $\left|T_{\widetilde{t}}^{X^{n},Y^{n}}\right|$ and $\left|T^{Y^{n}}_{\widetilde{t}}\right|$ is bound by Proposition~\ref{prop:card_typ_class}. The result follows by applying Claim \ref{clm:cond_typ_class}.
            \end{proof}
            \subsubsection{$\delta$-typical sets}
            We will now introduce $\delta$-typical sets for a fixed joint type $\widetilde{t} \in \mathcal{T}^{\mathcal{X}^{n},\mathcal{Y}^{n}}$ and $\delta \geq 0$.
            \begin{definition}
            The $\delta$-typical set $T^{X^{n}\backslash Y^{n}}_{ \widetilde{t}, \delta}$ is defined as
                \begin{equation}
                    T^{X^{n}\backslash Y^{n}}_{ \widetilde{t}, \delta} = \bigcup_{\substack{t \in \mathcal{T}^{\mathcal{X}^{n}, \mathcal{Y}^{n}}\\ \|t-\widetilde{t}\|_{1} \leq \delta}}T^{X^{n} \backslash Y^{n}}_{t}.
                \end{equation}
            \end{definition}
            Note that in this case, $T^{X^{n}\backslash Y^{n}}_{ \widetilde{t}, \delta}$ and $T^{X^{n}}_{ \widetilde{t}, \delta}$ are not necessarily equivalent.
    \begin{definition}
    The conditionally $\delta$-typical set $T^{X^{n}|y^{n}}_{ \widetilde{t}, \delta}$ is defined as
                \begin{equation}
                    T^{X^{n}|y^{n}}_{ \widetilde{t}, \delta} = \bigcup_{\substack{t \in \mathcal{T}^{\mathcal{X}^{n}, \mathcal{Y}^{n}}\\ \|t-\widetilde{t}\|_{1} \leq \delta}}T^{X^{n} | y^{n}}_{t}.
                \end{equation}
            \end{definition}
            \begin{proposition} The cardinality of the conditionally $\delta$-typical set $T^{X^{n}|y^{n}}_{ \widetilde{t}, \delta}$ can be bounded as follows
            \label{prop:card_cond_typ_set}
                            \begin{equation}
                               2^{n(H\left(X| Y\right)_{\widetilde{t}}-\gamma\left( \left|\mathcal{X}\right|, \delta \right))-|\mathcal{X}| \cdot |\mathcal{Y}|\log\left(n+1\right)} \leq \left| T^{X^{n}|y^{n}}_{ \widetilde{t}, \delta} \right| \leq 2^{n\left(H(X|Y)_{\widetilde{t}}+\gamma\left( \left|\mathcal{X}\right|, \delta \right)\right)+2\left|\mathcal{X}\right|\cdot|\mathcal{Y}|\log\left(n+1\right)},
                            \end{equation}
                            where $\gamma\left(\left|\mathcal{X}\right|, \delta \right) = \delta\log_{2}\left(\left|\mathcal{X}\right|\right)+h_{2}\left(\delta\right)$.
                        \end{proposition}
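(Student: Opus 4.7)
The plan is to prove the two inequalities separately, in both cases reducing to Lemma~\ref{prop:cond_typ_class_card} (the conditional type class bound) together with the continuity bound for conditional entropy from Lemma~\ref{lem:cont_bounds}. The defining union
\[
T^{X^{n}|y^{n}}_{\widetilde{t}, \delta} = \bigcup_{\substack{t \in \mathcal{T}^{\mathcal{X}^{n}, \mathcal{Y}^{n}}\\ \|t-\widetilde{t}\|_{1} \leq \delta}}T^{X^{n} | y^{n}}_{t}
\]
is the natural decomposition to exploit, since we already have tight control on the size of each $T^{X^n|y^n}_t$ from Lemma~\ref{prop:cond_typ_class_card} and on the conditional entropies $H(X|Y)_t$ for types $t$ close to $\widetilde{t}$ from the Alicki-Fannes inequality (Equation~\ref{eqn:af}).

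For the upper bound, I would use the union bound $|T^{X^{n}|y^{n}}_{\widetilde{t}, \delta}| \leq \sum_{t:\|t - \widetilde{t}\|_1 \leq \delta} |T^{X^{n}|y^{n}}_{t}|$. Each summand is at most $2^{nH(X|Y)_t + |\mathcal{X}|\cdot|\mathcal{Y}|\log(n+1)}$ by Lemma~\ref{prop:cond_typ_class_card}. For every type $t$ appearing in the union, $\|t - \widetilde{t}\|_1 \leq \delta$ implies via Alicki-Fannes that $H(X|Y)_t \leq H(X|Y)_{\widetilde{t}} + \gamma(|\mathcal{X}|, \delta)$, so each summand is at most $2^{n(H(X|Y)_{\widetilde{t}} + \gamma(|\mathcal{X}|, \delta)) + |\mathcal{X}|\cdot|\mathcal{Y}|\log(n+1)}$. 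Finally, the total number of types on $\mathcal{X}\times\mathcal{Y}$ is at most $(n+1)^{|\mathcal{X}|\cdot|\mathcal{Y}|}$, which contributes an additional $|\mathcal{X}|\cdot|\mathcal{Y}|\log(n+1)$ to the exponent, producing exactly the claimed upper bound.

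For the lower bound, the union defining $T^{X^{n}|y^{n}}_{\widetilde{t}, \delta}$ contains in particular the term $t = \widetilde{t}$, so $T^{X^n|y^n}_{\widetilde{t}} \subseteq T^{X^n|y^n}_{\widetilde{t}, \delta}$ and hence $|T^{X^n|y^n}_{\widetilde{t}, \delta}| \geq 2^{nH(X|Y)_{\widetilde{t}} - |\mathcal{X}|\cdot|\mathcal{Y}|\log(n+1)}$ by Lemma~\ref{prop:cond_typ_class_card}. Since $\gamma(|\mathcal{X}|, \delta) \geq 0$, weakening the exponent by $n\gamma(|\mathcal{X}|, \delta)$ yields the form stated in the proposition.

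The only genuinely delicate point is the interplay between the $\ell^1$ parameter $\delta$ used in the union defining $T^{X^n|y^n}_{\widetilde{t},\delta}$ and the total variation parameter that enters Alicki-Fannes in Lemma~\ref{lem:cont_bounds}; this is a factor-of-two matter which is absorbed into $\gamma$ (as is standard in the paper's conventions). Everything else is a bookkeeping exercise: counting types, applying Lemma~\ref{prop:cond_typ_class_card} termwise, and using monotonicity of $\gamma$ in $\delta$.
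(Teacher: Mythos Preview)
Your proposal is correct and follows essentially the same approach as the paper: decompose the union over types $t$ with $\|t-\widetilde{t}\|_1\le\delta$, apply Lemma~\ref{prop:cond_typ_class_card} termwise, use the Alicki--Fannes inequality to replace $H(X|Y)_t$ by $H(X|Y)_{\widetilde{t}}\pm\gamma(|\mathcal{X}|,\delta)$, and bound the number of types by $(n+1)^{|\mathcal{X}|\cdot|\mathcal{Y}|}$. The only cosmetic difference is in the lower bound: the paper keeps the full (disjoint) sum, lower-bounds every summand via Alicki--Fannes, and then drops to a single term, whereas you go straight to the single term $t=\widetilde{t}$ and then weaken by $n\gamma(|\mathcal{X}|,\delta)$; both routes yield the same bound.
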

                    A complete proof of Proposition~\ref{prop:card_cond_typ_set} is given in the \hyperref[proof:card_cond_typ_set]{Appendix}.

            With these definitions, we can introduce a distribution $p_{M|X}$ on the alphabet $\mathcal{M}\times \mathcal{X}$, and the typical sets of this distribution and the fixed type $\widetilde{t} \in \mathcal{T}^{\mathcal{X}^{n}, \mathcal{Y}^{n}}$ corresponding to parameters $\delta, \delta' \geq 0$. For simplicity we will drop the subscript indicating that $p = p_{M|X}$. Note that 
            \begin{equation}
            \label{eqn:cond_dist}
                p^{n}(m^{n}|x^{n}) = p^{n}(m^{n}|x^{n}, y^{n})= p(m_{1}|x_{1}, y_{1})p(m_{2}|x_{2}, y_{2})... p(m_{n}|x_{n}, y_{n}).
            \end{equation}
             For $\widetilde{t} \in \mathcal{T}^{\mathcal{X}^{n}, \mathcal{Y}^{n}}$, and a pair of sequences $(x^{n}, y^{n}) \in T^{X^{n}, Y^{n}}_{\widetilde{t}}$, we can define the conditionally typical set $T^{M^{n}|x^{n}, y^{n}}_{\widetilde{t},(p, \delta')}$ as
                \begin{equation}
                    T^{M^{n}|x^{n}, y^{n}}_{\widetilde{t},(p, \delta')}= \bigcup_{\substack{t \in \mathcal{T}^{\mathcal{M}^{n}, \mathcal{X}^{n}, \mathcal{Y}^{n}}\\ \|t -p\cdot\widetilde{t}\|_{1} \leq \delta'}}T^{M^{n} |x^{n}, y^{n}}_{t}.
                \end{equation}
                Note that in this case,
                \begin{equation}
                    T^{M^{n}|x^{n}, y^{n}}_{\widetilde{t},(p, \delta')}= T^{M^{n}|x^{n}}_{\widetilde{t},(p, \delta')}.
                \end{equation}
            \begin{proposition}
    \label{prop:unit_prob}
    Let $(x^{n}, y^{n}) \in T^{X^{n}, Y^{n}}_{\widetilde{t}}$, then 
    \begin{equation}
       p^{n}\left(T^{M^{n}|x^{n}, y^{n}}_{\widetilde{t}, (p,\delta')}\right) \geq 1-2^{-\frac{n\delta'^{2}}{2\ln 2}+2|\mathcal{M}|\cdot|\mathcal{X}|\cdot|\mathcal{Y}|\log(n+1)}.
    \end{equation}
\end{proposition}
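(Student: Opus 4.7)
The plan is to adapt the standard method-of-types argument for typicality under a conditional distribution, treating $\widetilde{t}$ as the fixed $(X,Y)$-marginal and computing the probability of each joint type class on $\mathcal{M}\times\mathcal{X}\times\mathcal{Y}$ separately, then union-bounding over atypical types using Pinsker's inequality and the polynomial bound on the number of types.

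First, I would fix a joint type $t \in \mathcal{T}^{\mathcal{M}^{n},\mathcal{X}^{n},\mathcal{Y}^{n}}$. Since $(x^{n}, y^{n}) \in T^{X^{n}, Y^{n}}_{\widetilde{t}}$, the only types $t$ that can arise from the sampling of $m^{n}$ are those whose $(X,Y)$-marginal equals $\widetilde{t}$; for every other $t$ the class $T^{M^{n}|x^{n},y^{n}}_{t}$ is empty. For any $m^{n} \in T^{M^{n}|x^{n},y^{n}}_{t}$, I would compute $p^{n}(m^{n}|x^{n}) = \prod_{m,x,y} p(m|x)^{n\, t(m,x,y)}$, using that the count of positions $i$ with $(m_{i},x_{i},y_{i})=(m,x,y)$ is $n\, t(m,x,y)$. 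Taking logs and rewriting in terms of the distribution $\nu(m,x,y) := p(m|x)\,\widetilde{t}(x,y)$ on $\mathcal{M}\times\mathcal{X}\times\mathcal{Y}$, I would use the identity $\sum_{m}t(m,x,y)=\widetilde{t}(x,y)$ to obtain
\begin{equation}
p^{n}(m^{n}|x^{n}) \;=\; 2^{-n\bigl(H(M|X,Y)_{t} + D(t \,\|\, \nu)\bigr)}.
\end{equation}

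Next, I would bound the cardinality $|T^{M^{n}|x^{n},y^{n}}_{t}|$ by the straightforward extension of Lemma~\ref{prop:cond_typ_class_card} (replacing $X|Y$ by $M|X,Y$), obtaining $|T^{M^{n}|x^{n},y^{n}}_{t}| \leq 2^{nH(M|X,Y)_{t} + |\mathcal{M}|\cdot|\mathcal{X}|\cdot|\mathcal{Y}|\log(n+1)}$. Multiplying the per-sequence probability by this size cancels the $H(M|X,Y)_{t}$ term and yields
\begin{equation}
p^{n}\bigl(T^{M^{n}|x^{n},y^{n}}_{t}\bigr) \;\leq\; 2^{\,|\mathcal{M}|\cdot|\mathcal{X}|\cdot|\mathcal{Y}|\log(n+1) \,-\, nD(t\,\|\,\nu)}.
\end{equation}

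Then I would restrict attention to atypical types, namely those with $\|t - \nu\|_{1} > \delta'$, and apply Pinsker's inequality from Lemma~\ref{lem:cont_bounds} to get $D(t\,\|\,\nu) \geq \delta'^{2}/(2\ln 2)$. Finally, I would union-bound over the at most $(n+1)^{|\mathcal{M}|\cdot|\mathcal{X}|\cdot|\mathcal{Y}|}$ joint types on $\mathcal{M}\times\mathcal{X}\times\mathcal{Y}$ (which absorbs a second $|\mathcal{M}|\cdot|\mathcal{X}|\cdot|\mathcal{Y}|\log(n+1)$ factor), giving
\begin{equation}
p^{n}\Bigl(\text{complement of } T^{M^{n}|x^{n},y^{n}}_{\widetilde{t},(p,\delta')}\Bigr) \;\leq\; 2^{\,2|\mathcal{M}|\cdot|\mathcal{X}|\cdot|\mathcal{Y}|\log(n+1) \,-\, n\delta'^{2}/(2\ln 2)},
\end{equation}
which is the desired bound after taking the complement.

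The only nontrivial step is the probability computation for a single type class — specifically, rewriting $\sum_{m,x,y} t(m,x,y)\log p(m|x)$ in the form $-H(M|X,Y)_{t} - D(t\,\|\,\nu)$, which requires careful use of the marginal condition $\sum_{m}t(m,x,y) = \widetilde{t}(x,y)$. Everything else is a routine application of Pinsker and the polynomial type bound, entirely analogous to the standard proof that the probability of the typical set tends to one, which is already cited in the surrounding propositions.
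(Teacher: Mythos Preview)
Your proposal is correct and follows essentially the same approach as the paper's proof: compute the $p^{n}$-probability of each fixed joint type class, cancel the conditional-entropy term against the conditional-type-class cardinality bound, apply Pinsker's inequality to the remaining relative-entropy term for atypical types, and union-bound over the polynomially many types. The only cosmetic difference is that you package the divergence as $D(t\,\|\,p\cdot\widetilde{t})$ on the full alphabet $\mathcal{M}\times\mathcal{X}\times\mathcal{Y}$, whereas the paper writes it as a conditional divergence $D(t_{m^{n}|x^{n},y^{n}}\,\|\,p)$ averaged over $\widetilde{t}$; since the $(X,Y)$-marginal of $t$ equals $\widetilde{t}$, these are identical by the chain rule for relative entropy, and both lead to the same application of Pinsker against the atypicality condition $\|t-p\cdot\widetilde{t}\|_{1}>\delta'$.
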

A complete proof of Proposition~\ref{prop:unit_prob} is given in the \hyperref[proof:unit_prob]{Appendix}.

            We can also define the jointly typical set $T^{M^{n}, X^{n}, Y^{n}}_{\widetilde{t},  (p, \delta')}$ as
                \begin{equation}
                    T^{M^{n}, X^{n}, Y^{n}}_{\widetilde{t},  (p, \delta')}= \bigcup_{(x^{n}, y^{n})\in T^{X^{n}, Y^{n}}_{\widetilde{t}}}T^{M^{n}|x^{n}, y^{n}}_{\widetilde{t},(p, \delta')}\times \{x^{n}\}\times \{y^{n}\},
                \end{equation}
              and $T^{M^{n}, X^{n}, Y^{n}}_{(\widetilde{t}, \delta), (p, \delta')}$ as
                \begin{equation}
                    T^{M^{n}, X^{n}, Y^{n}}_{(\widetilde{t}, \delta), (p, \delta')}= \bigcup_{\substack{t \in \mathcal{T}^{\mathcal{X}^{n}, \mathcal{Y}^{n}}\\ \|t -\widetilde{t}\|_{1} \leq \delta}}T^{M^{n}, X^{n}, Y^{n}}_{t,  (p, \delta')}.
                \end{equation}
            We can define the set $T^{M^{n}|x^{n}\backslash Y^{n}}_{(\widetilde{t}, \delta), (p,  \delta')}$ as
                \begin{equation}
                   T^{M^{n}|x^{n}\backslash Y^{n}}_{(\widetilde{t}, \delta), (p,  \delta')}= \left \{m^{n} \in \mathcal{M}^{n}:\exists y^{n} \mbox{ s.t. } (m^{n}, x^{n}, y^{n}) \in T^{M^{n}, X^{n}, Y^{n}}_{(\widetilde{t}, \delta), (p, \delta')}\right\}.
                \end{equation}
\begin{proposition}
\label{prop:card_cond_bs_set}
For a fixed joint type $\widetilde{t} \in \mathcal{T}^{\mathcal{X}^{n}, \mathcal{Y}^{n}}$, and $y^{n} \in T^{Y^{n}}_{\widetilde{t}}$, an upper bound on the cardinality of $T^{M^{n}|y^{n}\backslash X^{n}}_{(\widetilde{t}, \delta), (p,  \delta')}$ is given by
    \begin{equation}
        \left|T^{M^{n}|y^{n}\backslash X^{n}}_{(\widetilde{t}, \delta), (p,  \delta')}\right| \leq 2^{n(H\left(M| Y\right)_{\widetilde{t}\cdot p}+\gamma(|\mathcal{M}|, \left|\mathcal{X}\right|\cdot\delta'))+2|\mathcal{M}|\cdot |\mathcal{Y}|\log\left(n+1\right)},
    \end{equation}
    where the conditional entropy is that of the marginal distribution $\sum_{x\in\mathcal{X}}p(m|x)\cdot \widetilde{t}(x, y)$.
\end{proposition}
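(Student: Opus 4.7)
The plan is to follow the same template used for the conditionally $\delta$-typical set bound in Proposition~\ref{prop:card_cond_typ_set}, but applied to the induced marginal distribution on $(M,Y)$ rather than on $(X,Y)$. The key conceptual content is that although a sequence $m^n \in T^{M^{n}|y^{n}\backslash X^{n}}_{(\widetilde{t},\delta),(p,\delta')}$ is defined via the existence of a consistent $x^n$, the only information about $m^n$ that survives for counting purposes is the joint type $t_{m^n,y^n}$, which must be close in $\ell^1$ to the marginal distribution $\rho(m,y) := \sum_{x\in\mathcal{X}}p(m|x)\widetilde{t}(x,y)$ appearing in the statement.

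First I would unpack the definition: given $m^n \in T^{M^{n}|y^{n}\backslash X^{n}}_{(\widetilde{t},\delta),(p,\delta')}$, there is some $x^n$ and some joint type $t$ with $\|t-\widetilde{t}\|_1 \leq \delta$ such that $(x^n,y^n) \in T^{X^n,Y^n}_{t}$ and $\|t_{m^n,x^n,y^n}-p\cdot t\|_1 \leq \delta'$. Since $\ell^1$ distance is contractive under marginalization and $\|p\cdot t - p\cdot \widetilde{t}\|_1 \leq \|t-\widetilde{t}\|_1$, the triangle inequality yields a bound on $\|t_{m^n,y^n}-\rho\|_1$ in terms of $\delta$ and $\delta'$; this is where I expect the $|\mathcal{X}|\cdot \delta'$ deviation to enter, arising from expanding the marginalization $\sum_x\|t_{m^n,x^n,y^n}(\cdot,x,\cdot)-p(\cdot|x)\widetilde{t}(x,\cdot)\|_1$ by the triangle inequality and then absorbing the slack from the $\widetilde{t}$-approximation into the same $\gamma$ factor.

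Second, I would cover the set by conditional type classes with respect to the fixed $y^n$:
\begin{equation}
T^{M^{n}|y^{n}\backslash X^{n}}_{(\widetilde{t},\delta),(p,\delta')} \subseteq \bigcup_{\substack{s \in \mathcal{T}^{\mathcal{M}^n,\mathcal{Y}^n}\\ \|s-\rho\|_1 \leq |\mathcal{X}|\cdot\delta'}}T^{M^n|y^n}_{s},
\end{equation}
where $s$ ranges over joint types on $(\mathcal{M},\mathcal{Y})$ whose marginal on $Y$ agrees with $t_{y^n} = \widetilde{t}_Y$ (recalling $y^n \in T^{Y^n}_{\widetilde{t}}$ from the hypothesis). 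Each $T^{M^n|y^n}_s$ is then bounded via Lemma~\ref{prop:cond_typ_class_card} applied to the alphabets $(\mathcal{M},\mathcal{Y})$:
\begin{equation}
\left|T^{M^n|y^n}_{s}\right| \leq 2^{nH(M|Y)_{s}+|\mathcal{M}|\cdot|\mathcal{Y}|\log(n+1)}.
\end{equation}

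Third, I would invoke the Alicki–Fannes inequality from Lemma~\ref{lem:cont_bounds} to transfer the entropy from $s$ to $\rho$: since $\|s-\rho\|_1 \leq |\mathcal{X}|\cdot\delta'$, we get $H(M|Y)_{s} \leq H(M|Y)_{\rho} + \gamma(|\mathcal{M}|,|\mathcal{X}|\cdot\delta')$. Finally, multiplying by the number of joint types, which is at most $(n+1)^{|\mathcal{M}|\cdot|\mathcal{Y}|}$, produces the extra $|\mathcal{M}|\cdot|\mathcal{Y}|\log(n+1)$ in the exponent; combining this with the factor from Lemma~\ref{prop:cond_typ_class_card} gives the stated $2|\mathcal{M}|\cdot|\mathcal{Y}|\log(n+1)$ overhead.

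The main obstacle I anticipate is the careful bookkeeping in the first step: establishing the precise $\ell^1$ deviation of the marginal joint type $t_{m^n,y^n}$ from $\rho$ so that all the approximation parameters ($\delta$, $\delta'$, and the alphabet-size amplification from marginalizing over $\mathcal{X}$) consolidate cleanly into the single $\gamma(|\mathcal{M}|,|\mathcal{X}|\cdot\delta')$ term appearing in the proposition. Once this $\ell^1$ bound is in hand, the remaining steps are direct applications of the conditional type-class cardinality bound and the continuity estimate from the preliminaries.
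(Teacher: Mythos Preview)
Your proposal is correct and follows essentially the same route as the paper: both arguments show that any $m^n$ in the set has $(M,Y)$-joint type within an $\ell^1$-ball of radius $|\mathcal{X}|\cdot\delta'$ around the marginal $\rho = \sum_x p(m|x)\widetilde{t}(x,y)$, then cover by the corresponding conditional type classes $T^{M^n|y^n}_s$, bound each via Lemma~\ref{prop:cond_typ_class_card}, apply the Alicki--Fannes continuity estimate, and pick up the second $(n+1)^{|\mathcal{M}|\cdot|\mathcal{Y}|}$ factor from the union over types. The bookkeeping concern you flag is exactly the part the paper handles most tersely (the three displayed lines deriving $\|t_{m^n,y^n}-\rho\|_1 \leq |\mathcal{X}|\cdot\delta'$), so your caution there is well placed but not a gap.
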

 A complete proof of Proposition~\ref{prop:card_cond_bs_set} is given in the \hyperref[proof:card_cond_bs_set]{Appendix}.
\begin{proposition} 
\label{prop:merge_set}
The set $T^{M^{n}, X^{n}, Y^{n}}_{(\widetilde{t}, \delta), (p, \delta')}$ is contained in the jointly typical set $T^{M^{n}, X^{n}, Y^{n}}_{(p \cdot \widetilde{t}, \delta'+\delta)}$, 
\begin{equation}
    T^{M^{n}, X^{n}, Y^{n}}_{(\widetilde{t}, \delta), (p, \delta')} \subseteq  T^{M^{n}, X^{n}, Y^{n}}_{(p \cdot \widetilde{t}, \delta'+\delta)}. 
\end{equation}
\end{proposition}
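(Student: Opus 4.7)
The plan is to unpack the definition of the left-hand set, extract the natural $\ell^1$ triangle inequality that the two parameters $\delta$ and $\delta'$ were designed to combine, and close with a short calculation showing that composition with the conditional $p=p_{M|X}$ is non-expansive in total variation. Concretely, I would fix an arbitrary element $(m^{n},x^{n},y^{n})\in T^{M^{n},X^{n},Y^{n}}_{(\widetilde{t},\delta),(p,\delta')}$. By the outer union in the definition, there is a joint type $t\in\mathcal{T}^{\mathcal{X}^{n},\mathcal{Y}^{n}}$ with $\|t-\widetilde{t}\|_{1}\leq\delta$ such that $(m^{n},x^{n},y^{n})\in T^{M^{n},X^{n},Y^{n}}_{t,(p,\delta')}$, and by unfolding the inner definitions this means that the joint type $t'$ of $(m^{n},x^{n},y^{n})$ on $\mathcal{M}\times\mathcal{X}\times\mathcal{Y}$ satisfies $\|t'-p\cdot t\|_{1}\leq\delta'$ and has $t$ as its marginal on $\mathcal{X}\times\mathcal{Y}$.

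The second step is the triangle inequality in $\ell^{1}$:
\begin{equation}
\|t'-p\cdot\widetilde{t}\|_{1}\;\leq\;\|t'-p\cdot t\|_{1}+\|p\cdot t-p\cdot\widetilde{t}\|_{1}\;\leq\;\delta'+\|p\cdot t-p\cdot\widetilde{t}\|_{1}.
\end{equation}
The third step is to bound the last term by $\|t-\widetilde{t}\|_{1}$. This is the only non-trivial identity, but it is a direct computation: since $p(m|x)$ does not depend on $y$,
\begin{align}
\|p\cdot t-p\cdot\widetilde{t}\|_{1}
&=\sum_{m,x,y}p(m|x)\bigl|t(x,y)-\widetilde{t}(x,y)\bigr|\\
&=\sum_{x,y}\bigl|t(x,y)-\widetilde{t}(x,y)\bigr|\sum_{m}p(m|x)\;=\;\|t-\widetilde{t}\|_{1}\;\leq\;\delta.
\end{align}
Chaining these two inequalities gives $\|t'-p\cdot\widetilde{t}\|_{1}\leq\delta+\delta'$, which is precisely the condition for $(m^{n},x^{n},y^{n})$ to lie in $T^{M^{n},X^{n},Y^{n}}_{(p\cdot\widetilde{t},\,\delta+\delta')}$, completing the inclusion.

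There is no real obstacle here; the proof is essentially bookkeeping. The one place that requires care is making sure we correctly identify which marginal of the joint type $t'$ is being used and why it must equal $t$ (this comes from the fact that the inner set $T^{M^{n}|x^{n},y^{n}}_{t,(p,\delta')}$ is defined per fixed $(x^{n},y^{n})$ of type exactly $t$). Once that is pinned down, the argument reduces to the two-line triangle inequality followed by the channel non-expansivity calculation above, so the proof will be very short.
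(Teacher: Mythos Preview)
Your proposal is correct and follows essentially the same route as the paper: fix an element, use the triangle inequality $\|t'-p\cdot\widetilde{t}\|_{1}\leq\|t'-p\cdot t\|_{1}+\|p\cdot t-p\cdot\widetilde{t}\|_{1}$, and close with the one-line computation $\|p\cdot t-p\cdot\widetilde{t}\|_{1}=\sum_{x,y}|t(x,y)-\widetilde{t}(x,y)|\sum_{m}p(m|x)=\|t-\widetilde{t}\|_{1}$. The paper does exactly this with $t=t_{x^{n},y^{n}}$ and $t'=t_{m^{n},x^{n},y^{n}}$, and your observation that $t$ is forced to be the $\mathcal{X}\times\mathcal{Y}$ marginal of $t'$ is precisely why those identifications hold.
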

A complete proof of Proposition~\ref{prop:merge_set} is given in the \hyperref[proof:merge_set]{Appendix}.

    \section{Prior-free compression of one-way communication protocols} \label{sec:single}
        \subsection{Prior-Free Slepian Wolf Coding} \label{sec:sw}

            We will first consider the simple setting in which Alice and Bob are each given an $n$ symbol sequence $x^{n} \in \mathcal{X}^{n}$ and $y^{n} \in \mathcal{Y}^{n}$ respectively. The goal is for Bob to obtain a copy of Alice's sequence $x^{n}$. We will show how Alice and Bob can accomplish this using a noiseless bit channel and without any pre-shared randomness.

            We will consider three variants of this scenario, in order to iteratively exhibit the various tools developed in the previous section. First, we will assume that Alice and Bob are provided, as an additional shared input, an estimate of the joint type of their inputs. Subsequently, we will remove the need for this additional information, at the expense of Bob having to communicate a small message to Alice. Finally, we will show that if Bob's input $y^{n}$ is generated from Alice's input through the use of as noisy channel, then the task can be completed without both the knowledge of the joint type, and the additional message from Bob to Alice.

		\subsubsection{$1$-round protocol with an estimate of the joint type given as an additional input}

            \begin{theorem}\label{thm:sw_1}
                Let $\mathcal{X}$ and $\mathcal{Y}$ be finite alphabets, and fix an $n \in \mathbb{N}$. There exists a $1$-round protocol taking $(x^{n}, y^{n}) \in \mathcal{X}^{n}\times \mathcal{Y}^{n}$ as Alice and Bob's respective inputs, as well as a parameter $\delta > 0$ and a joint type ${\widetilde{t} \in \mathcal{T}^{\mathcal{X}^{n}, \mathcal{Y}^{n}}}$ as shared inputs, with the guarantee that $(x^{n}, y^{n}) \in T^{X^{n}, Y^{n}}_{\widetilde{t}, \delta}$, that satisfies the following conditions:
                \begin{itemize}
                    \item The communication cost is at most $n\left(H(X|Y)_{t}+\eta_{1}(n, \delta)\right)$;
                    \item No shared randomness is required;
                    \item There is an event $E_{good}$ such that:
                    \begin{itemize}
                        \item $\Pr[\lnot E_{good}] \leq 2^{-n\delta}$;
                        \item Conditional on $E_{good}$, Bob can produce a copy $\hat{x}^{n}$ of $x^{n}$ from his input and the protocol transcript;
                    \end{itemize}
                \end{itemize}
                where $\eta_{1}(n, \delta)=2\gamma\left( \left|\mathcal{X}\right|, \delta \right)+\frac{2}{n}\left|\mathcal{X}\right|\cdot|\mathcal{Y}|\log\left(n+1\right)+\frac{1}{n}\log_{2}(n)+3\delta+O\left(\frac{1}{n}\right).$
            \end{theorem}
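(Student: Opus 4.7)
The plan is for Alice to transmit $x^n$ in two chunks: a concise summary (the marginal type $t_{x^n}$) followed by a short identifier engineered so that Bob, combining his side information $y^n$ with the shared estimate $\widetilde{t}$, can unambiguously recover $x^n$. The claimed failure probability $2^{-n\delta}$ is to be over a short seed of private randomness that Alice generates and transmits as part of her single-round message, so that no pre-shared randomness is used.

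First, Alice would compute $t_{x^n}$ and send it, which costs at most $|\mathcal{X}|\log_2(n+1)$ bits. Next, I would exploit Bob's side information: once Bob holds $t_{x^n}$, $y^n$, and $\widetilde{t}$, he can restrict his candidates for Alice's sequence to
\begin{equation*}
C(y^n) \;:=\; T^{X^{n}|y^{n}}_{\widetilde{t},\delta} \cap T^{X^{n}}_{t_{x^{n}}}.
\end{equation*}
Proposition~\ref{prop:card_cond_typ_set} gives $|T^{X^{n}|y^{n}}_{\widetilde{t},\delta}| \leq 2^{n(H(X|Y)_{\widetilde t} + \gamma(|\mathcal{X}|,\delta)) + 2|\mathcal{X}|\cdot|\mathcal{Y}|\log(n+1)}$, and the Alicki--Fannes bound of Lemma~\ref{lem:cont_bounds}, applied with the guarantee $\|t_{x^{n},y^{n}} - \widetilde{t}\|_1 \leq \delta$, lets us replace $H(X|Y)_{\widetilde t}$ by $H(X|Y)_{t_{x^{n},y^{n}}}$ at the cost of one additional $\gamma(|\mathcal{X}|,\delta)$ term.

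The next step is the identifier. Following the Newman-style derandomization advertised in the introduction, Alice would privately sample an $O(\log n)$-bit seed, send it, and use it (deterministically) to induce a pseudo-uniform ordering of $T^{X^{n}}_{t_{x^{n}}}$, enumerated in the same way by both players. She would then send the low-order $L$ bits of $x^n$'s position in this ordering, with $L$ chosen just above $\log_2|C(y^n)| + n\delta$. A union bound over the at most $|C(y^n)|-1$ other candidates in $C(y^n)$ then gives, over the seed, $\Pr[\exists\, \tilde x^n \in C(y^n)\setminus\{x^n\}: \tilde x^n \text{ matches the } L \text{ bits}] \leq |C(y^n)|\cdot 2^{-L} \leq 2^{-n\delta}$, which is the event $\lnot E_{good}$. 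On $E_{good}$, Bob enumerates $C(y^n)$ (computable from $y^n$, $\widetilde{t}$, $t_{x^{n}}$, $\delta$, and the seed) and outputs the unique match $\hat{x}^n = x^n$.

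Summing the three contributions---$|\mathcal{X}|\log(n+1)$ for the marginal type, $O(\log n)$ for the seed, and $L$ for the position bits---and plugging in the bound on $|C(y^n)|$ together with the Fannes/Alicki--Fannes estimates yields a total of $n(H(X|Y)_t + \eta_1(n,\delta))$ as required. The main obstacle will be bookkeeping: the various $\gamma(|\mathcal{X}|,\delta)$ slacks picked up when passing between $\widetilde{t}$ and the true $t$ and between joint and conditional type classes, the $2|\mathcal{X}|\cdot|\mathcal{Y}|\log(n+1)$ correction from Proposition~\ref{prop:card_cond_typ_set}, and the $\log_2 n$ seed term must exactly repackage into the $\eta_1(n,\delta)$ formula, including the $3\delta$ constant. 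A related subtlety is verifying that the pseudo-uniform ordering obtained from the short Newman-derandomized seed retains enough pairwise near-uniformity for the union bound to deliver the stated $2^{-n\delta}$ failure probability.
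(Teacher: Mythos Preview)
Your overall strategy matches the paper's: use a uniformly random ordering (hash) so that Alice can send a short prefix of the index of $x^n$, have Bob intersect with the conditionally typical set $T^{X^n|y^n}_{\widetilde t,\delta}$, bound the collision probability by a union bound over $|T^{X^n|y^n}_{\widetilde t,\delta}|\cdot 2^{-L}$, and then derandomize \`a la Newman by having Alice privately sample and transmit the seed. Two minor deviations are harmless: the paper orders all of $\mathcal{X}^n$ rather than $T^{X^n}_{t_{x^n}}$ and therefore does not send the marginal type; and Alice cannot compute $|C(y^n)|$ (she does not know $y^n$), so $L$ must be set from the worst-case bound of Proposition~\ref{prop:card_cond_typ_set}, which is presumably what you intended.

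The genuine gap is the seed length. An $O(\log n)$ Newman seed cannot yield failure probability $2^{-n\delta}$ uniformly over all inputs: in the Newman argument one picks $s$ strings so that for every input at most an $\epsilon+\delta'$ fraction of them fail, which forces $s=\Omega\bigl(n/\delta'^{2}\bigr)$; to keep the final error at $2^{-n\delta}$ one needs $\delta'=\Theta(2^{-n\delta})$, hence $\log s=\log n+2n\delta+O(1)$. This $2n\delta$ seed cost is exactly what supplies the missing $2\delta$ in the $3\delta$ term of $\eta_1$ (the remaining $\delta$ is the slack in $L$). So your bookkeeping worry is real and has a definite resolution: the seed is $\Theta(n\delta)$ bits, not $O(\log n)$, and once you account for that both the $3\delta$ constant and the $2^{-n\delta}$ error fall out. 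With this correction your proof is essentially the paper's.
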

            \begin{proof}
                We will first show how to accomplish such a task if Alice and Bob can use an unlimited amount of shared randomness. Consider the following protocol:
                \begin{enumerate}
                    \item Alice and Bob sample the shared randomness to arrange the elements of $\mathcal{X}^{n}$ in an ordered list $\mathcal{L}$ in a uniformly random fashion.
                    \item Alice sends Bob the first $nC$ bits of the binary expansion of the position of her sequence in the list $\mathcal{L}$, choosing $C= H(X|Y)_{\widetilde{t}}+\gamma\left( \left|\mathcal{X}\right|, \delta \right)+\frac{2}{n}\left|\mathcal{X}\right|\cdot|\mathcal{Y}|\log\left(n+1\right)+\delta+\frac{1}{n}$.
                    \item Bob removes any sequences in $\mathcal{L}$ that are not in positions that correspond to the $nC$ bits sent by Alice. Denote this new list $\mathcal{L}_{c}$. He also determines the conditionally typical set $T^{X^{n}|y^{n}}_{ \widetilde{t},\delta}$ of $\widetilde{t}$.
                    \item Bob looks in the intersection $\mathcal{L}_{c} \cap T^{X^{n}|y^{n}}_{ \widetilde{t},\delta}$ for a sequence $\hat{x}^{n}$, and declares an error if $\left|\mathcal{L}_{c} \cap T^{X^{n}|y^{n}}_{ \widetilde{t},\delta} \right| \neq 1$. Otherwise, he outputs $\hat{x}^{n}$.
                \end{enumerate}
                An error can occur in the protocol if there is another sequence $\widetilde{x^{n}}$ in the list $\mathcal{L}_{c}$ that is also in the set $T^{X^{n}|y^{n}}_{\widetilde{t}, \delta}$, in which case Bob will not be able to determine whether $x^{n}$ or $\widetilde{x^{n}}$ is Alice's input.
                To bound the probability of this error occurring, we use the fact that the positions in the list $\mathcal{L}$ are assigned uniformly at random,
                \begin{align}
                    &\Pr\left[\exists \widetilde{x^{n}} \in T^{X^{n}|y^{n}}_{ \widetilde{t},\delta} : \widetilde{x^{n}} \neq x^{n} : \mbox{ and } \widetilde{x^{n}} \in \mathcal{L}_{c}   \right]
                    \\&\leq \sum_{\widetilde{x^{n}} \in T^{X^{n}|y^{n}}_{ \overline{t}, \delta}}\Pr\left[\widetilde{x^{n}} \neq x^{n} \mbox{ and } \widetilde{x^{n}} \in \mathcal{L}_{c}\right] \\
                    &\leq \sum_{\substack{\widetilde{x^{n}} \in T^{X^{n}|y^{n}}_{ \overline{t}, \delta}\\
                    \widetilde{x^{n}} \neq x^{n}}}\Pr\left[\widetilde{x^{n}} \in \mathcal{L}_{c}\right] 
                    \\&\leq \left|T^{X^{n}|y^{n}}_{ \overline{t}, \delta}\right|\frac{1}{2^{nC}} 
                    \\&\leq 2^{n\left(H(X|Y)_{\widetilde{t}}+\gamma\left( \left|\mathcal{X}\right|, \delta \right)\right)+2\left|\mathcal{X}\right|\cdot|\mathcal{Y}|\log\left(n+1\right)}\frac{1}{2^{nC}}.
                \end{align}
                Alice and Bob can therefore take $C = H(X|Y)_{\widetilde{t}}+\gamma\left( \left|\mathcal{X}\right|, \delta \right)+\frac{2}{n}\left|\mathcal{X}\right|\cdot|\mathcal{Y}|\log\left(n+1\right)+\delta+\frac{1}{n},$
                in which case 
                \begin{equation}
                    \Pr\left[\exists \widetilde{x^{n}} \in T^{X^{n}|y^{n}}_{ \widetilde{t},\delta} : \widetilde{x^{n}} \neq x^{n} : \mbox{ and } \widetilde{x^{n}} \in \mathcal{L}_{c}   \right] \leq \frac{2^{-n\delta}}{2}.
                \end{equation}

                To avoid using shared randomness, we will first show how similar guarantees can be obtained by using $O(n\delta)$ bits of shared randomness. The result then follows from allowing Alice to privately sample the randomness, and then communicate it to Bob. The following claim is obtained from a standard Chernoff bound argument.
                \begin{claim}
                \label{clm:derandom_sw}
                    There exists a protocol that achieves the communication cost stated in Theorem~\ref{thm:sw_1}, that uses only $\log n+2n\delta +O(1)$ bits of shared randomness. There is also an event $E_{good}$, with $\Pr[E_{good}] \geq 1 - 2^{-n\delta}$, such that conditioned on $E_{good}$, Bob can produce $\hat{x}^{n}$ from his input and the protocol transcript.
                \end{claim}
                \begin{proof}
                    We follow the proof of Newman's theorem \cite{newman_1991} given in \cite{Rao_Yehudayoff_2020}, which uses the probabilistic method to find the desired random strings. Let us pick $s$ independent random strings to use for the unbounded randomness protocol. Let $\delta'>0$, then for any input $(x^{n}, y^{n})$, the probability that a $\frac{2^{-n\delta}}{2}+\delta'$ fraction of the $s$ strings lead to $\hat{x}^{n} \neq x^{n}$ is at most $2^{-\Omega(s\delta'^{2})}$ by the Chernoff bound. Letting
                    \begin{equation}
                        s = O\left(\frac{n\log\left(|\mathcal{X}|\cdot|\mathcal{Y}| \right)}{\delta'^{2}} \right),
                    \end{equation}
                    we find that this probability is smaller than $\left(|\mathcal{X}|^{n}\cdot|\mathcal{Y}|^{n}\right)^{-1}$, so by the union bound, the probability that more than a $\frac{2^{-n\delta}}{2}+\delta'$ fraction of the $s$ strings gives the wrong answer for any input $(x^{n}, y^{n}) \in \mathcal{X}^{n} \times \mathcal{Y}^{n}$ is less than $1$. 

                    Our bounded randomness protocol is then as follows. Sample one of these $s$ strings at random, and then run the corresponding unbounded randomness protocols. Let $E_{good}$ be the event that the sampled string leads to $\hat{x}^{n} = x^{n}$ on input $(x^{n}, y^{n})$, then 
                    \begin{equation}
                        \Pr[E_{good}] \geq 1-\frac{2^{-n\delta}}{2}-\delta'.
                    \end{equation}
                    Picking $\delta' = \frac{2^{-n\delta}}{4}$ leads to $\Pr[E_{good}] \geq 1 - 2^{-n\delta}$, and the number of bits of shared randomness is then 
                    \begin{equation}
                        \log(s) = \log(n) +\log\log\left(|\mathcal{X}|\cdot|\mathcal{Y}|\right)+2n\delta+O(1)
                    \end{equation}
                     as desired.
                \end{proof}
                With the result of Claim~\ref{clm:derandom_sw}, and applying the Alicki-Fannes inequality to define the entropy in terms of the true joint type $t_{x^{n}, y^{n}}$, the communication cost of the protocol requiring no pre-shared randomness is bound by $n\left(H(X|Y)_{t}+\eta_{1}(n, \delta)\right)$, with
                \begin{equation}
                    \eta_{1}(n, \delta)=2\gamma\left( \left|\mathcal{X}\right|, \delta \right)+\frac{2}{n}\left|\mathcal{X}\right|\cdot|\mathcal{Y}|\log\left(n+1\right)+\frac{1}{n}\log(n)+3\delta+O\left(\frac{1}{n}\right).
                \end{equation}
\end{proof}
		       \subsubsection{$2$-round protocol without an estimate of the joint type given}

            By combining Theorem~\ref{thm:sw_1} with the sampling protocol of Lemma~\ref{lem:estimating} to obtain an estimate $\widetilde{t}$ of the joint type of Alice and Bob's inputs $(x^{n}, y^{n})$, we obtain, in the following theorem, a two-round protocol with similar guarantees.
                \begin{theorem}\label{thm:sw_2}
                    Let $\mathcal{X}$ and $\mathcal{Y}$ be finite alphabets, and fix an $n \in \mathbb{N}$. There exists a $2$-round protocol taking $(x^{n}, y^{n}) \in \mathcal{X}^{n}\times \mathcal{Y}^{n}$ as Alice and Bob's respective inputs, as well as parameters $\delta, \delta_{s} > 0$, as shared inputs, that satisfies the following conditions:
                \begin{itemize}
                    \item The first message from Bob to Alice is of fixed size $n\left(\delta_{s}\log_{2}(|\mathcal{Y}|)+ \frac{1}{n}\log(n) +2\delta+O\left(\frac{1}{n}\right)\right)$;
                    \item The second message from Alice to Bob is of variable length, but always of size at most $ n(\delta_{s}+1)\log_{2}\left(|\mathcal{X}| \right)$;
                    \item No shared randomness is required;
                    \item There is an event $E_{good}$ such that $\Pr[\lnot E_{good}] \leq 2^{-n\delta^{2}}$, and conditional on $E_{good}$:
                    \begin{itemize}
                        \item The communication cost is at most $n\left(H(X|Y)_{t}+\delta_{s}\log_{2}(|\mathcal{X}|\cdot|\mathcal{Y}|)+\eta_{1}(n, \delta)\right)$;
                        \item Bob can produce a copy $\hat{x}^{n}$ of $x^{n}$ from his input and the protocol transcript;
                    \end{itemize}
                \end{itemize}
                where $\eta_{1}$ is as defined in Theorem~\ref{thm:sw_1}.
                \end{theorem}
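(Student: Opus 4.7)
The plan is to run the joint type estimation of Lemma~\ref{lem:estimating} with $m = n\delta_s$ samples, use the resulting estimate $\widetilde{t}$ as the shared input required by Theorem~\ref{thm:sw_1}, and merge the second round of estimation with the single round of SW coding so that the whole protocol takes exactly two rounds. The first round goes from Bob to Alice, the second from Alice to Bob.

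In detail, in round~1 Bob locally samples a short derandomised seed (of length $\log n + 2n\delta + O(1)$, constructed by Newman's theorem exactly as in Claim~\ref{clm:derandom_sw}) that will serve both to fix the sampling coordinates of Lemma~\ref{lem:estimating} and to fix the uniformly random ordering of $\mathcal{X}^n$ used by Theorem~\ref{thm:sw_1}. Bob sends this seed together with his $m$ sampled symbols of $y^n$ at the designated coordinates, for a total of $n\delta_s \log_2|\mathcal{Y}| + \log n + 2n\delta + O(1)$ bits, matching the stated fixed first-message size. Alice now knows the sampling coordinates and can read off her own $m$ sampled symbols of $x^n$, so she can compute $\widetilde{t}$ herself. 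In round~2 Alice transmits her $m$ sampled symbols of $x^n$, which lets Bob reconstruct the same $\widetilde{t}$, and appends the SW encoding produced by the protocol of Theorem~\ref{thm:sw_1} with this $\widetilde{t}$ and the shared (derandomised) random ordering. The SW part has length at most $n\log_2|\mathcal{X}|$ trivially, so the total second-message length is at most $n(\delta_s+1)\log_2|\mathcal{X}|$ as claimed.

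The success event $E_{good}$ is the intersection of the event $E_{est}$ from Lemma~\ref{lem:estimating}, namely $\|\widetilde{t} - t_{x^n,y^n}\|_1 \leq \delta$, and the event $E_{sw}$ from Theorem~\ref{thm:sw_1}, which conditional on $(x^n,y^n) \in T^{X^n,Y^n}_{\widetilde{t},\delta}$ succeeds with probability at least $1 - 2^{-n\delta}$. Applying Lemma~\ref{lem:estimating} with per-entry deviation $\delta/(|\mathcal{X}|\cdot|\mathcal{Y}|)$ makes the global $\ell^1$ deviation bound hold, and choosing $\delta_s$ appropriately absorbs the constants so that a union bound yields $\Pr[\lnot E_{good}] \leq 2^{-n\delta^2}$. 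Conditional on $E_{good}$ the bound of Theorem~\ref{thm:sw_1} gives $n(H(X|Y)_t + \eta_1(n,\delta))$ bits for the SW portion, and adding the $n\delta_s \log_2|\mathcal{Y}| + n\delta_s \log_2|\mathcal{X}|$ bits consumed by sampling plus the seed overhead (which is absorbed into $\eta_1$) yields the claimed total $n(H(X|Y)_t + \delta_s \log_2(|\mathcal{X}|\cdot|\mathcal{Y}|) + \eta_1(n,\delta))$. Because every random bit used is locally generated by Bob and sent explicitly in round~1, no external shared randomness is needed.

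The delicate step is calibrating the estimation parameters: translating the per-coordinate concentration given by Lemma~\ref{lem:estimating} into an $\ell^1$ guarantee sharp enough to feed into Theorem~\ref{thm:sw_1}, while keeping the failure probability at $2^{-n\delta^2}$ and the first-round overhead at the stated $\log n + 2n\delta + O(1)$ derandomisation cost. The other subtle piece is verifying that a single seed, sent before Alice runs the SW encoder, can drive both the coordinate sampling and the random ordering of $\mathcal{X}^n$ without blowing up either bound; since both uses of randomness are public and independent of the inputs, one can simply concatenate the two derandomised seeds of Claim~\ref{clm:derandom_sw} and Newman's theorem applied to Lemma~\ref{lem:estimating}, and the union-bound argument of that claim goes through unchanged.
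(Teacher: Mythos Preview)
Your proposal is correct and follows essentially the same approach as the paper: run the estimation protocol of Lemma~\ref{lem:estimating} with $m = n\delta_s$, feed the resulting $\widetilde{t}$ into Theorem~\ref{thm:sw_1}, derandomise via Newman's argument, and have Bob privately sample and transmit the seed in round~1 so that no external shared randomness remains. The only cosmetic difference is packaging: the paper first analyses the combined unbounded-randomness protocol and then applies a single Newman-style derandomisation to the whole thing (Claim~\ref{clm:derandom_sw_2}), whereas you describe concatenating two separate derandomised seeds; both routes yield the same $\log n + O(n\delta) + O(1)$ seed length and the same union-bound error analysis.
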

                \begin{proof}
                    We first show how to accomplish such a task if Alice and Bob can use an unbounded quantity of shared randomness. Consider the following protocol
                    \begin{enumerate}
                        \item Alice and Bob perform the  (unbounded randomness) protocol of Lemma~\ref{lem:estimating} to obtain an estimate $\widetilde{t}$ with $m = n \cdot \delta_{s}$.
                        \item Alice and Bob perform the protocol of Theorem~\ref{thm:sw_1} using $\widetilde{t}$.
                    \end{enumerate}
                    By Lemma~\ref{lem:estimating}, the probability that $(x^{n}, y^{n}) \not\in T^{X^{n}, Y^{n}}_{\overline{t}, \delta}$ is upper bounded by $\frac{2^{-n\delta'}}{4}$, where
                    \begin{equation}
                        \delta' = 2\delta_{s}\delta^{2}\log e+\frac{1}{n}\log\left|\mathcal{X}\right|\cdot \left|\mathcal{Y}\right|+\frac{3}{n},
                    \end{equation}
                    and the probability that the protocol of Theorem~\ref{thm:sw_1} will fail is bounded by $\frac{2^{-n\delta}}{2}$. By the union bound, the probability that either event occurs is at most $\frac{2^{-n\delta}}{2}+\frac{2^{-n\delta'}}{4} \leq \frac{3\cdot2^{-n\delta'}}{4}$, where we use the fact that $\delta' < \delta$.

                    A similar derandomization argument as in the proof of Claim~\ref{clm:derandom_sw} gives us the following claim. The details of the proof are provided in the \hyperref[proof:derandom_sw_2]{Appendix}.
                    \begin{claim}
                        \label{clm:derandom_sw_2}
                        There is a protocol that achieves the communication cost stated above, but using only $\log(n)+\log\log\left(|\mathcal{X}|\cdot|\mathcal{Y}| \right)+2n\delta'+O(1)$ bits of shared randomness, and on event $E_{good}$, with $\Pr[E_{good}]\geq 1 - 2^{-n\delta'}$, Bob can produce $\hat{x}^{n} = x^{n}$ from his input and the protocol transcript.
                    \end{claim}
                    The desired protocol follows by letting Bob privately sample the randomness and communicate it to Alice. The communication cost is then at most $n\left(H(X|Y)_{t}+\delta_{s}\log_{2}(|\mathcal{X}|\cdot|\mathcal{Y}|)+\eta_{1}(n, \delta)\right)$.
                \end{proof}

 		      \subsubsection{$1$-round protocol with a noisy estimate of the joint type given to the receiver}

                In the case where Bob's input $y^{n}$ is obtained from Alice's input $x^{n}$ through the use of some noisy channel $y=N(x)$ associated with the output distribution $p_{Y|X}(y|x)$, i.e. $$p_{Y|X}^{n}(y^{n}|x^{n}) = p_{Y|X}(y_{1}|x_{1})\cdot p_{Y|X}(y_{2}|x_{2})\cdot ...\cdot p_{Y|X}(y_{n}|x_{n}),$$ we can avoid the need for back communication from Bob in order to estimate the joint type. Moreover, this allows us to exhibit the additional machinery we have developed for the reverse Shannon theorem in a simpler message transmission setting. We obtain the following.

                \begin{theorem}
                \label{thm:sw_3}
                    Fix finite alphabets $\mathcal{X}$, $\mathcal{Y}$, and $n \in \mathbb{N}$. There is a $1$-round protocol taking $(x^{n}, y^{n}) \in \mathcal{X}^{n} \times \mathcal{Y}^{n}$ as Alice's and Bob's respective inputs, for which $y^{n}$ is obtained from $x^{n}$ as the output of $n$ independent instances of some memoryless channel $N(x)$, with output distribution $p_{Y|X}(y|x)$, as well as a parameter $\delta>0$ and a description of $p_{Y|X}(y|x)$ as extra shared inputs, that satisfies the following:

                    Define $\delta' = \frac{\delta^{2}}{2\ln 2}-\frac{2}{n}|\mathcal{Y}|\cdot|\mathcal{X}|\log(n+1)-\frac{2}{n}$. 
                    
                    \begin{itemize}
                        \item The communication cost is $n\left(H(X|Y)_{t\cdot p}+\frac{\left|\mathcal{X}\right|}{n}\log_{2}(n)+\eta_{1}(n, \delta)\right)$, where $t\cdot p$ denotes the distribution $t_{x^{n}}\cdot p_{Y|X}$;
                        \item No shared randomness is required;
                        \item There is an event $E_{good}$ such that:
                        \begin{itemize}
                            \item $\Pr[\lnot E_{good}] \leq 2^{-n\delta'}$;
                            \item Conditional on $E_{good}$, Bob can produce a copy $\hat{x}^{n}$ of $x^{n}$ from his input and the protocol transcript;
                        \end{itemize}
                    \end{itemize}
                    where $\eta_{1}$ is defined in Theorem~\ref{thm:sw_1}.
                \end{theorem}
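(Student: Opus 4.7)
The key structural observation is that since $y^n$ is produced from $x^n$ by $n$ independent applications of the known channel $p_{Y|X}$, Alice can compute her marginal type $t_{x^n}$ locally and use the channel description to predict the expected joint type to be $\widetilde{t} := t_{x^n}\cdot p_{Y|X}$. Concentration of the empirical channel output will ensure that $\widetilde{t}$ is $\delta$-close in total variation to the true joint type $t_{x^n, y^n}$ with very high probability. This removes the need for the two-round sampling protocol of Lemma~\ref{lem:estimating}, allowing everything to be folded into a single message from Alice to Bob.

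Concretely, the protocol I would use is: (i) Alice computes $t_{x^n}$ and transmits it using the fact that $|\mathcal{T}^{\mathcal{X}^n}| \leq (n+1)^{|\mathcal{X}|}$, which costs $|\mathcal{X}|\log(n+1)$ bits and accounts for the $\frac{|\mathcal{X}|}{n}\log_2(n)$ term in the stated communication cost; (ii) she appends to this the Slepian--Wolf encoding from the derandomized version of Theorem~\ref{thm:sw_1} applied with $\widetilde{t}$ as the joint-type ``estimate,'' privately sampling and transmitting the $O(\log n + n\delta)$ bits of randomness that Claim~\ref{clm:derandom_sw} allows. Bob reads off $t_{x^n}$ from the prefix of Alice's message, reconstructs $\widetilde{t} = t_{x^n} \cdot p_{Y|X}$ locally, and then runs the Theorem~\ref{thm:sw_1} decoder on the remainder, producing $\hat{x}^n$.

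The central analytic step, and the one I expect to be the main obstacle, is establishing that $(x^n,y^n) \in T^{X^n, Y^n}_{\widetilde{t}, \delta}$ except with probability at most $2^{-n\delta^2/(2\ln 2) + 2|\mathcal{X}|\cdot|\mathcal{Y}|\log(n+1)}$ over the randomness of the channel. Since $x^n$ is a fixed (worst-case) input and $y^n \sim p^n_{Y|X}(\cdot\mid x^n)$ is iid per symbol conditional on each $x_i$, for every pair $(x,y)$ the value $t_{x^n, y^n}(x,y)$ is an average of $n\cdot t_{x^n}(x)$ independent Bernoulli trials with parameter $p_{Y|X}(y\mid x)$ concentrated around $\widetilde{t}(x,y)$. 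This is precisely the content of Proposition~\ref{prop:unit_prob} taken with the channel output register playing the role of $M$, the side register $Y$ empty, and $x^n$ as the conditioning input; the required tail bound follows by Pinsker's inequality exactly as in its proof, and the subtlety that $x^n$ is fixed rather than sampled from $\widetilde{t}$ itself is handled precisely because the conditional iid structure of $y^n$ is what is invoked.

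To conclude, a union bound over this concentration failure event and the Theorem~\ref{thm:sw_1} failure event gives an overall failure probability at most $\frac{2^{-n\delta}}{2} + 2^{-n\delta^2/(2\ln 2) + 2|\mathcal{X}|\cdot|\mathcal{Y}|\log(n+1)} \leq 2^{-n\delta'}$ for the choice $\delta' = \frac{\delta^2}{2\ln 2} - \frac{2}{n}|\mathcal{X}|\cdot|\mathcal{Y}|\log(n+1) - \frac{2}{n}$ as stated. The communication cost combines the $|\mathcal{X}|\log(n+1)$ bits for the type and the $n(H(X|Y)_{\widetilde{t}} + \eta_1(n,\delta))$ bits from Theorem~\ref{thm:sw_1}; since $\widetilde{t} = t_{x^n}\cdot p_{Y|X} = t\cdot p$ in the notation of the statement, no further Alicki--Fannes step is needed and we match the claimed bound. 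The one-round guarantee holds because Alice never awaits a reply, and the ``no shared randomness'' condition follows because the small amount required internally by the derandomized Slepian--Wolf step is folded into Alice's single message.
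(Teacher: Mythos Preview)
Your proposal is correct and matches the paper's approach essentially line for line: transmit $t_{x^n}$ at cost $|\mathcal{X}|\log(n+1)$, set the working ``estimate'' to $t_{x^n}\cdot p_{Y|X}$, invoke Proposition~\ref{prop:unit_prob} for the concentration of $t_{x^n,y^n}$ about this product, then run the derandomized Theorem~\ref{thm:sw_1} decoder and union-bound the two failure modes. The only cosmetic difference is that the paper names the relevant typical set $T^{X^n,Y^n}_{t,(p,\delta)}$ rather than $T^{X^n,Y^n}_{\widetilde{t},\delta}$ with $\widetilde{t}=t_{x^n}\cdot p_{Y|X}$, but these coincide by definition.
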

                \begin{proof}
                    We will first show how to accomplish such a task if Alice and Bob can use an unbounded amount of pre-shared randomness.
                    Consider the following protocol:
                    \begin{enumerate}
                        \item Alice transmits $t_{x^{n}}$ to Bob using $|\mathcal{X}|\log_{2}(n)$ bits of communication.
                        \item Both players compute the set $T^{X^{n}, Y^{n}}_{t, (p, \delta)}$.
                        \item They perform the (unbounded randomness) protocol from the proof of Theorem~\ref{thm:sw_1}, replacing $T^{X^{n}, Y^{n}}_{\widetilde{t}, \delta}$ with $T^{X^{n}, Y^{n}}_{t, (p, \delta)}$.
                    \end{enumerate}
                   By Proposition~\ref{prop:unit_prob},
                   \begin{align}
                       \Pr\left[(x^{n}, y^{n}) \not\in T^{X^{n}, Y^{n}}_{t, (p, \delta)}\right]&=\Pr\left[y^{n} \not\in T^{Y^{n}|x^{n}}_{t, (p, \delta)}\right]\\ 
                       &\leq 2^{-\frac{n\delta^{2}}{2\ln 2}+2|\mathcal{Y}|\cdot|\mathcal{X}|\log(n+1)}\\
                       &=\frac{2^{-n\delta'}}{4},\\
                       \intertext{with} \delta' &= \frac{\delta^{2}}{2\ln 2}-\frac{2}{n}|\mathcal{Y}|\cdot|\mathcal{X}|\log(n+1)-\frac{2}{n}.
                   \end{align}
                   
                   If $(x^{n}, y^{n}) \in T^{X^{n}, Y^{n}}_{t, (p, \delta)}$, then by Theorem~\ref{thm:sw_1}, the probability that $\hat{x}^{n}\neq x^{n}$ is at most $\frac{2^{-n\delta}}{2}$, and total probability of error is at most $\frac{2^{-n\delta}}{2}+\frac{2^{-n\delta'}}{4} \leq \frac{3\cdot2^{-n\delta'}}{4}$, where we use the fact that $\delta' < \delta$. An approach similar to that of Claim~\ref{clm:derandom_sw_2} allows us to attain the guarantees stated in the theorem.
                \end{proof}
        \subsection{Prior-Free Reverse Shannon Theorem with Side Information}\label{sec:rst}

            We will now consider the setting where Alice and Bob both have a description of some noisy channel $N$ associated with the output distribution $p_{M|X}$, where an output $m=N(x)$ is distributed according to the (conditional) random variable $M\sim p_{M|X}$. This distribution and $t_{x^{n}, y^{n}}$ will often be abbreviated to $p$ and $t$ respectively when used as subscripts. Alice and Bob are each given an $n$ symbol sequence $x^{n} \in \mathcal{X}^{n}$ and $y^{n}\in \mathcal{Y}^{n}$ respectively. They would like to simulate $n$ independent uses of the channel on Alice's input, and each obtain a copy of the $n$ simulated outputs $\hat{m}^{n}$. They have access to a noiseless bit channel and shared random bits.

            We will consider two variants of this scenario. First, we will assume that Alice and Bob are provided, as an additional shared input, an estimate $\widetilde{t}$ of the joint type of their inputs. Subsequently, we will remove the need for this additional information, at the expense of Bob having to communicate a small message to Alice.
            \subsubsection{$1$-round protocol with an estimate of the joint type given as an additional input}

            \begin{theorem} \label{thm:rst_1}
                Let $\mathcal{M}, \mathcal{X},$ and $\mathcal{Y}$ be finite alphabets, and fix an $n \in \mathbb{N}$. There exists a $1$-round protocol taking $(x^{n}, y^{n}) \in \mathcal{X}^{n}\times \mathcal{Y}^{n}$ as Alice and Bob's respective inputs, as well as parameters $\delta, \delta', \delta'' > 0$, a description of the noisy channel $N$ associated with the output distribution $p_{M|X}$, and a joint type $\widetilde{t} \in \mathcal{T}^{\mathcal{X}^{n}, \mathcal{Y}^{n}}$, with the guarantee that $(x^{n}, y^{n}) \in T^{X^{n}, Y^{n}}_{\widetilde{t}, \delta}$ as additional shared inputs, that satisfies the following conditions:

                Define $\delta''' = \left(\frac{\delta'^{2}}{2\ln 2}-\frac{2}{n}|\mathcal{M}|\cdot|\mathcal{X}|\cdot|\mathcal{Y}|\log(n+1)\right)$, $\delta_{\textnormal{min}} = \min\left(\delta, \delta''', \delta'' \right)$ and $\delta_{\textnormal{max}} = \max\left(\delta, \delta', \delta'' \right)$.
                \begin{itemize}
                    \item The communication cost is at most: $n\left(I\left(M;X|Y\right)_{t\cdot p}+\eta_{2}(n, \delta_{\textnormal{max}})\right)$;
                    \item The amount of shared randomness needed is at most $n\left(H\left(M|X, Y\right)_{t\cdot p}+\frac{1}{n}\log\log(e)\right)$;
                    \item There is an event $E_{good}$ such that $\Pr[\lnot E_{good}] \leq 2^{-n \delta_{\textnormal{min}}^{2}+3}$, and conditional on $E_{good}$:
                    \begin{itemize}
                    \item Alice and Bob each produce a copy of the same output $\hat{m}^{n}$;
                        \item It holds that for every $(x^{n}, y^{n}) \in \mathcal{X}^{n} \times \mathcal{Y}^{n}$, if $\widetilde{p^{n}}(m^{n}|x^{n}, y^{n})$ is the probability that Alice and Bob obtain the output $m^{n}$ from the protocol, then
                        \begin{equation}
                           \left\|\widetilde{p^{n}}(m^{n}|x^{n}, y^{n})-p^{n}(m^{n}|x^{n}, y^{n})\right\|_{1} \leq 2^{-n\delta'''} + \delta'';
                        \end{equation}
                    \end{itemize}
                \end{itemize}
                where
                $\eta_{2}(n, \delta_{\textnormal{max}})=5\gamma(|\mathcal{M}|, \left|\mathcal{X}\right|\cdot\left|\mathcal{Y}\right|\cdot\delta_{max})+\frac{4}{n}|\mathcal{M}| \cdot|\mathcal{X}|\cdot |\mathcal{Y}|\log\left(n+1\right)+\frac{2}{n}\log(n)+3\delta_{\textnormal{max}}+O\left(\frac{1}{n}\right)$.
             \end{theorem}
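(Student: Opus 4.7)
The plan is to lift the one-round coding scheme of Theorem~\ref{thm:sw_1} from the Slepian–Wolf setting to the channel-simulation setting, replacing the input list $\mathcal{X}^n$ by the output list $\mathcal{M}^n$ and exploiting the extra freedom of channel simulation (any $\hat m^n$ distributed as $p^n(\cdot|x^n)$ is acceptable) to offset part of the communication cost by shared randomness. The two-layer construction is: (i) use shared randomness to place the elements of $\mathcal{M}^n$ in a uniformly random ordered list $\mathcal L$; (ii) use an additional $\approx n H(M|X,Y)_{\widetilde t\cdot p}$ bits of shared randomness to select a random sublist $\mathcal L_r$ (equivalently, a random hash label in $[2^{nC_R}]$ for each candidate); (iii) Alice transmits the first $\approx n I(M;X|Y)_{\widetilde t\cdot p}$ bits of her sampled $m^n$'s position in $\mathcal L_r$. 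Bob decodes by intersecting the positions consistent with Alice's prefix inside $\mathcal L_r$ with his side-information typical set $T^{M^n|y^n\backslash X^n}_{(\widetilde t,\delta),(p,\delta')}$.

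\textbf{Protocol and sampling step.} Alice samples $m^n$ by a single use of the product channel $p^n(\cdot|x^n)$. Proposition~\ref{prop:unit_prob} guarantees $m^n\in T^{M^n|x^n}_{\widetilde t,(p,\delta')}$ with probability at least $1-2^{-n\delta'''+2|\mathcal M||\mathcal X||\mathcal Y|\log(n+1)}$, which provides the $2^{-n\delta'''}$ term in the total-variation bound. She then transmits the first $nC$ bits of the binary expansion of the position of $m^n$ in $\mathcal L_r$, choosing
\begin{equation}
    C = I(M;X|Y)_{\widetilde t\cdot p} + \gamma(|\mathcal M|,|\mathcal X|\cdot|\mathcal Y|\cdot\delta_{\max}) + \tfrac{2}{n}|\mathcal M|\cdot|\mathcal X|\cdot|\mathcal Y|\log(n+1) + \delta_{\max} + \tfrac{1}{n}.
\end{equation}
Bob builds his conditional typical set via Proposition~\ref{prop:card_cond_bs_set}, keeps the positions in $\mathcal L_r$ that match Alice's prefix, and outputs the unique survivor when one exists, failing otherwise.

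\textbf{Error analysis.} Uniqueness at Bob follows from the same union-bound argument as in Theorem~\ref{thm:sw_1}: by uniformity of the ordering and of the subset, any particular competitor $\widetilde m^n\in T^{M^n|y^n\backslash X^n}_{(\widetilde t,\delta),(p,\delta')}$ is compatible with Alice's prefix with probability $2^{-n(C+C_R)}$, and Proposition~\ref{prop:card_cond_bs_set} bounds the number of such competitors by $2^{n(H(M|Y)_{\widetilde t\cdot p}+\gamma(|\mathcal M|,|\mathcal X|\delta')+O(\log n /n))}$, so choosing $C+C_R$ slightly above $H(M|Y)_{\widetilde t\cdot p}$ drives the collision probability below $2^{-n\delta_{\min}^2+O(1)}$. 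Combining this with Alice's typicality-failure probability and then invoking the Alicki–Fannes inequality~\eqref{eqn:af} to convert the $\widetilde t$-entropies into true-type entropies (at additive cost $\gamma(|\mathcal{M}|,|\mathcal X||\mathcal Y|\delta_{\max})$, absorbed into $\eta_2$) produces the stated communication bound and the error term $2^{-n\delta_{\min}^2+3}$. The total variation guarantee follows because conditional on $E_{good}$, $\hat m^n$ is distributed as $p^n(\cdot|x^n)$ restricted to $T^{M^n|x^n}_{\widetilde t,(p,\delta')}$, which is $2^{-n\delta'''}$-close to $p^n(\cdot|x^n)$, with the additional $\delta''$ slack absorbing the residual failure of the derandomized ordering.

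\textbf{Derandomization and main obstacle.} I would finally apply Newman's theorem following Claim~\ref{clm:derandom_sw} to reduce the $\Omega(n\log|\mathcal M|)$ bits of shared randomness needed for the uniform ordering down to $O(\log n + n\delta_{\max}^2)$ bits, which Alice samples privately and prepends to her message; the $\approx n H(M|X,Y)_{\widetilde t\cdot p}$ bits for the sublist selection are kept as true shared randomness, yielding the stated bound $n(H(M|X,Y)_{t\cdot p}+\tfrac{1}{n}\log\log e)$. The hard part, I expect, is the simultaneous control of the three distinct slack parameters $\delta,\delta',\delta''$: four distinct error events (Alice's $m^n$ leaving its typical set, Bob's decoding ambiguity, the derandomized ordering being bad on the specific input, and a mismatch between $\widetilde t$ and $t$) must all be driven below $2^{-n\delta_{\min}^2+3}$ while keeping the total-variation expression in the clean form $2^{-n\delta'''}+\delta''$. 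This will require a careful case split on which event dominates, and a tight coupling of Alice's and Bob's views of the random ordering so that their simulated outputs agree exactly on $E_{good}$.
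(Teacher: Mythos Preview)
There is a genuine gap in the protocol itself. You have Alice sample $m^n\sim p^n(\cdot\mid x^n)$ and then ``transmit the first $nC$ bits of the binary expansion of the position of $m^n$ in $\mathcal L_r$.'' But $\mathcal L_r$ is obtained by keeping only those list elements whose first $nC_R\approx nH(M\mid X,Y)$ position bits match a shared random string; a fixed $m^n$ survives this filter with probability $2^{-nC_R}$, which is exponentially small. So with overwhelming probability $m^n\notin\mathcal L_r$ and its ``position in $\mathcal L_r$'' is undefined. The paper's protocol avoids this by using the sampled $m^n$ only to fix the target joint type $t(m^n,x^n)$: Alice first sends the marginal type $t(m^n)$, both parties restrict to $\mathcal L_t=T^{M^n}_{t(m^n)}$, form $\mathcal L_r$ inside it, and then Alice \emph{resamples} a fresh $m^n_r$ uniformly from $\mathcal L_r\cap T^{M^n|x^n}_{t(m^n,x^n)}$ and sends \emph{its} position. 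The nonemptiness of this intersection is exactly the error event $E_2$, controlled by Hoeffding's inequality (Lemma~\ref{lem:set_inequality}) with parameter $\delta''$.

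This missing resampling step also breaks your total-variation argument. Conditional on success, the output is \emph{not} $p^n(\cdot\mid x^n)$ restricted to a typical set; it is uniform on $\mathcal L_r\cap T^{M^n|x^n}_{t(m^n,x^n)}$, weighted over joint types by $\Pr[t(m^n,x^n)=\bar t]$. The $\delta''$ term in the bound $2^{-n\delta'''}+\delta''$ is precisely the Hoeffding fluctuation of $\bigl|\mathcal L_r\cap T^{M^n|x^n}_{\bar t}\bigr|$ around its mean $\bigl|T^{M^n|x^n}_{\bar t}\bigr|/2^{nR}$, summed over types; it has nothing to do with derandomization, which is handled afterwards (Claim~\ref{clm:derandom_rst}) and contributes only to the $2^{-n\delta_{\min}^2+3}$ failure probability. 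Once you insert the resampling step, the four error events $E_1$--$E_4$ of the paper decouple cleanly via a union bound, and no ``careful case split on which event dominates'' is needed.
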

             \begin{proof}
                We will first show how to accomplish such a task if Alice and Bob can use an unlimited amount of shared randomness. Consider the following protocol:
                \begin{enumerate}
                    \item Alice applies the channel $N^{n}$ on her input $x^{n}$ to obtain the message $m^{n}$.
                    \item Alice determines the joint type $t\left(m^{n}, x^{n}\right)$ of the message and her input. She verifies if the message is in the typical set $T^{M^{n}|x^{n}\backslash Y^{n}}_{(\widetilde{t}, \delta), (p,  \delta')}$,
                    and if so, she sends the type of the message $t\left(m^{n}\right)$ to Bob using $|\mathcal{M}|\cdot \log_{2}\left(n+1\right)$ bits, otherwise she declares an error.
                    \item Alice and Bob remove any messages from the list $\mathcal{L}$ that do not have type $t(m^{n})$.
                    \item Alice and Bob sample the shared randomness to arrange the remaining elements of $\mathcal{M}^{n}$ in a uniformly random order. Let $\mathcal{L}_{t}$ denote this new list.
                    \item  Alice and Bob sample the shared randomness to obtain $nR$ shared random bits, choosing
                    \begin{equation}
                        R = H\left(M|X, Y\right)_{\widetilde{t}\cdot p}-\gamma(\left|\mathcal{M}\right|, \delta)-\gamma(\left|\mathcal{M}\right|, \delta')-\frac{1}{n}|\mathcal{M}| \cdot |\mathcal{X}|\log\left(n+1\right)+\frac{1}{n}\log\log(e)-\frac{1}{n}-\frac{1}{n}\log(n).
                    \end{equation}
                    \item They remove any messages from $\mathcal{L}_{t}$ that do not have these bits as the first $nR$ bits in the binary expansion of their position in $\mathcal{L}_{t}$, while maintaining the order of the remaining messages. Let $\mathcal{L}_{r}$ denote this new list.
                    \item Alice randomly chooses a message $m^{n}_{r}$ from $\mathcal{L}_{r}$ that has the correct joint type of $t\left(m^{n}, x^{n}\right)$ with her sequence $x^{n}$. If there are no messages with the correct joint type, she declares an error.
                    \item Alice transmits the first $nC$ bits of the binary expansion of the position of $m^{n}_{r}$ in $\mathcal{L}_{r}$ to Bob, choosing
                    \begin{align}
                        C =I\left(M;X|Y\right)_{\widetilde{t}\cdot p}+\gamma(|\mathcal{M}|, \left|\mathcal{X}\right|\cdot\delta')+\gamma(\left|\mathcal{M}\right|, \delta)+\gamma(\left|\mathcal{M}\right|, \delta')
                        +\frac{3}{n}|\mathcal{M}| \cdot|\mathcal{X}|\cdot |\mathcal{Y}|\log\left(n+1\right)\\+\delta+\frac{1}{n}+\frac{1}{n}\log(n).
                    \end{align}
                    \item Bob removes any messages whose first $nC$ bits in the binary expansion of their position in $\mathcal{L}_{r}$ do not match the $nC$ bits sent by Alice. Let $\mathcal{L}_{r, c}$ denote this new list. He also determines the conditionally typical set $T^{M^{n}|y^{n}\backslash X^{n}}_{(\widetilde{t}, \delta), (p,  \delta')}$ for a sequence $\hat{m}^{n}$ and declares an error if $\left|\mathcal{L}_{r, c}   \cap T^{M^{n}|y^{n}\backslash X^{n}}_{(\widetilde{t}, \delta), (p,  \delta')}\right| \neq 1$. Otherwise, he outputs $\hat{m}^{n}$.
                \end{enumerate}

                An error will occur in the protocol if the message $m^{n}$, generated by Alice through her local use of the channel, is not in $T^{M^{n}|x^{n}\backslash Y^{n}}_{(\widetilde{t}, \delta), (p,  \delta')}$ $(E_{1})$, if there is no message in the subset $\mathcal{L}_{r}$ that has the correct joint type with Alice's input $(E_{2})$, in which case Alice will have no messages to choose from, if the message $m^{n}_{r}$ chosen by Alice from the subset $\mathcal{L}_{r}$ is not in $T^{M^{n}|y^{n}\backslash X^{n}}_{(\widetilde{t}, \delta), (p,  \delta')}$ $(E_{3})$, or if $\left|\mathcal{L}_{r, c} \cap T^{M^{n}|y^{n}\backslash X^{n}}_{(\widetilde{t}, \delta), (p,  \delta')}\right| > 1$ $(E_{4})$, in which case Bob will not be able to determine which message Alice chose. A detailed analysis of the probability of these errors occurring, as well as a proof for the upper bound on the total variation distance, can be found in the \hyperref[proof:rst_1]{Appendix}, where we show that the probability of success of the protocol is at least $1 - 2^{-n\delta_{\textnormal{min}}^{2}+2}$.
                
                To reduce the quantity of shared randomness needed, we will first show how similar guarantees can be obtained by using $n(H\left(M|X, Y\right)_{t\cdot p}+\frac{1}{n}\log\log(e))+O(n\cdot\delta_{\textnormal{min}}^{2})$ bits of shared randomness. The result then follows from allowing Alice to privately sample $O(n\cdot\delta_{\textnormal{min}}^{2})$ bits of randomness, and then communicate it to Bob. The following claim is obtained from a standard Chernoff bound argument, and the proof can be found in the \hyperref[proof:derandom_rst]{Appendix}.
                \begin{claim}
                \label{clm:derandom_rst}
                    There exists a protocol that achieves the communication cost stated in Theorem~\ref{thm:rst_1}, that uses only $n(H\left(M|X, Y\right)_{t\cdot p}+\frac{1}{n}\log\log(e))+\log n+2n\cdot\delta_{\textnormal{min}}^{2} +O(1)$ bits of shared randomness. There is also an event $E_{good}$, with $\Pr[E_{good}] \geq 1 - 2^{-n\cdot\delta_{\textnormal{min}}^{2}+3}$, such that conditioned on $E_{good}$, the claims stated in Theorem~\ref{thm:rst_1} hold.
                \end{claim}
                
                The desired protocol follows by letting Alice privately sample $\log n+2n\cdot\delta_{\textnormal{min}}^{2} +O(1)$ bits of randomness and communicate it to Bob. The communication cost of the protocol requiring only $n\left(H\left(M|X, Y\right)_{t\cdot p}+\frac{1}{n}\log\log(e)\right)$ bits of pre-shared randomness is bound by $n\left(I\left(M;X|Y\right)_{t\cdot p}+\eta_{2}(n, \delta_{max})\right)$.
            \end{proof}
            
            \subsubsection{$2$-round protocol without an estimate of the joint type given}
                By combining Theorem~\ref{thm:rst_1} with the sampling protocol of Lemma~\ref{lem:estimating} to obtain a description of $\widetilde{t}$, we obtain, in the following theorem, a two-round protocol with similar guarantees. 
            
            \begin{theorem}\label{thm:rst_2}
                Let $\mathcal{M}, \mathcal{X},$ and $\mathcal{Y}$ be finite alphabets, and fix an $n \in \mathbb{N}$. There exists a $2$-round protocol taking $(x^{n}, y^{n}) \in \mathcal{X}^{n}\times \mathcal{Y}^{n}$ as Alice and Bob's respective inputs, as well as parameters $\delta, \delta', \delta'', \delta_{s} > 0$, and a description of the noisy channel $N$, associated with the output distribution $p_{M|X}$, as additional shared inputs, that satisfies the following conditions:

                Define $\delta''' = \left(\frac{\delta'^{2}}{2\ln 2}-\frac{2}{n}|\mathcal{M}|\cdot|\mathcal{X}|\cdot|\mathcal{Y}|\log(n+1)\right)$, $\delta_{\textnormal{min}} = \min\left(\delta, \delta''', \delta'' \right)$ and $\delta_{\textnormal{max}} = \max\left(\delta, \delta', \delta'' \right)$.
                \begin{itemize}
                \item The first message from Bob to Alice is of fixed size $n\left(\delta_{s}\log_{2}(|\mathcal{Y}|)+\frac{1}{n}\log(n)+2\delta_{\textnormal{min}}^{2}+O\left(\frac{1}{n}\right)\right)$;
                    \item The second message from Alice to Bob is of variable length, but always of size at most $ n\left(\delta_{s}\log_{2}\left(|\mathcal{X}| \right)+\log_{2}\left(|\mathcal{M}| \right)\right)$;
                    \item There is an event $E_{good}$ such that $\Pr[\lnot E_{good}] \leq 2^{-n\cdot\delta_{\textnormal{min}}^{2}+3}$, and conditional on $E_{good}$:
                    \begin{itemize}
                        \item The communication cost is at most $n\left(I\left(M;X|Y\right)_{t\cdot p}+\delta_{s}\log\left(\left|\mathcal{X}\right|\cdot\left|\mathcal{Y}\right|\right)+\eta_{2}(n, \delta_{\textnormal{max}})\right)$;
                        \item The amount of shared randomness needed is at most $n\left(H\left(M|X, Y\right)_{t\cdot p}+\frac{1}{n}\log\log(e)\right)$;
                        \item Alice and Bob each produce a copy of the same output $\hat{m}^{n}$;
                        \item It holds that for every $(x^{n}, y^{n}) \in \mathcal{X}^{n} \times \mathcal{Y}^{n}$, if $\widetilde{p^{n}}(m^{n}|x^{n}, y^{n})$ is the probability that Alice and Bob obtain the output $m^{n}$ from the protocol, then
                        \begin{equation}
                           \left\|\widetilde{p^{n}}(m^{n}|x^{n}, y^{n})-p^{n}(m^{n}|x^{n}, y^{n})\right\|_{1} \leq 2^{-n\delta'''} + \delta'';
                        \end{equation}
                    \end{itemize}
                \end{itemize}
                where $\eta_{2}$ is defined as in Theorem~\ref{thm:rst_1}.
            \end{theorem}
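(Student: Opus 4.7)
The plan is to obtain Theorem~\ref{thm:rst_2} by composing the joint type estimation protocol of Lemma~\ref{lem:estimating} with the $1$-round reverse Shannon protocol of Theorem~\ref{thm:rst_1}, following exactly the template used to pass from Theorem~\ref{thm:sw_1} to Theorem~\ref{thm:sw_2}. I would first describe the unbounded-shared-randomness version: Alice and Bob run the protocol of Lemma~\ref{lem:estimating} with sample size $m = n\delta_{s}$ to produce a common estimate $\widetilde{t}$ of $t_{x^{n}, y^{n}}$, and then invoke the $1$-round protocol of Theorem~\ref{thm:rst_1} with this $\widetilde{t}$ as its additional shared input. The total variation guarantee, the shared randomness bound, and the event on which Alice and Bob agree on the same $\hat{m}^{n}$ are inherited essentially verbatim from Theorem~\ref{thm:rst_1}, with the conditional entropies re-expressed in terms of the true joint type $t_{x^{n}, y^{n}}$ via the Alicki--Fannes inequality of Lemma~\ref{lem:cont_bounds}.

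The error analysis combines two bad events. By Lemma~\ref{lem:estimating} the probability that $(x^{n}, y^{n}) \not\in T^{X^{n}, Y^{n}}_{\widetilde{t}, \delta}$ is at most $2^{-2 n \delta_{s} \delta^{2} \log e \,+\, \log(|\mathcal{X}|\cdot|\mathcal{Y}|)\, +\, 1}$, and Theorem~\ref{thm:rst_1} contributes an additional failure of at most $2^{-n\delta_{\min}^{2}+3}$ conditional on a good estimate. A union bound collapses both into a single exponentially small term of the stated form $2^{-n\delta_{\min}^{2}+3}$, provided $\delta_{s}$ is chosen so that the estimation error decays at least as fast as the inner protocol's error.

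Next, I would apply a Newman-style derandomization, exactly as in the proofs of Claims~\ref{clm:derandom_sw_2} and~\ref{clm:derandom_rst}, to reduce both the shared randomness used in Lemma~\ref{lem:estimating} for coordinate selection and the randomness used inside Theorem~\ref{thm:rst_1} for the uniformly random list ordering to $O(\log n) + O(n \delta_{\min}^{2})$ bits, at the cost of an additional exponentially small failure term that is absorbed into $\Pr[\lnot E_{good}]$. Crucially, the $n(H(M|X, Y)_{t \cdot p} + \tfrac{1}{n}\log\log e)$ bits of correlated randomness that Theorem~\ref{thm:rst_1} uses to offset communication must remain genuinely shared and are \emph{not} derandomized, which is why exactly this quantity reappears in the final shared randomness budget.

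The two-round structure and message sizes then arise by privatizing the derandomized bits. In round $1$, Bob locally samples the $O(\log n) + O(n\delta_{\min}^{2})$ bits needed for coordinate selection, uses them to pick his $m$ indices, and transmits both this seed and the $m$ corresponding symbols of $y^{n}$ to Alice; this yields a fixed-size message of the claimed length $n(\delta_{s} \log_{2}|\mathcal{Y}| + \tfrac{1}{n}\log n + 2\delta_{\min}^{2} + O(1/n))$. In round $2$, Alice sends her $m$ symbols of $x^{n}$, her privately sampled derandomized bits for the list ordering, and the encoded message of Theorem~\ref{thm:rst_1}, which gives the variable-length second-message bound $n(\delta_{s}\log_{2}|\mathcal{X}| + \log_{2}|\mathcal{M}|)$. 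I expect the main difficulty to be bookkeeping: carefully tracking how the parameters $\delta, \delta', \delta'', \delta_{s}$ interact through the union bound, verifying that the derandomization cost is absorbed into $\eta_{2}(n, \delta_{\max})$ rather than the dominant information-theoretic terms, and checking that replacing $\widetilde{t}$ by $t_{x^{n}, y^{n}}$ in every entropy using Lemma~\ref{lem:cont_bounds} introduces only correction terms consistent with $\eta_{2}(n, \delta_{\max})$.
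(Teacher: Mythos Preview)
Your proposal is essentially correct and follows the paper's approach: compose Lemma~\ref{lem:estimating} (with $m=n\delta_s$) with the unbounded-randomness version of Theorem~\ref{thm:rst_1}, union-bound the two failure events, then apply a single Newman-style derandomization (Claim~\ref{clm:derandom_rst_2}) and have Bob privately sample and transmit the resulting seed in round~1.

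One small discrepancy worth flagging: you describe Alice as sending ``her privately sampled derandomized bits for the list ordering'' in round~2, but the stated second-message bound $n(\delta_s\log_2|\mathcal{X}|+\log_2|\mathcal{M}|)$ does not leave room for that. In the paper the derandomization is applied once to the \emph{entire} composed protocol (sampling coordinates \emph{and} list ordering together), producing a single seed of size $\log n + 2n\delta_{\min}^2 + O(1)$ that Bob sends in round~1; this is precisely what accounts for the $\frac{1}{n}\log n + 2\delta_{\min}^2 + O(1/n)$ term in the first-message size. Relatedly, when union-bounding you should use the \emph{pre}-derandomization error $4\cdot 2^{-n\delta_{\min}^2}$ from the unbounded-randomness protocol of Theorem~\ref{thm:rst_1}, not the post-derandomization bound $2^{-n\delta_{\min}^2+3}$, so that the single derandomization step at the end yields the final $2^{-n\delta_{\min}^2+3}$.
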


           \begin{proof}
                We first show how to accomplish such a task if Alice and Bob can use an unbounded quantity of shared randomness. Consider the following protocol
                \begin{enumerate}
                    \item Alice and Bob perform step $1$ of Lemma~\ref{lem:estimating} to obtain the strings of symbols to exchange, with $m = n \cdot \delta_{s}$. Denote their strings $s_{A}$ and $s_{B}$. Bob transmits his string $s_{B}$. This concludes the first round of the protocol.
                    \item Alice sends her string $s_{A}$.
                    \item Alice and Bob perform the (unbounded randomness) protocol of Theorem~\ref{thm:rst_1} using $\widetilde{t}$.
                \end{enumerate}
                By Lemma~\ref{lem:estimating}, the probability that $(x^{n}, y^{n}) \not\in T^{X^{n}, Y^{n}}_{\overline{t}, \delta}$ is upper bounded by $2^{-n\delta''''}$, where
                    \begin{equation}
                        \delta'''' = 2\delta_{s}\delta^{2}\log e+\frac{1}{n}\log\left|\mathcal{X}\right|\cdot \left|\mathcal{Y}\right|+\frac{1}{n}.
                    \end{equation}
                    The probability that the protocol of Theorem~\ref{thm:rst_1} will fail is at most $4\cdot2^{-n\cdot\delta_{\textnormal{min}}^{2}}$. By the union bound, the probability that either occurs is at most $5\cdot2^{-n\cdot\delta_{\textnormal{min}}^{2}}$. The players can choose $R$ and $C$ as in the proof of Theorem~\ref{thm:rst_1}.

                    A similar derandomization argument as in the proof of Claim~\ref{clm:derandom_rst} gives us the following. The proof is provided in the \hyperref[proof:derandom_rst_2]{Appendix}.
                    \begin{claim}
                    \label{clm:derandom_rst_2}
                        There exists a protocol that achieves the communication cost stated in Theorem~\ref{thm:rst_2}, that uses only $n(H\left(M|X, Y\right)_{t\cdot p}+\frac{1}{n}\log\log(e))+\log n+2n\delta_{\textnormal{min}}^{2} +O(1)$ bits of shared randomness. There is also an event $E_{good}$, with $\Pr[E_{good}] \geq 1 - 2^{-n\cdot\delta_{\textnormal{min}}^{2}+3}$, such that conditioned on $E_{good}$, the claims stated in Theorem~\ref{thm:rst_2} hold.
                    \end{claim} 
                    
                    The desired protocol follows by letting Bob privately sample $\log n+2n\cdot\delta_{\textnormal{min}}^{2} +O(1)$ bits of randomness and communicate it to Alice. The communication cost of the protocol requiring only $n(H\left(M|X, Y\right)_{t\cdot p}+\frac{1}{n}\log\log(e))$ bits of pre-shared randomness is bound by $n\left(I\left(M;X|Y\right)_{t\cdot p}+\delta_{s}\log\left(\left|\mathcal{X}\right|\cdot\left|\mathcal{Y}\right|\right)+\eta_{2}(n, \delta_{\textnormal{max}})\right)$.
            \end{proof}

    \section{Prior-Free Compression of Interactive Communication Protocols} \label{sec:interactive}
        In this section, Theorem~\ref{thm:rst_2} will be extended to the interactive setting. First, in Theorem~\ref{thm:int_2}, we show that there exists a protocol that simulates $j$ rounds of $n$ channel uses, with an asymptotically decaying probability of error and communication cost bound by the information complexity $IC_{t}$ with respect to the joint type of the inputs $t_{x^{n}, y^{n}}$. The protocol simulates $n$ uses of each channel $N_{i}$, associated with output distribution $p_{M_{i}|X, M_{1},..., M_{j-1}}$ or $p_{M_{i}|Y, M_{1},..., M_{j-1}}$, depending on the parity of $i$, where the channel is applied on the sender's input and the $i-1$ previous messages. We will assume the first sender to be Alice. The transcript of the $i-1$ previous messages $m^{n}_{1},...,m^{n}_{i-1}$ and the associated random variables $M_{1},...,M_{i-1}$ may be abbreviated as $m^{n}_{<i}$ and $M_{<i}$ or $m^{n}_{\leq (i-1)}$ and $M_{\leq (i-1)}$ respectively, and the output distribution may be abbreviated as $p_{i}$ when used as a subscript. In particular, we will use $t_{x^{n}, y^{n}, m^{n}_{<i}}$ to denote the joint type of $x^{n}, y^{n}$, and the $i-1$ previous messages. 
        
We then show in Theorem~\ref{thm:int_3} that, at the expense of slightly increasing the communication cost, there is a $j$ round protocol that simulates $j$ rounds of $n$ channel uses.

\subsection{$(j+1)$-round protocol simulating a $j$-round interactive protocol without an estimate of the joint type given}
        \begin{theorem}\label{thm:int_2}
            Let $\mathcal{X}$, $\mathcal{Y}$ and $\mathcal{M}_{\leq j} = \mathcal{M}_{1}\times...\times\mathcal{M}_{j}$ be finite alphabets, and fix an $n \in \mathbb{N}$. There exists a $j+1$-round protocol taking $\left(x^{n}, y^{n}\right) \in \mathcal{X}^{n} \times \mathcal{Y}^{n}$ as Alice and Bob's respective inputs, as well as parameters $\delta, \delta', \delta'', \delta_{s} > 0$, and a description of $j$ noisy channels $N_{1}, ..., N_{j}$, each associated with an output distribution $p_{M_{i}|X, M_{< i}}$ or $p_{M_{i}|Y, M_{< i}}$ depending on the parity of $i$, as additional shared inputs, that satisfies the following conditions:

            Define $\delta''' = \left(\frac{\delta'^{2}}{2\ln 2}-\frac{2}{n}|\mathcal{M}|\cdot|\mathcal{X}|\cdot|\mathcal{Y}|\log(n+1)\right)$, $\delta_{\textnormal{min}} = \min\left(\delta, \delta''', \delta'' \right)$, $\delta_{\textnormal{max}, j} = \max\left(\delta +(j-1)\delta', \delta', \delta'' \right)$ and $|\mathcal{M}_{max}| = \max_{i=1, 2,...,j} |\mathcal{M}_{i}|$.
                \begin{itemize}
                \item The first message from Bob to Alice is of fixed size $n\left( \delta_{s} \log_{2}(|\mathcal{Y}|)+\frac{1}{n}\log(n)+2\delta_{\textnormal{min}}^{2}+O\left(\frac{1}{n}\right)\right)$;
                    \item The second message from Alice to Bob is of variable length, but always of size at most 
                    
                    $ n\left(\delta_{s} \log_{2}\left(|\mathcal{X}| \right)+\log_{2}\left(|\mathcal{M}_{1}| \right)\right)$;
                    \item The $i^{th}$ message is of variable length, but always of size at most $n\cdot\log_{2}\left(|\mathcal{M}_{i-1}| \right)$;

                    \item There is an event $E_{good}$ such that 
                    \begin{equation}
                        \Pr[\lnot E_{good}] \leq j\cdot 2^{-n\cdot\delta_{\textnormal{min}}^{2}+3}
                    \end{equation}
                    and conditional on $E_{good}$:
                    \begin{itemize}
                    \item The total communication cost is at most
                    \begin{equation}
                        n(I(M_{1}...M_{j};X|Y)_{p\cdot t}
                        +I(M_{1}...M_{j};Y|X)_{p\cdot t}+\delta_{s} \log\left(|\mathcal{X}\right|\cdot\left|\mathcal{Y}\right|+\eta_{3}(n, \delta_{\textnormal{max}, j}, j));
                    \end{equation}
                    \item The amount of shared randomness needed is at most 
                    \begin{equation}
                        n\left(H(M_{1}...M_{j}|X, Y)_{p\cdot t}+\frac{j}{n}\log\log(e)\right);
                    \end{equation}
                    \item Alice and Bob each produce a copy of the same transcript of outputs $\hat{m}_{1}^{n}, ..., \hat{m}_{j}^{n}$;
                        \item It holds that for every $(x^{n}, y^{n}) \in \mathcal{X}^{n} \times \mathcal{Y}^{n}$, if $\widetilde{p^{n}}(m_{1}^{n}, ..., m_{j}^{n}|x^{n}, y^{n})$ is the probability that Alice and Bob obtain the transcript of outputs $m_{1}^{n}, ..., m_{j}^{n}$ from the protocol, then
                    \end{itemize}
                \end{itemize}
                \begin{align}
                           \left\|\widetilde{p^{n}}(m_{\leq j}^{n}|x^{n}, y^{n})-p^{n}(m_{\leq j}^{n}|x^{n}, y^{n})\right\|_{1}\leq  j\left(2^{-n\delta'''} + \delta''\right);
                        \end{align}
                where 
                \begin{align}
                \eta_{3}(n, \delta_{\textnormal{max}, j}, j) = 5j\gamma(|\mathcal{M}_{max}|, \left|\mathcal{X}\right|\cdot\left|\mathcal{Y}\right|\cdot\delta_{\textnormal{max}, j})+\frac{2j}{n}\log(n)
                +\frac{4j}{n}|\mathcal{M}_{\leq j}| \cdot|\mathcal{X}|\cdot |\mathcal{Y}|\log\left(n+1\right)\\+3j\delta_{\textnormal{max}, j}+O\left(\frac{j}{n}\right).
                \end{align}
        \end{theorem}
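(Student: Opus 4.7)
The plan is to iteratively apply Theorem~\ref{thm:rst_1} to simulate each of the $j$ rounds of the interactive protocol, using the joint type estimation protocol of Lemma~\ref{lem:estimating} once at the outset. The $(j+1)$-round simulation proceeds as follows. First, Bob transmits his sample string from Lemma~\ref{lem:estimating} (this is the extra round beyond $j$); then, for $i = 1, \ldots, j$, the party acting as sender in round $i$ of the original protocol (Alice for odd $i$, Bob for even $i$) simulates the product channel $N_i^n$ via an extension of the one-round protocol of Theorem~\ref{thm:rst_1}. Alice's sample for the joint type estimation is piggybacked onto her first simulated message, so no separate round is needed for it, which explains why the second message has the size announced in the statement.

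To apply Theorem~\ref{thm:rst_1} at round $i$, both parties need a shared estimate of the joint type $t_{x^n, y^n, m^n_{<i}}$ over $\mathcal{X} \times \mathcal{Y} \times \mathcal{M}_{<i}$. Since both parties know the transcript $m^n_{<i}$ exactly after round $i-1$ and share $\widetilde{t}$ from Lemma~\ref{lem:estimating}, they agree on
\begin{equation}
\widetilde{t}_i(x, y, m_{<i}) := \widetilde{t}(x, y) \cdot p_{<i}(m_{<i} \mid x, y).
\end{equation}
Round $i$ is then cast as a single instance of the prior-free reverse Shannon theorem with side information, where the sender's effective input is $(x^n, m^n_{<i})$, the receiver's side information is $(y^n, m^n_{<i})$, the channel is $N_i$, and the shared joint type estimate is $\widetilde{t}_i$. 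Because $M_{<i}$ appears on both sides of the conditioning, for odd $i$ the effective conditional mutual information reduces to $I(M_i ; X \mid Y, M_{<i})_{\widetilde{t}_i \cdot p_i}$, and symmetrically for even $i$.

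The communication and shared randomness bounds follow from the chain rules. Since the channel output $M_i$ is a function of the sender's input and previous messages, for odd $i$ we have $I(M_i ; Y \mid X, M_{<i}) = 0$ and only $I(M_i ; X \mid Y, M_{<i})$ contributes to the round-$i$ communication, and symmetrically for even $i$; summing,
\begin{equation}
\sum_{i \text{ odd}} I(M_i ; X \mid Y, M_{<i}) + \sum_{i \text{ even}} I(M_i ; Y \mid X, M_{<i}) = I(M_{\leq j} ; X \mid Y) + I(M_{\leq j} ; Y \mid X),
\end{equation}
recovering the information cost up to lower-order overhead. The shared randomness bound follows analogously from $H(M_{\leq j} \mid X, Y) = \sum_i H(M_i \mid X, Y, M_{<i})$, summing the per-round $H(M_i \mid X, Y, M_{<i})$ contributions coming from Theorem~\ref{thm:rst_1}.

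The main obstacle will be controlling the propagation of typicality deviations across rounds. Inductively, if the simulation has succeeded through round $i-1$, then $t_{x^n, y^n, m^n_{<i}}$ is close in total variation to $\widetilde{t}_i$, but the deviation grows by roughly $\delta'$ each round because each new simulated $m^n_{i-1}$ is only $\delta'$-typical (rather than exactly distributed) with respect to the channel $p_{i-1}$, invoking Proposition~\ref{prop:unit_prob} and Proposition~\ref{prop:merge_set} at each step to merge the channel-typicality slack into the joint-typicality slack. Carefully tracking this accumulation gives typicality parameter $\delta_{\textnormal{max}, j} = \delta + (j-1)\delta'$ at the last round, and plugging this into Theorem~\ref{thm:rst_1} round by round followed by a union bound over the $j$ rounds yields the stated failure probability $j \cdot 2^{-n\delta_{\textnormal{min}}^2+3}$. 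The output total-variation bound $j(2^{-n\delta'''} + \delta'')$ follows from the triangle inequality applied to the per-round guarantees of Theorem~\ref{thm:rst_1}.
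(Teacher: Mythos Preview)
Your proposal is correct and follows essentially the same approach as the paper: iterate Theorem~\ref{thm:rst_1} over the $j$ rounds with the shared estimate $\widetilde{t}_i = p_{<i}\cdot\widetilde{t}$ and the typicality parameter updated to $\delta + (i-1)\delta'$ via Proposition~\ref{prop:merge_set}, combine the per-round costs through the chain rule (using that $I(M_i;Y\mid X,M_{<i})=0$ for odd $i$ and symmetrically for even $i$, which is exactly the paper's observation that line~(\ref{eqn:xtra_line}) vanishes), and conclude with a union bound for $E_{good}$ and the telescoping triangle inequality for the total variation.
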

        \begin{proof}
            The following protocol simulates a $j$ round asymptotic, interactive communication protocol.
            \begin{enumerate}
                \item Alice and Bob perform step $1$ of Lemma~\ref{lem:estimating} to obtain the strings of symbols to exchange, with $m = n \cdot \delta_{s}$. Denote their strings $s_{A}$ and $s_{B}$. Bob transmits his string $s_{B}$. This concludes the first round of the protocol.
                    \item \label{step:est} Alice sends her string $s_{A}$. Both parties can now determine the estimated distribution $\widetilde{t}$.
                    \item \label{step:first}Alice and Bob perform the protocol of Theorem~\ref{thm:rst_1} using $\widetilde{t}$ to obtain a first simulated message $\hat{m}^{n}_{1}$.
                \item \label{step:rep} Alice and Bob perform the protocol of Theorem~\ref{thm:rst_1} with Bob as the sender, where they take Alice's input as $(x^{n}, \hat{m}^{n}_{1})$, Bob's input as $(y^{n}, \hat{m}^{n}_{1})$, the joint input distribution as $p_{M_{1}|X}\cdot\widetilde{t}_{X, Y}$, and the input parameter $\delta$ is updated to be $\delta+\delta'$, to generate a second message $\hat{m}^{n}_{2}$.
                \item \label{step:rep_2} Alice and Bob repeat step \ref{step:rep} with the appropriate inputs and joint distribution to obtain $j$ messages.
                \begin{itemize}
                    \item To generate the $ith$ message (round $i+1$) of the protocol using step \ref{step:rep}, the input parameter $\delta$ of Theorem~\ref{thm:rst_1} is updated at each iteration to be $\delta+(i-1)\delta'$.
                \end{itemize}
            \end{enumerate}
            For steps \ref{step:est} and \ref{step:first}, by Lemma~\ref{lem:estimating}, the probability that $(x^{n}, y^{n}) \not\in T^{X^{n}, Y^{n}}_{\overline{t}, \delta}$ is upper bounded by 
            \begin{equation}
                2^{-n\left(2\delta_{s}\delta^{2}\log e+\frac{1}{n}\log\left|\mathcal{X}\right|\cdot \left|\mathcal{Y}\right|+\frac{1}{n}\right)}.
            \end{equation}
            The probability that the protocol of Theorem~\ref{thm:rst_1} will fail without the de-randomization is at most $4\cdot2^{-n\cdot\delta_{\textnormal{min}}^{2}}$. By the union bound, the probability that either occurs is at most $5\cdot2^{-n\cdot\delta_{\textnormal{min}}^{2}}$. We can then apply a similar derandomization argument as in the proof of Claim~\ref{clm:derandom_rst}, which results in a probability of error of at most $2^{-n\cdot\delta_{\textnormal{min}}^{2}+3}$, and step~\ref{step:first} will have the same communication and shared randomness costs as in Theorem~\ref{thm:rst_1}.
            
            For the first iteration of step~\ref{step:rep}, if step~\ref{step:first} was successful, the message generated in that round $\hat{m}^{n}_{1}$ is in the set $T^{M_{1}^{n}, X^{n}, Y^{n}}_{(\widetilde{t}, \delta), (p_{1}, \delta')}.$
            By proposition $\ref{prop:merge_set}$, it must also hold that the message is in the set $
                T^{M_{1}^{n}, X^{n}, Y^{n}}_{(p_{1}\cdot\widetilde{t}, \delta+\delta')}.$ For the $ith$ iteration of step~\ref{step:rep} ($i>1$), if the previous iteration was successful, then the message generated in the previous iteration $\hat{m}^{n}_{i-1}$ is in the set $
                T^{M_{i-1}^{n}, X^{n}, Y^{n}, M^{n}_{<(i-1)}}_{(p_{i-2}\cdot p_{i-3}\cdot...p_{1}\cdot\widetilde{t}, \delta+(i-2)\delta'), (p_{i-1}, \delta')}$. By proposition $\ref{prop:merge_set}$, it must also hold that the message is in the set $
                T^{M_{i-1}^{n}, X^{n}, Y^{n}, M^{n}_{<(i-1)}}_{(p_{i-1}\cdot p_{i-2}\cdot...p_{1}\cdot\widetilde{t}, \delta+(i-1)\delta')}$.
            Therefore, the conditions to apply Theorem~\ref{thm:rst_1} hold.

            Let $E_{good}^{i}$ denote the event that the $i^{th}$ message was successfully generated ($(i+1)^{th}$ round was successful). We can then define $E_{good} = \bigwedge_{i} E_{good}^{i}$.
            By the union bound, the probability of the event $\lnot E_{good}$ is at most $j\cdot 2^{-n\cdot\delta_{\textnormal{min}}^{2}+3}$.

            The communication cost for the $i^{th}$ iteration of step~\ref{step:rep}, which generates the $i^{th}$ message of the protocol (assume $i$ is odd) is bound by $
                n\left(I\left(M;X|Y\right)_{p_{i}\cdot p_{i-1}\cdot...\cdot p_{1}\cdot t}+\eta_{2}(n, \delta_{\textnormal{max}, i})\right)$,
            where $\delta_{\textnormal{max}, i} = \max\left(\delta+(i-1)\delta', \delta', \delta'' \right)$. Note that the parameter corresponding to $\delta+(i-1)\delta'$ is the only one that needs to be updated from one round to the next. The parameters $\delta', \delta'', \delta_{s}$ and consequently $\delta'''$ remain unchanged.

            The total communication cost of simulating all $j$ rounds is obtained by summing the cost over all $j$ rounds. A detailed analysis of the communication cost and the total variation distance between the true and simulated channels can be found in the \hyperref[proof:int_2]{Appendix}.
        \end{proof}
\subsection{$j$-round protocol simulating a $j$-round interactive protocol without an estimate of the joint type given}
        In the following theorem, we adapt Theorem~\ref{thm:int_2} to show that the simulation can be performed in $j$ rounds at the expense of increasing the communication rate of the first round of the simulation. 
        \begin{theorem}\label{thm:int_3}
            Let $\mathcal{X}$, $\mathcal{Y}$ and $\mathcal{M}_{\leq j} = \mathcal{M}_{1}\times...\times\mathcal{M}_{j}$ be finite alphabets, and fix an $n \in \mathbb{N}$. There exists a $j$-round protocol taking $\left(x^{n}, y^{n}\right) \in \mathcal{X}^{n} \times \mathcal{Y}^{n}$ as Alice and Bob's respective inputs, as well as parameters $\delta, \delta', \delta'', \delta_{s} > 0$, a description of $j$ noisy channels $N_{1}, ..., N_{j}$, each associated with an output distribution $p_{M_{i}|X, M_{< i}}$ or $p_{M_{i}|Y, M_{< i}}$ depending on the parity of $i$, as additional shared inputs, that satisfies the following conditions:

            Define $\delta''' = \left(\frac{\delta'^{2}}{2\ln 2}-\frac{2}{n}|\mathcal{M}|\cdot|\mathcal{X}|\cdot|\mathcal{Y}|\log(n+1)\right)$, $\delta_{\textnormal{min}} = \min\left(\delta, \delta''', \delta'' \right)$ and 
            
            $\delta_{\textnormal{max}, j} = \max\left(\delta +(j-1)\delta', \delta', \delta'' \right)$.
                \begin{itemize}
                    \item The first message from Alice to Bob is of variable length, but always of size at most 
                    
                    ${n\left( \delta_{s} \log_{2}(|\mathcal{X}|)+\log_{2}\left(|\mathcal{M}_{1}| \right)+\frac{1}{n}\log(n)+2\delta_{\textnormal{min}}^{2}+O\left(\frac{1}{n}\right)\right)}$;
                    \item The second message from Bob to Alice is of variable length, but always of size at most
                    
                    ${n\left(\delta_{s} \log_{2}\left(|\mathcal{Y}| \right)+\log_{2}\left(|\mathcal{M}_{2}|\right)\right)}$;
                    \item The $i^{th}$ message is of variable length, but always of size at most $n\cdot\log_{2}\left(|\mathcal{M}_{i}| \right)$;
                    \item There is an event $E_{good}$ such that $\Pr[\lnot E_{good}] \leq j\cdot 2^{-n\cdot\delta_{\textnormal{min}}^{2}+3}$, such that conditional on $E_{good}$:
                    \begin{itemize}
                    \item The communication cost is at most 
                    \begin{align}
                        n\bigg(\max_{x^{n}, y^{n}\in \mathcal{X}^{n}\times\mathcal{Y}^{n}} \left[I(M_{1};X|Y)_{p\cdot t}\right]
                        +\max_{x^{n}, y^{n}} \left[I(M_{2}...M_{j};X|M_{1}Y)_{p\cdot t}
                        +I(M_{2}...M_{j};Y|M_{1}X)_{p\cdot t}\right]\\
                        + \delta_{s} \log\left(\left|\mathcal{X}\right|\cdot\left|\mathcal{Y}\right|\right)+\eta_{3}(n, \delta_{\textnormal{max}, j}, j)\bigg);
                    \end{align}
                    \item The amount of shared randomness needed is at most $n\left(H(M_{1}...M_{j}|X, Y)_{p\cdot t}+\frac{j}{n}\log\log(e)\right)$;
                    \item Alice and Bob each produce a copy of the same transcript of outputs $\hat{m}_{1}^{n}, ..., \hat{m}_{j}^{n}$;
                        \item It holds that for every $(x^{n}, y^{n}) \in \mathcal{X}^{n} \times \mathcal{Y}^{n}$, if $\widetilde{p^{n}}(\hat{m}_{1}^{n}, ..., \hat{m}_{j}^{n}|x^{n}, y^{n})$ is the probability that Alice and Bob obtain the output $m^{n}$ from the protocol, then
                        \begin{equation}
                           \left\|\widetilde{p^{n}}(\hat{m}_{1}^{n}, ..., \hat{m}_{j}^{n}|x^{n}, y^{n})-p^{n}(\hat{m}_{1}^{n}, ..., \hat{m}_{j}^{n}|x^{n}, y^{n})\right\|_{1}
                           \leq  j\left(2^{-n\delta'''} + \delta''\right);
                        \end{equation}
                    \end{itemize}
                \end{itemize}
                where $\delta''$ and $\eta_{4}$ are defined as in Theorem~\ref{thm:int_2}.
        \end{theorem}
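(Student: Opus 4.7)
The plan is to adapt the $(j{+}1)$-round protocol of Theorem~\ref{thm:int_2} by piggybacking the sample exchange of Lemma~\ref{lem:estimating} onto the first two rounds of the simulated protocol, rather than devoting a preliminary round to Bob's sample transmission. Since Alice is the first sender in $\pi$, she transmits her sample $s_A$ together with her first simulated message $\hat{m}_1^n$ in round 1. Upon reception, Bob holds both $s_A$ and his locally drawn $s_B$, computes the joint-type estimate $\widetilde{t}$ as in Lemma~\ref{lem:estimating}, and transmits $s_B$ together with his simulated message $\hat{m}_2^n$ in round 2. From round 3 onward, both parties hold $\widetilde{t}$, so steps~\ref{step:rep}--\ref{step:rep_2} of the proof of Theorem~\ref{thm:int_2} apply essentially verbatim to generate $\hat{m}_3^n,\dots,\hat{m}_j^n$.

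For rounds $2$ through $j$ the analysis is a direct copy of that of Theorem~\ref{thm:int_2}: Theorem~\ref{thm:rst_1} is applied per round with $\widetilde{t}$ extended by the previously generated messages, Proposition~\ref{prop:merge_set} propagates the $\delta$ parameter inductively with increment $\delta'$ at each step, and per-round errors of size $O(2^{-n\delta_{\min}^2})$ are aggregated via a union bound to yield the claimed $j\cdot 2^{-n\delta_{\min}^2+3}$ failure probability. Lemma~\ref{lem:estimating} handles the event that $\widetilde{t}$ is far from $t_{x^n,y^n}$. The shared-randomness budget $n H(M_{\leq j}|X,Y)_{p\cdot t} + j\log\log(e)$ and the total variation bound $j(2^{-n\delta'''}+\delta'')$ are inherited unchanged, as is the derandomization argument of Claim~\ref{clm:derandom_rst}, applied round by round.

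The genuinely new ingredient, and the main obstacle, is round~1, where Alice must simulate the first channel use without access to $\widetilde{t}$. She does know her marginal type $t_{x^n}$ exactly, and this constrains the joint type $t_{x^n,y^n}$ to lie in the polynomially large family $\mathcal{F}$ of joint types on $\mathcal{X}\times\mathcal{Y}$ whose $\mathcal{X}$-marginal equals $t_{x^n}$. Alice therefore runs Theorem~\ref{thm:rst_1} with the worst-case choice $\widetilde{t}^{\ast} \in \arg\max_{t\in\mathcal{F}} I(M_1;X|Y)_{p_1\cdot t}$, sized so as to accommodate any $y^n$ compatible with $t_{x^n}$; this produces the first $\max_{x^n,y^n} I(M_1;X|Y)_{p_1\cdot t}$ term in the communication bound. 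Correctness relies on two facts: the shared-randomness budget $n H(M_1|X)_{p_1}$ depends only on $p_1$ and not on the joint type, and the typicality preconditions of Theorem~\ref{thm:rst_1} can be preserved by a mild enlargement of $\delta$ (absorbed into $\delta_{\textnormal{max},j}$). The Chernoff arguments already developed in the proofs of Lemma~\ref{lem:estimating} and Theorem~\ref{thm:rst_1} then control the round-1 failure probability within the claimed $O(2^{-n\delta_{\min}^2})$ budget, and the rest of the theorem is book-keeping inherited from Theorem~\ref{thm:int_2}.
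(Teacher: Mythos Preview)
Your high-level structure is right: piggyback the sample exchange onto the first two rounds, then run Theorem~\ref{thm:int_2} verbatim from round~2 on. The analysis for rounds $2,\dots,j$ and the bookkeeping (error probability, shared randomness, total variation) are indeed inherited unchanged. The gap is in your treatment of round~1.

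You propose to run Theorem~\ref{thm:rst_1} with $\widetilde{t}^{\ast}=\arg\max_{t\in\mathcal{F}} I(M_1;X|Y)_{p_1\cdot t}$ and claim that ``the typicality preconditions of Theorem~\ref{thm:rst_1} can be preserved by a mild enlargement of $\delta$.'' This does not work. The correctness guarantee of Theorem~\ref{thm:rst_1} requires $(x^n,y^n)\in T^{X^n,Y^n}_{\widetilde{t},\delta}$, i.e.\ $\|t_{x^n,y^n}-\widetilde{t}\|_1\le\delta$; Bob's decoding set $T^{M^n|y^n\backslash X^n}_{(\widetilde{t},\delta),(p,\delta')}$ is built around this hypothesis. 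But your $\widetilde{t}^{\ast}$ is chosen only to maximise a mutual information over all joint types with the correct $X$-marginal, and can be $\Theta(1)$ away from the true $t_{x^n,y^n}$ in $\ell^1$. Absorbing this into $\delta$ forces $\delta=\Theta(1)$, which makes $\gamma(|\mathcal{M}|,|\mathcal{X}|\cdot|\mathcal{Y}|\cdot\delta_{\textnormal{max},j})$ and hence $\eta_3$ non-vanishing; the claimed communication bound then fails. Simply sending more bits does not rescue Bob's decoding either, because his typical set is computed relative to $\widetilde{t}^{\ast}$ and need not contain Alice's $m^n_r$ when the true type is far from $\widetilde{t}^{\ast}$.

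The paper sidesteps this entirely: in round~1 it invokes Theorem~\ref{thm:rst_1} with $\widetilde{t}$ taken to be a \emph{product} distribution on $\mathcal{X}\times\mathcal{Y}$, which collapses the protocol to the prior-free reverse Shannon theorem \emph{without} side information. Bob's decoding then does not depend on the joint type at all, and the round-1 cost is $I(M_1;X)_{p\cdot t_{x^n}}$. The final step is the elementary observation that for any $x^n$ one can pick an independent $\overline{y^n}$ so that $I(M_1;X)=I(M_1;X|\overline{Y})$, whence $\max_{x^n} I(M_1;X)\le \max_{x^n,y^n} I(M_1;X|Y)$, yielding the stated bound. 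This is the missing idea in your round-1 argument.
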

        \begin{proof}
        The following protocol simulates $j$ rounds of $n$ uses of a noisy channel.
            \begin{enumerate}
        \item Alice privately samples $\log n+2n\cdot\delta_{\textnormal{min}}^{2} +O(1)$ bits of randomness and communicates it to Bob. This will be used to perform the de-randomized versions of Lemma~\ref{lem:estimating} and Theorem~\ref{thm:rst_1}.
                \item Alice and Bob perform step $1$ of Lemma~\ref{lem:estimating} to obtain the strings of symbols to exchange. Denote their strings $s_{A}$ and $s_{B}$.
                \item Alice and Bob perform the protocol of Theorem~\ref{thm:rst_1} assuming $\widetilde{t}$ is a product distribution on $\mathcal{X} \times \mathcal{Y}$ to obtain a first simulated message $\hat{m}^{n}_{1}$. 
                \item Alice communicates $s_{A}$ to Bob. With this they complete round $1$ of the protocol.
                \item Bob communicates $s_{B}$ to Alice and they can then both perform step $3$ of Lemma~\ref{lem:estimating} and obtain $\widetilde{t}$.
                \item Alice and Bob perform steps $\ref{step:rep}$ and $\ref{step:rep_2}$ of Theorem~\ref{thm:int_2}.
            \end{enumerate}
            The communication cost of steps $2-4$ is at most
            \begin{equation}
               n\left(I\left(M_{1};X\right)_{ p\cdot t}+\delta_{s}  \log(|\mathcal{X}|\cdot |\mathcal{Y}|)+\eta_{2}(n, \delta_{\textnormal{max}})\right).
            \end{equation}    
            Let $x^{n}$ be the sequence that maximizes $I(M_{1};X)_{p\cdot t}$, let $\overline{y^{n}}$ be a sequence such that $t_{x^{n}, \overline{y^{n}}} = t_{x^{n}}\cdot t_{\overline{y^{n}}}$, and define $X\overline{Y}M_{1} \sim p_{M_{1}|X}\cdot t_{x^{n}, \overline{y^{n}}}$, then
            \begin{align}
                \max_{x^{n}\in \mathcal{X}^{n}}I(M_{1};X)_{p_{M_{1}|X}} &= \max_{x^{n}\in \mathcal{X}^{n}}I(M_{1};X|\overline{Y})_{p_{M_{1}|X}\cdot t_{x^{n}, \overline{y^{n}}}}\\
                &\leq \max_{x^{n}, y^{n}\in \mathcal{X}^{n}\times \mathcal{Y}^{n}} I(M_{1};X|Y)_{p_{M_{1}|X}\cdot t_{x^{n}, y^{n}}}.
            \end{align}
            Therefore, the maximum communication cost of steps $2-4$ is upper bounded by
            \begin{equation}
                n\bigg(\max_{x^{n}, y^{n}\in \mathcal{X}^{n}\times \mathcal{Y}^{n}}\left[I(M_{1};X|Y)_{p_{M_{1}|X}\cdot t_{x^{n}, y^{n}}}\right]+\delta_{s}  \log(|\mathcal{X}|\cdot |\mathcal{Y}|)
               +\eta_{2}(n, \delta_{\textnormal{max}})\bigg).
            \end{equation}    
            Theorem \ref{thm:int_2} covers the analysis of the communication cost of steps $5$ and $6$. Therefore, for any $x^{n}$ and $y^{n}$, we obtain the following upper bound on the communication cost.
            \begin{align}
                n \bigg(\max_{x^{n}, y^{n}\in \mathcal{X}^{n}\times\mathcal{Y}^{n}} \left[I(M_{1};X|Y)_{p_{1}\cdot t}\right]+\max_{x^{n}, y^{n}} [I(M_{2}...M_{j};X|M_{1}Y)_{p\cdot t}+I(M_{2}...M_{j};Y|M_{1}X)_{p \cdot t}]\\+\delta_{s}  \log(|\mathcal{X}|\cdot |\mathcal{Y}|)
                +\eta_{3}(n, \delta_{\textnormal{max}, j}, j) \bigg).
            \end{align}
        \end{proof}


    \section{Conclusion} \label{sec:conclusion}
    We have demonstrated that the amortized communication complexity of simulating a prior-free interactive communication protocol with round preservation is equal to its information cost. We developed a prior-free reverse Shannon theorem with side information at the receiver that uses a sublinear amount of noiseless communication to sample and estimate the joint empirical distribution of the input sequences. Our protocols lead to a simpler proof of the results of \cite{Braverman2017}, and improve on the results by achieving round preservation, and obtaining optimal bounds on the amount of shared randomness required. A direction of further development would be to extend these results to the quantum setting. Quantum computation brings with it promises of faster computation as well as many open questions that would benefit from the further development of the field of quantum communication complexity. The results obtained here are a first step in demonstrating the equivalence between the prior-free quantum information and amortized quantum communication complexity of an interactive protocol. 

    \section{Acknowledgements}
        We would like to thank Professor Claude Crépeau for his supervision and guidance throughout this work. We acknowledge the support of the Natural Sciences and Engineering Research Council of Canada (NSERC)(ALLRP-578455-2022 and DG), the NSERC‐Collaborative Research and Training Experience program QSciTech, the Institut transdisciplinaire d'information quantique (INTRIQ), and the support of the Ministère de l'Enseignement Supérieur du Québec.
    
    \newpage
    \bibliographystyle{IEEEtran}
    \bibliography{bibliofile}

\begin{thebibliography}{10}
\providecommand{\url}[1]{#1}
\csname url@samestyle\endcsname
\providecommand{\newblock}{\relax}
\providecommand{\bibinfo}[2]{#2}
\providecommand{\BIBentrySTDinterwordspacing}{\spaceskip=0pt\relax}
\providecommand{\BIBentryALTinterwordstretchfactor}{4}
\providecommand{\BIBentryALTinterwordspacing}{\spaceskip=\fontdimen2\font plus
\BIBentryALTinterwordstretchfactor\fontdimen3\font minus \fontdimen4\font\relax}
\providecommand{\BIBforeignlanguage}[2]{{%
\expandafter\ifx\csname l@#1\endcsname\relax
\typeout{** WARNING: IEEEtran.bst: No hyphenation pattern has been}%
\typeout{** loaded for the language `#1'. Using the pattern for}%
\typeout{** the default language instead.}%
\else
\language=\csname l@#1\endcsname
\fi
#2}}
\providecommand{\BIBdecl}{\relax}
\BIBdecl

\bibitem{shannon}
C.~E. Shannon, ``A mathematical theory of communication,'' \emph{The Bell System Technical Journal}, vol.~27, no.~3, pp. 379--423, 1948.

\bibitem{slepian}
D.~Slepian and J.~Wolf, ``Noiseless coding of correlated information sources,'' \emph{IEEE Transactions on Information Theory}, vol.~19, no.~4, pp. 471--480, 1973.

\bibitem{bennett2002}
C.~H. Bennett, P.~W. Shor, J.~A. Smolin, and A.~V. Thapliyal, ``Entanglement-assisted capacity of a quantum channel and the reverse shannon theorem,'' \emph{IEEE Transactions on Information Theory}, vol.~48, no.~10, pp. 2637--2655, 2002.

\bibitem{pollard2002user}
\BIBentryALTinterwordspacing
D.~Pollard, \emph{A User's Guide to Measure Theoretic Probability}, ser. Cambridge Series in Statistical and Probabilistic Mathematics.\hskip 1em plus 0.5em minus 0.4em\relax Cambridge University Press, 2002. [Online]. Available: \url{https://books.google.ca/books?id=B7Ch-c2G21MC}
\BIBentrySTDinterwordspacing

\bibitem{bennett_devetak_harrow_shor_winter_2014}
C.~H. Bennett, I.~Devetak, A.~W. Harrow, P.~W. Shor, and A.~Winter, ``The quantum reverse shannon theorem and resource tradeoffs for simulating quantum channels,'' \emph{IEEE Transactions on Information Theory}, vol.~60, no.~5, p. 2926–2959, 2014.

\bibitem{hjmr}
P.~Harsha, R.~Jain, D.~McAllester, and J.~Radhakrishnan, ``The communication complexity of correlation,'' in \emph{Twenty-Second Annual IEEE Conference on Computational Complexity (CCC'07)}, 2007, pp. 10--23.

\bibitem{braverman_rao}
M.~Braverman and A.~Rao, ``Information equals amortized communication,'' in \emph{2011 IEEE 52nd Annual Symposium on Foundations of Computer Science}, 2011, pp. 748--757.

\bibitem{Braverman2017}
M.~Braverman, ``Interactive information complexity,'' \emph{{SIAM} Review}, vol.~59, no.~4, pp. 803--846, 2017.

\bibitem{braverman2013direct}
M.~Braverman, A.~Rao, O.~Weinstein, and A.~Yehudayoff, ``Direct product via round-preserving compression,'' in \emph{International Colloquium on Automata, Languages, and Programming}.\hskip 1em plus 0.5em minus 0.4em\relax Springer, 2013, pp. 232--243.

\bibitem{jain_2012}
R.~Jain, A.~Pereszlényi, and P.~Yao, ``A direct product theorem for the two-party bounded-round public-coin communication complexity,'' in \emph{2012 IEEE 53rd Annual Symposium on Foundations of Computer Science}, 2012, pp. 167--176.

\bibitem{newman_1991}
\BIBentryALTinterwordspacing
I.~Newman, ``Private vs. common random bits in communication complexity,'' \emph{Information Processing Letters}, vol.~39, no.~2, pp. 67--71, 1991. [Online]. Available: \url{https://www.sciencedirect.com/science/article/pii/002001909190157D}
\BIBentrySTDinterwordspacing

\bibitem{Rao_Yehudayoff_2020}
A.~Rao and A.~Yehudayoff, \emph{Communication Complexity: and Applications}.\hskip 1em plus 0.5em minus 0.4em\relax Cambridge University Press, 2020.

\bibitem{Winter_2016}
A.~Winter, ``Tight uniform continuity bounds for quantum entropies: Conditional entropy, relative entropy distance and energy constraints,'' \emph{Communications in Mathematical Physics}, vol. 347, no.~1, p. 291–313, 2016.

\bibitem{cover_thomas_2006}
T.~M. Cover and J.~A. Thomas, \emph{Elements of information theory}.\hskip 1em plus 0.5em minus 0.4em\relax J. Wiley, 2006.

\bibitem{Tou14a}
D.~Touchette, ``Quantum information complexity,'' in \emph{2015 ACM 47th Symposium on Theory of Computing}, 2015, pp. 317--326.

\bibitem{hoeffding_1963}
W.~Hoeffding, ``Probability inequalities for sums of bounded random variables,'' \emph{Journal of the American Statistical Association}, vol.~58, no. 301, p. 13–30, 1963.

\end{thebibliography}

    \clearpage
    
     \appendix

        \onecolumn
        \section{Proofs}
		\label{App:proofs}
            \begin{proof}[Proof of Proposition~\ref{prop:card_joint_typ_set}]
            \label{proof:card_joint_typ_set}
                            \begin{align}
                                \left|T^{X^{n}, Y^{n}}_{\widetilde{t}, \delta} \right|&= \left|\bigcup_{\substack{t \in \mathcal{T}^{\mathcal{X}^{n}, \mathcal{Y}^{n}}\\ \|t-\widetilde{t}\|_{1} \leq \delta}}T^{X^{n},Y^{n}}_{t}\right| \\
                                &= \sum_{\substack{t \in \mathcal{T}^{\mathcal{X}^{n}, \mathcal{Y}^{n}}\\ \|t-\widetilde{t}\|_{1} \leq \delta}}\left|T^{X^{n},Y^{n}}_{t}\right|  \\
                                \sum_{\substack{t \in \mathcal{T}^{\mathcal{X}^{n}, \mathcal{Y}^{n}}\\ \|t-\widetilde{t}\|_{1} \leq \delta}} (n+1)^{-\left|\mathcal{X}\right|\cdot\left|\mathcal{Y}\right|}2^{nH(X, Y)_{t}} \leq \left|T^{X^{n}, Y^{n}}_{\widetilde{t}, \delta} \right| &\leq  \sum_{\substack{t \in \mathcal{T}^{\mathcal{X}^{n}, \mathcal{Y}^{n}}\\ \|t-\widetilde{t}\|_{1} \leq \delta}}2^{nH(X, Y)_{t}}   \\
                                \sum_{\substack{t \in \mathcal{T}^{\mathcal{X}^{n}, \mathcal{Y}^{n}}\\ \|t-\widetilde{t}\|_{1} \leq \delta}} (n+1)^{-\left|\mathcal{X}\right|\cdot\left|\mathcal{Y}\right|}2^{n\left(H(X, Y)_{\widetilde{t}}-\gamma\left( \left|\mathcal{X}\right|\cdot \left| \mathcal{Y}\right|-1, \delta \right)\right)}    \leq \left|T^{X^{n}, Y^{n}}_{\widetilde{t}, \delta} \right| &\leq  \sum_{\substack{t \in \mathcal{T}^{\mathcal{X}^{n}, \mathcal{Y}^{n}}\\ \|t-\widetilde{t}\|_{1} \leq \delta}}2^{n\left(H(X, Y)_{\widetilde{t}}+\gamma\left( \left|\mathcal{X}\right|\cdot \left| \mathcal{Y}\right|-1, \delta \right)\right)}   \\
                                (n+1)^{-\left|\mathcal{X}\right|\cdot\left|\mathcal{Y}\right|}2^{n\left(H(X, Y)_{\widetilde{t}}-\gamma\left( \left|\mathcal{X}\right|\cdot \left| \mathcal{Y}\right|-1, \delta \right)\right)} \leq \left|T^{X^{n}, Y^{n}}_{\widetilde{t}, \delta} \right| &\leq  \left|\mathcal{T}^{\mathcal{X}^{n}, \mathcal{Y}^{n}}\right|2^{n\left(H(X, Y)_{\widetilde{t}}+\gamma\left( \left|\mathcal{X}\right|\cdot \left| \mathcal{Y}\right|-1, \delta \right)\right)}   \\
                                (n+1)^{-\left|\mathcal{X}\right|\cdot\left|\mathcal{Y}\right|}2^{n\left(H(X, Y)_{\widetilde{t}}-\gamma\left( \left|\mathcal{X}\right|\cdot \left| \mathcal{Y}\right|-1, \delta \right)\right)} \leq \left|T^{X^{n}, Y^{n}}_{\widetilde{t}, \delta} \right|&\leq  (n+1)^{\left|\mathcal{X}\right|\cdot \left|\mathcal{Y}\right|}2^{n\left(H(X, Y)_{\widetilde{t}}+\gamma\left( \left|\mathcal{X}\right|\cdot \left| \mathcal{Y}\right|-1, \delta \right)\right)} \\
                                2^{n\left(H(X, Y)_{\widetilde{t}}-\gamma\left( \left|\mathcal{X}\right|\cdot \left| \mathcal{Y}\right|-1, \delta \right)\right)-\left|\mathcal{X}\right|\cdot \left|\mathcal{Y}\right|\log_{2}\left(n+1\right)} \leq \left|T^{X^{n}, Y^{n}}_{\widetilde{t}, \delta} \right|&\leq  2^{n\left(H(X, Y)_{\widetilde{t}}+\gamma\left( \left|\mathcal{X}\right|\cdot \left| \mathcal{Y}\right|-1, \delta \right)\right)+\left|\mathcal{X}\right|\cdot \left|\mathcal{Y}\right|\log_{2}\left(n+1\right)}.
                            \end{align}
                        \end{proof}
            \begin{proof}[Proof of Lemma~\ref{lem:estimating}]\label{proof:estimating}
                To determine the probability of achieving a reliable estimate, consider a pair of symbols $\left(x, y\right) \in \mathcal{X} \times \mathcal{Y}$. It holds that
                $\mathbb{E}\left[c\left(x, y\right)\right] = m \cdot t_{x^{n}, y^{n}}\left(x, y\right).$ By the Chernoff-Hoeffding bound \cite{hoeffding_1963}, for $0< \delta' \leq 1-t_{x^{n}, y^{n}}\left(x, y\right)$ :
                \begin{equation}
                    \Pr\left[\left|\frac{c\left(x, y\right)}{m} -  t_{x^{n}, y^{n}}\left(x, y\right)\right|\geq \delta'\right] \leq 2e^{-2m\delta'^{2}}.
                \end{equation}
                We then have
                \begin{align}
                    &\Pr\left[\forall (x, y) \in \left(\mathcal{X}\times \mathcal{Y}\right): \left|\overline{t}\left(x, y\right)-t_{x^{n}, y^{n}}\left(x, y\right)\right| < \delta\right]\\
                    &= 1-\Pr\left[\exists (x, y) \in \left(\mathcal{X}\times \mathcal{Y}\right): \left|\overline{t}\left(x, y\right)-t_{x^{n}, y^{n}}\left(x, y\right)\right| \geq \delta\right]\\
                    &\leq 1- |\mathcal{X}|\cdot|\mathcal{Y}|2e^{-2m\delta'^{2}}\\
                    &= 1- 2^{-2m\delta'^{2}\log e+\log|\mathcal{X}|\cdot|\mathcal{Y}|+1}.
                \end{align}
                Taking $\delta' = \delta$, it holds with probability at least $1- 2^{-2m\delta^{2}\log e+\log|\mathcal{X}|\cdot|\mathcal{Y}|+1}$ that for every $x, y \in \mathcal{X} \times \mathcal{Y}$ 
                \begin{equation}
                    \left| \widetilde{t}\left(x, y\right)-  t_{x^{n}, y^{n}}\left(x, y\right) \right| \leq \delta.
                \end{equation}
            \end{proof}
            
\begin{proof}[Proof of Claim~\ref{clm:marg_cond_typ_class}]\label{proof:marg_cond_typ_class}
    For every $x^{n} \in T^{X^{n}\backslash Y^{n}}_{\widetilde{t}}$, there must exist a $y^{n} \in \mathcal{Y}^{n}$ such that
    \begin{equation}
        \left(x^{n}, y^{n}\right) \in T^{X^{n}, Y^{n}}_{\widetilde{t}}.
    \end{equation}
    Therefore, the joint type of $(x^{n}, y^{n})$ is exactly $\widetilde{t}$, and more specifically, the type of $x^{n}$ is the $\mathcal{X}$ marginal of the type $\widetilde{t}$. Therefore, it must hold that $x^{n} \in T^{X^{n}}_{\widetilde{t}}$.

    For every $x^{n} \in T^{X^{n}}_{\widetilde{t}}$, since the type of $x^{n}$ is exactly the $\mathcal{X}$ marginal of the type $\widetilde{t}$, there must exist a sequence $y^{n}\in \mathcal{Y}^{n}$ such that
    \begin{equation}
        \left(x^{n}, y^{n}\right) \in T^{X^{n}, Y^{n}}_{\widetilde{t}}.
    \end{equation}
    Therefore, it must hold that $x^{n} \in T^{X^{n}\backslash Y^{n}}_{\widetilde{t}}$.
\end{proof}

\begin{proof}[Proof of Claim~\ref{clm:cond_typ_class}]\label{proof:cond_typ_class}
  Let $y^{n}$ be a sequence with empirical distribution $\widetilde{t}$. Any other sequence $\widetilde{y^{n}}$ with the same empirical distribution as $y^{n}$ can be obtained by permuting the symbols of $y^{n}$. The same permutation can then be applied to every sequence in the set $T^{X^{n}|y^{n}}_{\widetilde{t}}$ to obtain the set $T^{X^{n}|\widetilde{y^{n}}}_{ \widetilde{t}}$. It must then hold that $\left|T^{X^{n}|y^{n}}_{\widetilde{t}}\right| = \left|T^{X^{n}|\widetilde{y^{n}}}_{\widetilde{t}}\right|$ for every pair of sequences $y^{n}, \widetilde{y^{n}} \in T^{Y^{n}}_{\widetilde{t}}$. The joint type class can be partitioned according to the $\mathcal{Y}^{n}$ sequences
    \begin{equation}
        T_{\widetilde{t}} ^{X^{n}, Y^{n}} = \bigcup_{y^{n} \in T^{Y^{n}}_{\widetilde{t}}}T^{X^{n}| y^{n}}_{\widetilde{t}} \times \{y^{n}\}.
    \end{equation}
    This is a disjoint union, since even if the set $T^{X^{n}|y^{n}}_{\widetilde{t}}$ is invariant under the permutation that takes $y^{n}$ to $\widetilde{y^{n}}$, the sequences $y^{n}$ and $\widetilde{y^{n}}$ are distinct, and so the sets $T^{X^{n}| y^{n}}_{\widetilde{t}} \times \left\{y^{n}\right\}$ and $T^{X^{n}| \widetilde{y^{n}}}_{\widetilde{t}} \times \left\{\widetilde{y^{n}}\right\}$ will be disjoint. We therefore have
\begin{align}
    \left|T^{X^{n}, Y^{n}}_{\widetilde{t}}\right| &= \sum_{y^{n} \in T^{Y^{n}}_{\widetilde{t}}}\left|T^{X^{n}|y^{n}}_{\widetilde{t}}\right|\\
    &=\left|T^{Y^{n}}_{\widetilde{t}}\right| \cdot\left|T^{X^{n}|y^{n}}_{\widetilde{t}}\right|.
\end{align}
\end{proof}
\begin{proof}[Proof of Proposition \ref{prop:card_cond_typ_set}]
            \label{proof:card_cond_typ_set}
                            \begin{align}
                                \left|T^{X^{n}|y^{n}}_{ \widetilde{t}, \delta}\right|&= \left|\bigcup_{\substack{t \in \mathcal{T}^{\mathcal{X}^{n}, \mathcal{Y}^{n}}\\ \|t-\widetilde{t}\|_{1} \leq \delta}}T^{X^{n} | y^{n}}_{t}\right| \\
                                &= \sum_{\substack{t \in \mathcal{T}^{\mathcal{X}^{n}, \mathcal{Y}^{n}}\\ \|t-\widetilde{t}\|_{1} \leq \delta}}\left|T^{X^{n} | y^{n}}_{t}\right|  \\
                                \sum_{\substack{t \in \mathcal{T}^{\mathcal{X}^{n}, \mathcal{Y}^{n}}\\ \|t-\widetilde{t}\|_{1} \leq \delta}} 2^{nH\left(X| Y\right)_{t}-|\mathcal{X}| \cdot |\mathcal{Y}|\log\left(n+1\right)}   \leq \left|T^{X^{n}|y^{n}}_{ \widetilde{t}, \delta} \right| &\leq  \sum_{\substack{t \in \mathcal{T}^{\mathcal{X}^{n}, \mathcal{Y}^{n}}\\ \|t-\widetilde{t}\|_{1} \leq \delta}}2^{nH\left(X| Y\right)_{t}+|\mathcal{X}|\cdot|\mathcal{Y}|\log\left(n+1\right)}  \\
                                \sum_{\substack{t \in \mathcal{T}^{\mathcal{X}^{n}, \mathcal{Y}^{n}}\\ \|t-\widetilde{t}\|_{1} \leq \delta}} 2^{n(H\left(X| Y\right)_{\widetilde{t}}-\gamma\left( \left|\mathcal{X}\right|, \delta \right))-|\mathcal{X}| \cdot |\mathcal{Y}|\log\left(n+1\right)}    \leq \left|T^{X^{n}|y^{n}}_{ \widetilde{t}, \delta} \right| &\leq  \sum_{\substack{t \in \mathcal{T}^{\mathcal{X}^{n}, \mathcal{Y}^{n}}\\ \|t-\widetilde{t}\|_{1} \leq \delta}}2^{n\left(H(X|Y)_{\widetilde{t}}+\gamma\left( \left|\mathcal{X}\right|, \delta \right)\right)+|\mathcal{X}|\cdot|\mathcal{Y}|\log\left(n+1\right)}   \\
                                2^{n(H\left(X| Y\right)_{\widetilde{t}}-\gamma\left( \left|\mathcal{X}\right|, \delta \right))-|\mathcal{X}| \cdot |\mathcal{Y}|\log\left(n+1\right)} \leq \left|T^{X^{n}|y^{n}}_{ \widetilde{t}, \delta}\right| &\leq  \left|\mathcal{T}^{\mathcal{X}^{n}, \mathcal{Y}^{n}}\right|2^{n\left(H(X|Y)_{\widetilde{t}}+\gamma\left( \left|\mathcal{X}\right|, \delta \right)\right)+|\mathcal{X}|\cdot|\mathcal{Y}|\log\left(n+1\right)}  \\
                                &\leq  (n+1)^{\left|\mathcal{X}\right|\cdot \left|\mathcal{Y}\right|}2^{n\left(H(X|Y)_{\widetilde{t}}+\gamma\left( \left|\mathcal{X}\right|, \delta \right)\right)+|\mathcal{X}|\cdot|\mathcal{Y}|\log\left(n+1\right)} \\
                                &=  2^{n\left(H(X|Y)_{\widetilde{t}}+\gamma\left( \left|\mathcal{X}\right|, \delta \right)\right)+2\left|\mathcal{X}\right|\cdot|\mathcal{Y}|\log\left(n+1\right)}.
                            \end{align}
                        \end{proof}
            \begin{proof}[Proof of Proposition~\ref{prop:unit_prob}]
            \label{proof:unit_prob}
    \begin{align}
        p^{n}(m^{n}|x^{n}, y^{n}) &= \prod_{i=1}^{n} p(m_{i}|x_{i}, y_{i})\\
        &= \prod_{(m, x, y) \in \mathcal{M}\times \mathcal{X} \times \mathcal{Y}} p(m|x, y)^{nt_{m^{n}, x^{n}, y^{n}}(m ,x, y)}\\
        &= \prod_{(m, x, y) \in \mathcal{M}\times \mathcal{X} \times \mathcal{Y}} p(m|x, y)^{t_{m^{n}|x^{n}, y^{n}}(m|x, y)\cdot t_{x^{n}, y^{n}}(x, y)}\\
        &=  \prod_{(m, x, y) \in \mathcal{M}\times \mathcal{X} \times \mathcal{Y}} \left(2^{nt_{m^{n}| x^{n}, y^{n}}(m|x, y)\log_{2}p(m|x, y)}\right)^{t_{x^{n}, y^{n}}(x, y)}\\
        &=  \prod_{(m, x, y) \in \mathcal{M}\times \mathcal{X} \times \mathcal{Y}} \left(2^{n\left(t_{m^{n}| x^{n}, y^{n}}(m|x, y)\log_{2}\left(p(m|x, y)\right)\pm t_{m^{n}| x^{n}, y^{n}}(m|x, y)\log_{2}\left(t_{m^{n}| x^{n}, y^{n}}(m|x, y)\right)\right)}\right)^{t_{x^{n}, y^{n}}(x, y)}\\
         &= 2^{-nH(M|X, Y)_{t}}\prod_{(x, y)\in \mathcal{X}\times \mathcal{Y}} \left(2^{-n\left(D\left(t_{m^{n}|x^{n}, y^{n}}||p\right)\right)}\right)^{t_{x^{n}, y^{n}}(x, y)}\\
         &= 2^{-nH(M|X, Y)_{t}} 2^{-n\sum_{(x, y)\in \mathcal{X}\times\mathcal{Y}} D\left(t_{m^{n}|x^{n}, y^{n}}||p\right)t_{x^{n}, y^{n}}(x, y)}\\
         &= 2^{-nH(M|X, Y)_{t}} 2^{-nD\left(t_{m^{n}|x^{n}, y^{n}}||p\right)\cancelto{1}{\sum_{(x, y)\in \mathcal{X}\times \mathcal{Y}}t_{x^{n}, y^{n}}(x, y)}}
         \end{align}
         \begin{align}
         p^{n}\left(T^{M^{n}|x^{n}, y^{n}}_{t}\right)&= \sum_{m^{n} \in T^{M^{n}|x^{n}, y^{n}}_{t}}2^{-nH(M|X, Y)_{t}} 2^{-nD\left(t_{m^{n}|x^{n}, y^{n}}||p\right)} \\
         &=\left|T^{M^{n}|x^{n}, y^{n}}_{t} \right|2^{-n\left(H(M|X, Y)_{t}\right)} 2^{-nD\left(t_{m^{n}|x^{n}, y^{n}}||p\right)}\\
         &\leq 2^{n\cancel{H\left(M|X, Y\right)_{t}}+|\mathcal{X}|\cdot|\mathcal{Y}|\log\left(n+1\right)}2^{-n\left(\cancel{H(M|X, Y)_{t}}\right)} 2^{-nD\left(t_{m^{n}|x^{n}, y^{n}}||p\right)}\\
         &= 2^{|\mathcal{X}|\cdot|\mathcal{Y}|\log\left(n+1\right)} 2^{-nD\left(t_{m^{n}|x^{n}, y^{n}}||p\right)}
         \end{align}
         \begin{align}
        p^{n}\left(T^{M^{n}|x^{n}, y^{n}}_{\widetilde{t}, (p,\delta')}\right) &=p^{n} \left(\bigcup_{\substack{t\in \mathcal{T}^{\mathcal{M}^{n}, \mathcal{X}^{n}, \mathcal{Y}^{n}}\\ \|t -p\cdot\widetilde{t}\|_{1} \leq \delta'}}T^{M^{n} |x^{n}, y^{n}}_{t}\right)\\
        &=\sum_{\substack{t\in \mathcal{T}^{\mathcal{M}^{n}, \mathcal{X}^{n}, \mathcal{Y}^{n}}\\ \|t -p\cdot\widetilde{t}\|_{1} \leq \delta'}}p^{n} \left(T^{M^{n} |x^{n}, y^{n}}_{t}\right)\\
        &=1-\sum_{\substack{t\in \mathcal{T}^{\mathcal{M}^{n}, \mathcal{X}^{n}, \mathcal{Y}^{n}}\\ \|t -p\cdot\widetilde{t}\|_{1} > \delta'}}p^{n} \left(T^{M^{n} |x^{n}, y^{n}}_{t}\right)\\
        &\geq 1-\sum_{\substack{t\in \mathcal{T}^{\mathcal{M}^{n}, \mathcal{X}^{n}, \mathcal{Y}^{n}}\\ \|t -p\cdot\widetilde{t}\|_{1} >\delta'}}2^{|\mathcal{X}|\cdot|\mathcal{Y}|\log\left(n+1\right)} 2^{-nD\left(t_{m^{n}|x^{n}, y^{n}}||p\right)}\\
        \intertext{Using Pinsker's inequality} &\geq 1-\sum_{\substack{t\in \mathcal{T}^{\mathcal{M}^{n}, \mathcal{X}^{n}, \mathcal{Y}^{n}}\\ \|t -p\cdot\widetilde{t}\|_{1} > \delta'}}2^{|\mathcal{X}|\cdot|\mathcal{Y}|\log\left(n+1\right)} 2^{-\frac{n\delta'^{2}}{2\ln 2}}\\
        &\geq 1-(n+1)^{2|\mathcal{M}|\cdot|\mathcal{X}|\cdot|\mathcal{Y}|}2^{-\frac{n\delta'^{2}}{2\ln 2}}\\
        &\geq 1-2^{-\frac{n\delta'^{2}}{2\ln 2}+2|\mathcal{M}|\cdot|\mathcal{X}|\cdot|\mathcal{Y}|\log(n+1)}.
    \end{align}
\end{proof}
            \begin{proof}[Proof of Proposition \ref{prop:card_cond_bs_set}]
            \label{proof:card_cond_bs_set}
        Consider the set defined as
        \begin{align}
        \label{eqn:temp_set}
            T^{M^{n}|y^{n}}_{\widetilde{t}, (p, \left|\mathcal{X}\right|\cdot\delta')} &= \bigcup_{\substack{t \in \mathcal{T}^{\mathcal{M}^{n}, \mathcal{Y}^{n}}\\
            \|t-\sum_{x\in\mathcal{X}}p\cdot \widetilde{t}\|_{1} \leq \left|\mathcal{X}\right|\cdot\delta'}} T^{M^{n}|y^{n}}_{t},\\
            \intertext{where $T^{M^{n}|y^{n}}_{t}$ is defined as}
            T^{M^{n}|y^{n}}_{t} &= \left\{m^{n}:(m^{n}, y^{n}) \in T^{M^{n}, Y^{n}}_{t}\right\}.
            \end{align}
            We will first demonstrate that $\left|T^{M^{n}|y^{n}\backslash X^{n}}_{(\widetilde{t}, \delta), (p,  \delta')}\right| \leq \left|T^{M^{n}|y^{n}}_{\widetilde{t}, (p, \left|\mathcal{X}\right|\cdot\delta')}\right|$ by showing that every $m^{n} \in T^{M^{n}|y^{n}\backslash X^{n}}_{(\widetilde{t}, \delta), (p,  \delta')}$ is also in $T^{M^{n}|y^{n}}_{\widetilde{t}, (p, \left|\mathcal{X}\right|\cdot\delta')}$. If $m^{n} \in T^{M^{n}|y^{n}\backslash X^{n}}_{(\widetilde{t}, \delta), (p,  \delta')}$, then there exists an $x^{n} \in \mathcal{X}^{n}$ such that $(m^{n}, x^{n}, y^{n}) \in T^{M^{n}, X^{n}, Y^{n}}_{(\widetilde{t}, \delta), (p, \delta')}$ which implies that
            \begin{align}
            \left\|t_{m^{n}, x^{n}, y^{n}} - p_{M|X}\cdot\widetilde{t}\right\|_{1} &\leq \delta'\\
            \sum_{x\in\mathcal{X}}\left\|t_{m^{n}, x^{n}, y^{n}} - p_{M|X}\cdot\widetilde{t}\right\|_{1} &\leq \left|\mathcal{X}\right|\cdot\delta'\\
            \left\|t_{m^{n}, y^{n}} - \sum_{x\in\mathcal{X}}p_{M|X}\cdot\widetilde{t}\right\|_{1} &\leq \left|\mathcal{X}\right|\cdot\delta'.
            \end{align}
            There is therefore a type, $t_{m^{n}, y^{n}}$ in particular, that satisfies the criteria allowing $T^{M^{n}|y^{n}}_{t_{m^{n}, y^{n}}}$ to be included in the union of (\ref{eqn:temp_set}), and therefore $m^{n} \in T^{M^{n}|y^{n}}_{\widetilde{t}, (p, \left|\mathcal{X}\right|\cdot\delta')}$. To bound $\left|T^{M^{n}|y^{n}}_{\widetilde{t}, (p, \left|\mathcal{X}\right|\cdot\delta')}\right|$,
            \begin{align}
             \left|T^{M^{n}|y^{n}}_{\widetilde{t}, (p, \left|\mathcal{X}\right|\cdot\delta')}\right| &\leq \sum_{\substack{t \in \mathcal{T}^{\mathcal{M}^{n}, \mathcal{Y}^{n}}\\
            \|t-\sum_{x\in\mathcal{X}}p_{M|X}\cdot \widetilde{t}\|_{1} \leq \left|\mathcal{X}\right|\cdot\delta'}} \left|T^{M^{n}|y^{n}}_{t}\right|\\
            &\leq \sum_{\substack{t \in \mathcal{T}^{\mathcal{M}^{n}, \mathcal{Y}^{n}}\\
            \|t-\sum_{x\in\mathcal{X}}p_{M|X}\cdot \widetilde{t}\|_{1} \leq \left|\mathcal{X}\right|\cdot\delta'}} 2^{nH\left(M| Y\right)_{t}+|\mathcal{M}|\cdot |\mathcal{Y}|\log\left(n+1\right)}\\
            &\leq (n+1)^{|\mathcal{M}|\cdot|\mathcal{Y}|}2^{n\left(H\left(M| Y\right)_{\sum_{x\in\mathcal{X}}p_{M|X}\cdot \widetilde{t}}+\gamma(|\mathcal{M}|, \left|\mathcal{X}\right|\cdot\delta')\right)+|\mathcal{M}|\cdot |\mathcal{Y}|\log\left(n+1\right)}\\
            &=2^{n\left(H\left(M| Y\right)_{\sum_{x\in\mathcal{X}}p_{M|X}\cdot \widetilde{t}}+\gamma(|\mathcal{M}|, \left|\mathcal{X}\right|\cdot\delta')\right)+2|\mathcal{M}|\cdot |\mathcal{Y}|\log\left(n+1\right)}.
        \end{align}
        \end{proof}

                    \begin{proof}[Proof of Proposition~\ref{prop:merge_set}]
            \label{proof:merge_set}
    For $(m^{n}, x^{n}, y^{n}) \in T^{M^{n}, X^{n}, Y^{n}}_{(\widetilde{t}, \delta), (p, \delta')}$, it must hold that
    \begin{align}
       \left\|t_{x^{n}, y^{n}} -\widetilde{t}\right\|_{1} &\leq \delta\\
    \intertext{It must also hold that $m^{n} \in T^{M^{n}|x^{n}, y^{n}}_{t_{m^{n}, x^{n}, y^{n}}, (p, \delta')}$, which implies that}
        \left\|t_{m^{n}, x^{n}, y^{n}}-p\cdot t_{x^{n}, y^{n}}\right\|_{1} &\leq \delta'
        \end{align}
   We would like to satisfy the necessary condition for $(m^{n}, x^{n}, y^{n})$ to be in $T^{M^{n}, X^{n}, Y^{n}}_{(p \cdot \widetilde{t}, \delta'+\delta)}$. Using the triangle inequality, we have that
   \begin{align}
    \left\|t_{m^{n}, x^{n}, y^{n}}-p\cdot\widetilde{t}\right\|_{1}&=\left\|t_{m^{n}, x^{n}, y^{n}}-p\cdot\widetilde{t}\pm p\cdot t_{x^{n}, y^{n}}\right\|_{1}\\
         &\leq \left\|t_{m^{n}, x^{n}, y^{n}}-p\cdot t_{x^{n}, y^{n}}\right\|_{1}+\left\|p\cdot t_{x^{n}, y^{n}}-p\cdot \widetilde{t}\right\|_{1}\\
        &\leq \delta' +\sum_{(m, x, y) \in \mathcal{M}\times\mathcal{X}\times\mathcal{Y}}\left|p(m|x)\cdot t_{x^{n}, y^{n}}(x, y)-p(m|x)\cdot \widetilde{t}(x, y) \right|\\
        &=\delta' +\sum_{(m, x, y) \in \mathcal{M}\times\mathcal{X}\times\mathcal{Y}}\left|p(m|x)\left(t_{x^{n}, y^{n}}(x, y)-\widetilde{t}(x, y)\right) \right|.\\
        \intertext{Since $\forall (m, x) \in \mathcal{M}\times\mathcal{X}$ it must hold that $p(m|x)>0$,}
        &= \delta' +\sum_{(m, x, y) \in \mathcal{M}\times\mathcal{X}\times\mathcal{Y}}p(m|x)\left|t_{x^{n}, y^{n}}(x, y)-\widetilde{t}(x, y) \right|\\
        &= \delta' +\sum_{(x, y) \in \mathcal{X}\times\mathcal{Y}}\left|t_{x^{n}, y^{n}}(x, y)-\widetilde{t}(x, y) \right|\sum_{m \in \mathcal{M}}p(m|x)\\
        &\leq \delta' +\sum_{(x, y) \in \mathcal{X}\times\mathcal{Y}}\left|t_{x^{n}, y^{n}}(x, y)-\widetilde{t}(x, y) \right|\\
        &\leq \delta' +\delta.
        \end{align}
    Since this holds for every $(m^{n}, x^{n}, y^{n}) \in T^{M^{n}, X^{n}, Y^{n}}_{(\widetilde{t}, \delta), (p, \delta')} $, we have that
    \begin{equation}
        T^{M^{n}, X^{n}, Y^{n}}_{(\widetilde{t}, \delta), (p, \delta')} \subseteq T^{M^{n}, X^{n}, Y^{n}}_{(p \cdot \widetilde{t}, \delta'+\delta)}
    \end{equation}
\end{proof}
            \begin{proof}[Proof of Claim~\ref{clm:derandom_sw_2}]
                    \label{proof:derandom_sw_2}
                        Let us pick $s$ independent random strings to use for the unbounded randomness protocol. Let $\delta''>0$, then for any input $(x^{n}, y^{n})$, the probability that a $\frac{3\cdot2^{-n\delta'}}{4}+\delta''$ fraction of the $s$ strings lead to $\hat{x}^{n} \neq x^{n}$ is at most $2^{-\Omega(s\delta''^{2})}$ by the Chernoff bound. Letting
                    \begin{equation}
                        s = O\left(\frac{n\log\left(|\mathcal{X}|\cdot|\mathcal{Y}| \right)}{\delta''^{2}} \right),
                    \end{equation}
                    we find that this probability is smaller than $\left(|\mathcal{X}|^{n}\cdot|\mathcal{Y}|^{n}\right)^{-1}$, so by the union bound, the probability that more than a $\frac{3\cdot2^{-n\delta'}}{4}+\delta''$ fraction of the $s$ strings gives the wrong answer for any input $(x^{n}, y^{n}) \in \mathcal{X}^{n} \times \mathcal{Y}^{n}$ is less than $1$. 

                    Our bounded randomness protocol is then as follows. Sample one of these $s$ strings at random, and then run the corresponding unbounded randomness protocols. Let $E_{good}$ be the event that the sampled string leads to $\hat{x}^{n} = x^{n}$ for input $(x^{n}, y^{n})$, then 
                    \begin{equation}
                        \Pr[E_{good}] \geq 1-\frac{3\cdot2^{-n\delta'}}{4}-\delta''.
                    \end{equation}
                    Picking $\delta'' = \frac{2^{-n\delta'}}{4}$ leads to $\Pr[E_{good}] \geq 1 - 2^{-n\delta'}$, and the number of bits of shared randomness 
                    \begin{equation}
                        \log(s) = \log(n) +\log\log\left(|\mathcal{X}|\cdot|\mathcal{Y}|\right)+2n\delta'+O(1)
                    \end{equation}
                     as desired.
                    \end{proof}
            \begin{proof}[Proof of Theorem~\ref{thm:rst_1}]
            \label{proof:rst_1}
                An error will occur in the protocol if the message $m^{n}$, generated by Alice through her local use of the channel, is not in $T^{M^{n}|x^{n}\backslash Y^{n}}_{(\widetilde{t}, \delta), (p,  \delta')}$ $(E_{1})$, if there is no message in the subset $\mathcal{L}_{r}$ that has the correct joint type with Alice's input $(E_{2})$, in which case Alice will have no messages to choose from, if the message $m^{n}_{r}$ chosen by Alice from the subset $\mathcal{L}_{r}$ is not in $T^{M^{n}|y^{n}\backslash X^{n}}_{(\widetilde{t}, \delta), (p,  \delta')}$ $(E_{3})$, or if $\left|\mathcal{L}_{r, c} \cap T^{M^{n}|y^{n}\backslash X^{n}}_{(\widetilde{t}, \delta), (p,  \delta')}\right| > 1$ $(E_{4})$, in which case Bob will not be able to determine which message Alice chose.
                
                For $\Pr\left[E_{1}\right]$ and $\Pr\left[E_{3}\right]$, let $F$ be the event that $m^{n} \in T^{M^{n}|x^{n}, y^{n}}_{t, (p,\delta')}$. Proposition~\ref{prop:unit_prob} gives the following bound,
                \begin{equation}
                    \Pr\left[F\right] \geq 1-2^{-n\delta'''}
                \end{equation}
                where $\delta''' = \left(\frac{\delta'^{2}}{2\ln 2}-\frac{2}{n}|\mathcal{M}|\cdot|\mathcal{X}|\cdot|\mathcal{Y}|\log(n+1)\right)$. If $m^{n} \in T^{M^{n}|x^{n}, y^{n}}_{t, (p,\delta')}$, then since $\left\|t_{x^{n}, y^{n}}-\widetilde{t}\right\|_{1} \leq \delta$, the tuple $(m^{n}, x^{n}, y^{n})$ must be in the jointly typical set $T^{M^{n}, X^{n}, Y^{n}}_{(\widetilde{t}, \delta), (p, \delta')}$. It then follows that there exists a sequence in $\mathcal{Y}^{n}$, specifically Bob's sequence $y^{n}$, such that $m^{n} \in T^{M^{n}|x^{n}\backslash Y^{n}}_{(\widetilde{t}, \delta), (p,  \delta')}$. Since $m^{n}_{r}$ and $m^{n}$ have the same joint type with $x^{n}$, it also holds that the tuple $(m^{n}_{r}, x^{n}, y^{n})$ is in the jointly typical set $T^{M^{n}, X^{n}, Y^{n}}_{(\widetilde{t}, \delta), (p, \delta')}$. It therefore follows that $m^{n}_{r} \in T^{M^{n}|y^{n}\backslash X^{n}}_{(\widetilde{t}, \delta), (p,  \delta')}$. We then have that $\Pr[\lnot E_{1}] \geq \Pr[F]$ and $\Pr[\lnot E_{3}] \geq \Pr[F]$, and therefore
                \begin{align}
                    \Pr[E_{1}] &=1-\Pr[\lnot E_{1}]\\
                    &\leq 2^{-n\delta'''},
                \end{align}
                and similarly
                \begin{equation}
                    \Pr[E_{3}] \leq 2^{-n\delta'''}.
                \end{equation}
                For $\Pr\left[E_{2}\right]$, when the list $\mathcal{L}_{t}$ is reduced to the messages whose position in the list correspond to the $nR$ bits specified by the shared randomness, the number of messages in the list is reduced by a factor of $\left(\frac{1}{2} \right)^{nR}$ to obtain the new list $\mathcal{L}_{r}$, so $\left|\mathcal{L}_{r}\right| = \frac{\left|\mathcal{L}_{t}\right|}{2^{nR}} = \frac{\left|T^{M^{n}}_{t}\right|}{2^{nR}}$. Using the notation of Lemma~\ref{lem:set_inequality} (Hoeffding's inequality), let 
                \begin{equation}
                   U = T^{M^{n}|x^{n}}_{t}, u = \left|T^{M^{n}|x^{n}}_{t}\right|
                \end{equation} 
                \begin{equation}
                    V = \mathcal{L}_{r}, v = \frac{\left|\mathcal{L}_{t} \right|}{2^{nR}}= \frac{\left|T^{M^{n}}_{t} \right|}{2^{nR}}
                \end{equation}
                \begin{equation}
                    Q = \mathcal{L}_{t} = T^{M^{n}}_{t}, q = \left|T^{M^{n}}_{t}\right|
                \end{equation} 
                and 
                \begin{equation}
                    \mu = \frac{\left|T^{M^{n}|x^{n}}_{t}\right| \cdot \frac{\left|T^{M^{n}}_{t} \right|}{2^{nR}}}{\left|T^{M^{n}}_{t}\right|} = \frac{\left|T^{M^{n}|x^{n}}_{t}\right|}{2^{nR}}.
                \end{equation}
                For $\delta''<\left(1-\frac{1}{\mu}\right)$ we have $(1-\delta'')\mu>1$, then
                \begin{align}
                    \Pr\left[\left| \mathcal{L}_{r} \cap T^{M^{n}|x^{n}}_{t} \right| \leq (1-\delta'')\mu\right] &= \Pr\left[\left| \mathcal{L}_{r} \cap T^{M^{n}|x^{n}}_{t} \right| < 1\right]+\Pr\left[1\leq\left| \mathcal{L}_{r} \cap T^{M^{n}|x^{n}}_{t} \right| \leq (1-\delta'')\mu\right]\\
                    &\geq \Pr\left[\left| \mathcal{L}_{r} \cap T^{M^{n}|x^{n}}_{t} \right| < 1\right]
                \end{align}
                
                Applying Lemma~\ref{lem:set_inequality} gives
                \begin{align}
                   \Pr\left[ \left| \mathcal{L}_{r} \cap T^{M^{n}|x^{n}}_{t} \right|  < 1\right] &\leq e^{-\frac{\frac{\left|T^{M^{n}|x^{n}}_{t}\right|}{2^{nR}} \delta''^{2}}{2}}\\
                   &= 2^{-\frac{\left|T^{M^{n}|x^{n}}_{t}\right|}{2^{nR+1}} \delta''^{2}\cdot \log(e)}\\
                    &\leq 2^{-\frac{2^{nH\left(M| X\right)_{t}-|\mathcal{M}| \times |\mathcal{X}|\log\left(n+1\right)+\log\log(e)}}{2^{nR+1}}\delta''^{2}}.
                \end{align}
                This gives the following upper bound on $R$
                \begin{equation}
                    R \leq H\left(M| X\right)_{t}-\frac{1}{n}|\mathcal{M}| \times |\mathcal{X}|\log\left(n+1\right)+\frac{1}{n}\log\log(e)-\frac{1}{n}-\frac{1}{n}\log(n)
                \end{equation}
                in which case
                \begin{equation}
                   \Pr\left[ \left| \mathcal{L}_{r} \cap T^{M^{n}|x^{n}}_{t} \right|  < 1\right] \leq 2^{-n\delta''^{2}}.
                \end{equation}
                Conditioning on $\lnot E_{1}$, it holds that
                \begin{align}
                    \left\|t_{m^{n}, x^{n}, y^{n}}-p_{M|X}\cdot t_{x^{n}, y^{n}}\right\|_{1} &\leq \delta'\\
                    \sum_{y \in \mathcal{Y}}\left\|t_{m^{n}, x^{n}, y^{n}}-p_{M|X}\cdot t_{x^{n}, y^{n}}\right\|_{1} &\leq \left |\mathcal{Y}\right|\delta'\\
                    \left\|t_{m^{n}, x^{n}}-p_{M|X}\cdot t_{x^{n}}\right\|_{1} &\leq \left |\mathcal{Y}\right|\delta'.
                \end{align}
                Applying Lemma~\ref{lem:cont_bounds} (Alicki-Fannes inequality), Alice and Bob can choose $R$ to be
                \begin{equation}
                    R = H\left(M| X\right)_{t\cdot p}-\gamma(\left|\mathcal{M}\right|,  \left |\mathcal{Y}\right|\cdot\delta')-\frac{1}{n}|\mathcal{M}| \times |\mathcal{X}|\log\left(n+1\right)+\frac{1}{n}\log\log(e)-\frac{1}{n}-\frac{1}{n}\log(n).
                \end{equation}
                Since we are guaranteed that $(x^{n}, y^{n}) \in T^{X^{n}, Y^{n}}_{\widetilde{t}, \delta}$, it holds that
                \begin{equation}
                    \left\|t_{x^{n},y^{n}}-\widetilde{t}\right\|_{1} \leq \delta,
                \end{equation}
                which implies that
                \begin{align}
                    \left\|p_{M|X}\cdot t_{x^{n}, y^{n}}-p_{M|X}\cdot\widetilde{t}\right\|_{1} &= \sum_{(m, x, y)\in\mathcal{M}\times\mathcal{X}\times\mathcal{Y}} \left| p_{M|X}(m|x)\cdot t_{x^{n}, y^{n}}(x, y)-p_{M|X}(m|x)\cdot\widetilde{t}(x, y)\right|\\
                    &= \sum_{(m, x, y)\in\mathcal{M}\times\mathcal{X}\times\mathcal{Y}} p_{M|X}(m|x)\left| t_{x^{n}, y^{n}}(x, y)-\widetilde{t}(x, y)\right|\\
                    &= \sum_{(x, y)\in\mathcal{X}\times\mathcal{Y}}\left| t_{x^{n}, y^{n}}(x, y)-\widetilde{t}(x, y)\right|\\
                    &\leq \delta.
                \end{align}
                Further, using the property that $H(M|X) = H(M|X, Y)$ when calculated for the joint distribution $p_{M|X}\cdot \widetilde{t}$, we obtain that Alice and Bob can ultimately choose
                \begin{align}
                \label{eqn:r_final}
                    R &= H\left(M|X, Y\right)_{\widetilde{t}\cdot p}-\gamma(\left|\mathcal{M}\right|, \delta)-\gamma(\left|\mathcal{M}\right|,  \left |\mathcal{Y}\right|\cdot\delta')-\frac{1}{n}|\mathcal{M}| \cdot |\mathcal{X}|\log\left(n+1\right)+\frac{1}{n}\log\log(e)-\frac{1}{n}-\frac{1}{n}\log(n)\\
                    &\leq H\left(M| X, Y\right)_{t\cdot p}+\frac{1}{n}\log\log(e).
                \end{align}
                To determine $ \Pr\left[E_{4}\right]$, we use the fact that the positions in the list $\mathcal{L}$ are assigned uniformly at random
                \begin{align}
                    \Pr\left[E_{4}\right] &= \Pr\left[\exists \widetilde{m^{n}} \neq m^{n}_{r}\in T^{M^{n}|y^{n}\backslash X^{n}}_{(\widetilde{t}, \delta), (p,  \delta')}: \widetilde{m^{n}} \in \mathcal{L}_{r, c}\right]
                    \\&\leq \sum_{\widetilde{m^{n}} \neq m^{n}_{r} \in T^{M^{n}|y^{n}\backslash X^{n}}_{(\widetilde{t}, \delta), (p,  \delta')}}\Pr\left[\widetilde{m^{n}} \in \mathcal{L}_{r, c}\right]
                    \\&= \left|T^{M^{n}|y^{n}\backslash X^{n}}_{(\widetilde{t}, \delta), (p,  \delta')}\right| \frac{1}{2^{n(R+C)}}.
                \end{align}
                Using Property~\ref{prop:card_cond_bs_set} we obtain the following bound
                \begin{equation}
                    C+R > H\left(M| Y\right)_{\widetilde{t}\cdot p}+\gamma(|\mathcal{M}|, \left|\mathcal{X}\right|\cdot\delta')+\frac{2}{n}|\mathcal{M}|\cdot |\mathcal{Y}|\log\left(n+1\right)+\delta.
                \end{equation}
                Alice and Bob can choose $R$ as in (\ref{eqn:r_final}) and $C$ as 
                \begin{align}
                    C &= I\left(M;X|Y\right)_{\widetilde{t}\cdot p}+\gamma(|\mathcal{M}|, \left|\mathcal{X}\right|\cdot\delta')+\frac{2}{n}|\mathcal{M}|\cdot |\mathcal{Y}|\log\left(n+1\right)+\delta+\gamma(\left|\mathcal{M}\right|, \delta)+\gamma(\left|\mathcal{M}\right|, |\mathcal{Y}|\cdot\delta')\\
                    &+\frac{1}{n}|\mathcal{M}| \cdot |\mathcal{X}|\log\left(n+1\right)-\frac{1}{n}\log\log(e)+\frac{1}{n}+\frac{1}{n}\log(n)\\
                    \intertext{and applying the Alicki-Fannes inequality to define the mutual information in terms of the true joint type $t_{x^{n}, y^{n}}$ gives the upperbound}
                    &\leq I\left(M;X|Y\right)_{t\cdot p}+\gamma(|\mathcal{M}|, \left|\mathcal{X}\right|\cdot\delta')+3\gamma(\left|\mathcal{M}\right|, \delta)+\gamma(\left|\mathcal{M}\right|, |\mathcal{Y}|\cdot\delta')
                    +\frac{3}{n}|\mathcal{M}| \cdot|\mathcal{X}|\cdot |\mathcal{Y}|\log\left(n+1\right)+\delta\\&+\frac{1}{n}+\frac{1}{n}\log(n).
                    \end{align}
                The resulting probability of error is then
                \begin{equation}
                     \Pr\left[E_{4}\right] \leq 2^{-n\delta}.
                \end{equation}
                
                For any input $(x^{n}, y^{n})$, the statistical distance between the output distribution of the channel and that of the simulation is
                \begin{align}
                    &\sum_{\widetilde{m^{n}} \in \mathcal{M}^{n}}\left|p^{n} \left(\widetilde{m^{n}}|x^{n}, y^{n} \right) - \widetilde{p^{n}} \left(\widetilde{m^{n}}|x^{n}, y^{n} \right)\right|\\
                    &= \sum_{\widetilde{m^{n}} \not\in T^{M^{n}|x^{n}, y^{n}}_{t,(p, \delta')}}\left|p^{n} \left(\widetilde{m^{n}}|x^{n} \right) - \widetilde{p^{n}} \left(\widetilde{m^{n}}|x^{n}, y^{n} \right)\right| + \sum_{\widetilde{m^{n}} \in T^{M^{n}|x^{n}, y^{n}}_{t,(p, \delta')}}\left|p^{n} \left(\widetilde{m^{n}}|x^{n} \right) - \widetilde{p^{n}} \left(\widetilde{m^{n}}|x^{n}, y^{n} \right)\right|.
                \end{align}
                For the first sum, a message not in $T^{M^{n}|x^{n}, y^{n}}_{t,(p, \delta')}$ will result in an error. To prove this, first note that in this case $T^{M^{n}|x^{n}, y^{n}}_{t,(p, \delta')}=T^{M^{n}|x^{n}}_{t,(p, \delta')}$. If the message generated by Alice through her local use of the channel $m^{n}$ is not in $T^{M^{n}|x^{n}}_{t,(p, \delta')}$, then a tuple containing $m^{n}$ and $x^{n}$ will not appear in $T^{M^{n}, X^{n}, Y^{n}}_{(\widetilde{t}, \delta), (p, \delta')}$, and subsequently $m^{n}$ will not appear in the set $T^{M^{n}|x^{n}\backslash Y^{n}}_{(\widetilde{t}, \delta), (p,  \delta')}$ which will result in an error in the protocol. In this case, we have $\widetilde{p^{n}} \left(\widetilde{m^{n}}|x^{n} \right) = 0$. By Proposition~\ref{prop:unit_prob}, the first sum is bound by
                \begin{align}
                    \sum_{\widetilde{m^{n}} \not\in T^{M^{n}|x^{n}, y^{n}}_{t,(p, \delta')}}\left|p^{n} \left(\widetilde{m^{n}}|x^{n} \right)\right| &= 1- \sum_{\widetilde{m^{n}} \in T^{M^{n}|x^{n}, y^{n}}_{t,(p, \delta')}}p^{n} \left(\widetilde{m^{n}}|x^{n} \right)\\
                    &\leq 2^{-n\delta'''}.
                \end{align}
                Recall
                \begin{equation}
                    T^{M^{n}|x^{n}, y^{n}}_{t,(p, \delta')}= \bigcup_{\substack{\overline{t} \in \mathcal{T}^{\mathcal{M}^{n}, \mathcal{X}^{n}, \mathcal{Y}^{n}}\\ \|\overline{t} -p\cdot t\|_{1} \leq \delta'}}T^{M^{n} |x^{n}, y^{n}}_{\overline{t}}.
                \end{equation}
                The second sum can be decomposed as
                \begin{align}
                    &= \sum_{\substack{\overline{t} \in \mathcal{T}^{\mathcal{M}^{n}, \mathcal{X}^{n}}\\ \|\overline{t} -p\cdot t\|_{1} \leq \delta'}}\sum_{r=1}^{2^{nR}}\sum_{\widetilde{m^{n}} \in T_{\overline{t}}^{M^{n} |x^{n}} \cap \mathcal{L}_{r}}\left|p^{n} \left(\widetilde{m^{n}}|x^{n} \right) - \widetilde{p^{n}} \left(\widetilde{m^{n}}|x^{n}, y^{n} \right)\right|\\
                    &= \sum_{\substack{\overline{t} \in \mathcal{T}^{\mathcal{M}^{n}, \mathcal{X}^{n}}\\ \|\overline{t} -p\cdot t\|_{1} \leq \delta'}} \Pr\left[t\left(m^{n}, x^{n}\right) = \overline{t}\right]  \sum_{r=1}^{2^{nR}}\sum_{\widetilde{m^{n}} \in T_{\overline{t}}^{M^{n} |x^{n}} \cap \mathcal{L}_{r}}\left|\frac{1}{\left| T_{\overline{t}}^{M^{n} |x^{n}} \right|} - \frac{1}{2^{nR}} \cdot\frac{1}{\left|T_{\overline{t}}^{M^{n} |x^{n}} \cap \mathcal{L}_{r} \right|}\right|\\
                    &= \sum_{\substack{\overline{t} \in \mathcal{T}^{\mathcal{M}^{n}, \mathcal{X}^{n}}\\ \|\overline{t} -p\cdot t\|_{1} \leq \delta'}} \Pr\left[t\left(m^{n}, x^{n}\right) = \overline{t}\right]  \sum_{r=1}^{2^{nR}}\left|T_{\overline{t}}^{M^{n} |x^{n}} \cap \mathcal{L}_{r} \right|\left|\frac{1}{\left| T_{\overline{t}}^{M^{n} |x^{n}} \right|} - \frac{1}{2^{nR}} \cdot\frac{1}{\left|T_{\overline{t}}^{M^{n} |x^{n}} \cap \mathcal{L}_{r} \right|}\right|\\
                    &= \sum_{\substack{\overline{t} \in \mathcal{T}^{\mathcal{M}^{n}, \mathcal{X}^{n}}\\ \|\overline{t} -p\cdot t\|_{1} \leq \delta'}} \Pr\left[t\left(m^{n}, x^{n}\right) = \overline{t}\right] \sum_{r=1}^{2^{nR}}\frac{1}{\left| T_{\overline{t}}^{M^{n} |x^{n}} \right|}\left|\left|T_{\overline{t}}^{M^{n} |x^{n}} \cap \mathcal{L}_{r} \right| - \frac{\left| T_{\overline{t}}^{M^{n} |x^{n}} \right|}{2^{nR}}\right|.
                \intertext{We can use Lemma~\ref{lem:set_inequality} (Hoeffding's inequality) to show that}
                    &\Pr\left [ \left|\left|T^{M^{n}|x^{n}}_{\overline{t}} \cap \mathcal{L}_{r} \right| - \frac{\left| T^{M^{n}|x^{n}}_{\overline{t}} \right|}{2^{nR}}\right| < \epsilon \frac{\left| T^{M^{n}|x^{n}}_{\overline{t}} \right|}{2^{nR}} \right ] \geq 1 - 2e^{-\frac{\frac{\left| T^{M^{n}|x^{n}}_{\overline{t}} \right|}{2^{nR}}\epsilon^{2}}{2}}.
                \end{align}
                By choosing $R$ as in (\ref{eqn:r_final}), the probability is at least $1-2^{-n\epsilon^{2}}$, and with this we can upper bound the sum by $\epsilon$.
                The total variation distance of the simulation is therefore bounded by
                \begin{align} \label{eqn:rst_tot_var_dist}
                    \left \|p^{n} \left(\mathbf{m^{n}}|x^{n} \right) -  \widetilde{p^{n}} \left(\mathbf{m^{n}}|x^{n} \right) \right\|_{1} \leq 2^{-n\delta'''} + \delta''.
                \end{align}
                By the union bound, the probability of success of the protocol is at least
                \begin{align}
                     &1 - 2\cdot2^{-n\delta'''}-2^{-n\delta''^{2}}-2^{-n\delta}\\
                     &\geq 1 - 2^{-n\delta_{\textnormal{min}}^{2}+2}.
                \end{align}
            \end{proof}   
            \begin{proof}[Proof of Claim~\ref{clm:derandom_rst}]
            \label{proof:derandom_rst}
                    Let us pick $s$ independent random strings to use for the unbounded randomness protocol, where $n(H\left(M|X, Y\right)_{\widetilde{t}\cdot p}+\frac{1}{n}\log\log(e))$ bits of true shared randomness will still be used. Let $\overline{\delta}>0$, then for any input $(x^{n}, y^{n})$, the probability that a $2^{-n\delta_{\textnormal{min}}^{2}+2}+\overline{\delta}$ fraction of the $s$ strings lead to an error in the protocol is at most $2^{-\Omega(s\overline{\delta}^{2})}$ by the Chernoff bound. Letting
                    \begin{equation}
                        s = O\left(\frac{n\log\left(|\mathcal{X}|\cdot|\mathcal{Y}| \right)}{\overline{\delta}^{2}} \right),
                    \end{equation}
                    we find that this probability is smaller than $\left(|\mathcal{X}|^{n}\cdot|\mathcal{Y}|^{n}\right)^{-1}$, so by the union bound, the probability that more than a $2^{-n\delta_{\textnormal{min}}^{2}+2}+\overline{\delta}$ fraction of the $s$ strings gives the wrong answer for any input $(x^{n}, y^{n}) \in \mathcal{X}^{n} \times \mathcal{Y}^{n}$ is less than $1$. 

                    Our bounded randomness protocol is then as follows. Sample one of these $s$ strings at random, and then run the corresponding unbounded randomness protocol. Let $E_{good}$ be the event that the sampled string does not lead to an error in the protocol on input $(x^{n}, y^{n})$, then 
                    \begin{align}
                        \Pr[E_{good}] &\geq 1-2^{-n\cdot\delta_{\textnormal{min}}^{2}+2}-\overline{\delta}\\
                        &= 1-\frac{2^{-n\cdot\delta_{\textnormal{min}}^{2}+3}}{2}-\overline{\delta}
                    \end{align}
                    Picking $\overline{\delta} = \frac{2^{-n\cdot\delta_{\textnormal{min}}^{2}+3}}{4}$ leads to $\Pr[E_{good}] \geq 1 - 2^{-n\cdot\delta_{\textnormal{min}}^{2}+3}$.
                \end{proof}

                \begin{proof}[Proof of Claim~\ref{clm:derandom_rst_2}]
                \label{proof:derandom_rst_2}
                        Let us pick $s$ independent random strings to use for the unbounded randomness protocol, where $n(H\left(M|X, Y\right)_{t\cdot p}+\frac{1}{n}\log\log(e))$ bits of true shared randomness will still be used. Let $\overline{\delta}>0$, then for any input $(x^{n}, y^{n})$, the probability that a $5\cdot2^{-n\cdot\delta_{\textnormal{min}}^{2}}+\overline{\delta}$ fraction of the $s$ strings lead to an error in the protocol is at most $2^{-\Omega(s\overline{\delta}^{2})}$ by the Chernoff bound. Letting
                    \begin{equation}
                        s = O\left(\frac{n\log\left(|\mathcal{X}|\cdot|\mathcal{Y}| \right)}{\overline{\delta}^{2}} \right),
                    \end{equation}
                    we find that this probability is smaller than $\left(|\mathcal{X}|^{n}\cdot|\mathcal{Y}|^{n}\right)^{-1}$, so by the union bound, the probability that more than a $5\cdot2^{-n\cdot\delta_{\textnormal{min}}^{2}}+\overline{\delta}$ fraction of the $s$ strings gives the wrong answer for any input $(x^{n}, y^{n}) \in \mathcal{X}^{n} \times \mathcal{Y}^{n}$ is less than $1$. 

                    Our bounded randomness protocol is then as follows. Sample one of these $s$ strings at random, and then run the corresponding unbounded randomness protocols. Let $E_{good}$ be the event that the sampled string leads to no error in the protocol, then 
                    \begin{align}
                        \Pr[E_{good}] &\geq 1-5\cdot2^{-n\cdot\delta_{\textnormal{min}}^{2}}-\overline{\delta}\\
                        &= 1-\frac{3}{4}\cdot\frac{4}{3}5\cdot2^{-n\cdot\delta_{\textnormal{min}}^{2}}-\overline{\delta}\\
                        &= 1-\frac{3}{4}\cdot2^{-n\cdot\delta_{\textnormal{min}}^{2}+\log_{2}\left(\frac{20}{3} \right)}-\overline{\delta}\\
                        &\geq 1-\frac{3}{4}\cdot2^{-n\cdot\delta_{\textnormal{min}}^{2}+3}-\overline{\delta}.
                    \end{align}
                    Picking $\overline{\delta} = \frac{2^{-n\cdot\delta_{\textnormal{min}}^{2}+3}}{4}$ leads to $\Pr[E_{good}] \geq 1 - 2^{-n\cdot\delta_{\textnormal{min}}^{2}+3}$.
                    \end{proof}
            \begin{proof}[Proof of Theorem~\ref{thm:int_2}]
            \label{proof:int_2} 
            The communication cost for the $(i-1)^{th}$ iteration of step $2$, which generates the $i^{th}$ message of the protocol (assume $i$ is odd) is bound by
            \begin{equation}
                n\left(I\left(M;X|Y\right)_{p_{i}\cdot p_{i-1}\cdot...\cdot p_{1}\cdot t}+\eta_{2}(n, \delta_{\textnormal{max}, i})\right)
            \end{equation}
            where $\delta_{\textnormal{max}, i} = \max\left(\delta+(i-1)\delta', \delta', \delta'' \right)$. Note that the parameter corresponding to $\delta+(i-1)\delta'$ is the only one that needs to be updated from one round to the next. The parameters $\delta', \delta'', \delta_{s}$ and consequently $\delta'''$ remain unchanged.

            The total communication cost of simulating all $j$ rounds is obtained by summing the cost over all $j$ rounds. Note that line (\ref{eqn:xtra_line}) is equal to $0$.
            \begin{align}
                C &\leq I(M_{1};X|Y)_{p_{1}\cdot t_{x^{n}, y^{n}}}\\
                &+\sum_{i = 3, 5,...} I(M_{i};X|YM_{1}...M_{i-1})_{p_{i} \cdot p_{i-1}\cdot ...\cdot p_{1} \cdot t_{x^{n}, y^{n}}}
                +\sum_{i = 2, 4, 6,...} I(M_{i};Y|XM_{1}...M_{i-1})_{p_{i} \cdot p_{i-1}\cdot ...\cdot p_{1} \cdot t_{x^{n}, y^{n}}} \\\label{eqn:xtra_line}
                &+\sum_{i=1, 2,..., j} I(M_{i};X|YM_{1}...M_{i-1})_{p_{i} \cdot p_{i-1}\cdot ...\cdot p_{1} \cdot t_{x^{n}, y^{n}}}+ \sum_{i = 3, 5,...}I(M_{i};Y|XM_{1}...M_{i-1})_{p_{i} \cdot p_{i-1}\cdot ...\cdot p_{1} \cdot t_{x^{n}, y^{n}}}\\
                &+\eta_{3}(n, \delta_{\textnormal{max}, j}, j)+k\log\left(\left|\mathcal{X}\right|\cdot\left|\mathcal{Y}\right|\right)\\
                &\leq I(M_{1}...M_{j};X|Y)_{p_{i} \cdot p_{i-1}\cdot ...\cdot p_{1} \cdot t_{x^{n}, y^{n}}}+I(M_{1}...M_{j};Y|X)_{p_{i} \cdot p_{i-1}\cdot ...\cdot p_{1} \cdot t_{x^{n}, y^{n}}}+\eta_{3}(n, \delta_{\textnormal{max}}, j)\\
                &+k\log\left(\left|\mathcal{X}\right|\cdot\left|\mathcal{Y}\right|\right)
            \end{align}
            where
            \begin{align}
                \eta_{3}(n, \delta_{\textnormal{max}, j}, j) = \sum_{i=1, 2,..., j} 5\gamma(|\mathcal{M}_{i}|, \left|\mathcal{X}\right|\cdot\left|\mathcal{Y}\right|\cdot\delta_{\textnormal{max}, i})+\frac{4}{n}|\mathcal{M}_{\leq i}| \cdot|\mathcal{X}|\cdot |\mathcal{Y}|\log\left(n+1\right)+\frac{2}{n}\log(n)\\+3\delta_{\textnormal{max}, i}+O\left(\frac{1}{n}\right)\\
                \leq 5j\gamma(|\mathcal{M}_{max}|, \left|\mathcal{X}\right|\cdot\left|\mathcal{Y}\right|\cdot\delta_{\textnormal{max}, j})+\frac{4j}{n}|\mathcal{M}_{\leq j}| \cdot|\mathcal{X}|\cdot |\mathcal{Y}|\log\left(n+1\right)+\frac{2j}{n}\log(n)\\+3j\delta_{\textnormal{max}, j}+O\left(\frac{j}{n}\right)
                \end{align}
                where for every $i \leq j$, it holds that
                \begin{equation}
                    \delta_{\textnormal{max}, j}\geq \delta_{\textnormal{max}, i}
                \end{equation}
                and we have defined
                \begin{equation}
                |\mathcal{M}_{max}| = \max_{i=1, 2,...,j} |\mathcal{M}_{i}|.
                \end{equation}

            For all $j$ rounds, we can show that the total variation distance between the true and simulated channels is bound by 
            \begin{align}
                \big\|\widetilde{p^{n}}&(m_{1}^{n}, ..., m_{j}^{n}|x^{n}, y^{n})-p^{n}(m_{1}^{n}, ..., m_{j}^{n}|x^{n}, y^{n})\big\|_{1}\\
                &=\big\|\widetilde{p_{j}^{n}}(m_{j}^{n}|x^{n}, y^{n}, m_{<j}^{n})\cdot\widetilde{p_{<j}^{n}}(m_{<j}^{n}|x^{n}, y^{n}) -p_{j}^{n}(m_{j}^{n}|x^{n}, y^{n}, m_{<j}^{n})\cdot p_{<j}^{n}(m_{<j}^{n}|x^{n}, y^{n})
                \\
                &\pm \widetilde{p_{j}^{n}}(m_{j}^{n}|x^{n}, y^{n}, m_{<j}^{n})\cdot p_{<j}^{n}(m_{<j}^{n}|x^{n}, y^{n})\big\|_{1}\\
                &= \big\|\widetilde{p_{j}^{n}}(m_{j}^{n}|x^{n}, y^{n}, m_{<j}^{n})\Big(\widetilde{p_{<j}^{n}}(m_{<j}^{n}|x^{n}, y^{n})- p_{<j}^{n}(m_{<j}^{n}|x^{n}, y^{n})\Big)\\
                &+\left(\widetilde{p_{j}^{n}}(m_{j}^{n}|x^{n}, y^{n}, m_{<j}^{n})-p_{j}^{n}(m_{j}^{n}|x^{n}, y^{n}, m_{<j}^{n}) \right)p_{<j}^{n}(m_{<j}^{n}|x^{n}, y^{n})\big\|_{1}\\
                &\leq \left\|\widetilde{p_{<j}^{n}}(m_{<j}^{n}|x^{n}, y^{n})- p_{<j}^{n}(m_{<j}^{n}|x^{n}, y^{n})\right\|_{1}+\mathbb{E}_{m^{n}_{<j}}\left\|\widetilde{p_{j}^{n}}(m_{j}^{n}|x^{n}, y^{n}, m_{<j}^{n})-p_{j}^{n}(m_{j}^{n}|x^{n}, y^{n}, m_{<j}^{n}) \right\|_{1}\\
               &\leq \left\|\widetilde{p_{1}^{n}}(m_{1}^{n}|x^{n}, y^{n})-p_{1}^{n}(m_{1}^{n}|x^{n}, y^{n})\right\|_{1} + \mathbb{E}_{m^{n}_{1}}\left \|\widetilde{p_{2}^{n}}(m_{1}^{n}|x^{n}, y^{n}, m^{n}_{1})-p_{2}^{n}(m_{1}^{n}|x^{n}, y^{n}, m^{n}_{1})\right\|_{1}+...\\
               &...+\mathbb{E}_{m^{n}_{<j}}\left\|\widetilde{p_{j}^{n}}(m_{j}^{n}|x^{n}, y^{n}, m_{<j}^{n})-p_{j}^{n}(m_{j}^{n}|x^{n}, y^{n}, m_{<j}^{n})\right\|_{1}
            \end{align}
            \begin{equation} \label{eqn:rst_int_tot_var_dist}
                \leq j\left(2^{-n\delta'''} + \delta''\right).
            \end{equation}
            \end{proof}
\end{document}